\providecommand{\U}[1]{\protect\rule{.1in}{.1in}}
\newtheorem{theorem}{Theorem}
\newtheorem{conjecture}[theorem]{Conjecture}
\newtheorem{corollary}[theorem]{Corollary}
\newtheorem{definition}[theorem]{Definition}
\newtheorem{lemma}[theorem]{Lemma}
\newenvironment{proof}[1][Proof]{\noindent\textbf{#1.} }{\ \rule{0.5em}{0.5em}}
\begin{document}

\title{BQP and the Polynomial Hierarchy}
\author{Scott Aaronson\thanks{MIT. \ Email: aaronson@csail.mit.edu. \ Supported by an
NSF CAREER Award, a DARPA YFA grant, MIT CSAIL, and the Keck Foundation.}}
\date{}
\maketitle

\begin{abstract}
The relationship between\ $\mathsf{BQP}$\ and $\mathsf{PH}$ has been an open
problem since the earliest days of quantum computing. \ We present evidence
that quantum computers can solve problems outside the entire polynomial
hierarchy, by relating this question to topics in circuit complexity,
pseudorandomness, and Fourier analysis.

First, we show that there exists an oracle relation problem (i.e., a problem
with many valid outputs) that is solvable in $\mathsf{BQP}$, but not in
$\mathsf{PH}$. \ This also yields a non-oracle relation problem that is
solvable in quantum \textit{logarithmic} time, but not in $\mathsf{AC}^{0}$.

Second, we show that an oracle \textit{decision }problem separating
$\mathsf{BQP}$\ from $\mathsf{PH}$ would follow from the \textit{Generalized
Linial-Nisan Conjecture}, which we formulate here and which is likely of
independent interest. \ The original Linial-Nisan Conjecture\ (about
pseudorandomness against constant-depth circuits)\ was recently proved by
Braverman, after being open for twenty years.

\end{abstract}
\tableofcontents

\section{Introduction\label{INTRO}}

A central task of quantum computing theory is to understand how $\mathsf{BQP}%
$---meaning Bounded-Error Quantum Polynomial-Time, the class of all problems
feasible for a quantum computer---fits in with classical complexity classes.
\ In their original 1993 paper defining $\mathsf{BQP}$, Bernstein and Vazirani
\cite{bv}\ showed that $\mathsf{BPP}\subseteq\mathsf{BQP}\subseteq
\mathsf{P}^{\mathsf{\#P}}$.\footnote{The upper bound was later improved to
$\mathsf{BQP}\subseteq\mathsf{PP}$\ by Adleman, DeMarrais, and Huang
\cite{adh}.} \ Informally, this says that quantum computers are at least as
fast as classical probabilistic computers and no more than exponentially
faster (indeed, they can be simulated using an oracle for counting).
\ Bernstein and Vazirani also gave evidence that $\mathsf{BPP}\neq
\mathsf{BQP}$, by exhibiting an oracle problem called \textsc{Recursive
Fourier Sampling} that requires $n^{\Omega\left(  \log n\right)  }$ queries on
a classical computer but only $n$ queries on a quantum computer.\footnote{For
more about \textsc{Recursive Fourier Sampling} see Aaronson \cite{aar:rfs}.}
\ The evidence for the power of quantum computers became dramatically stronger
a year later, when Shor \cite{shor} (building on work of Simon \cite{simon})
showed that \textsc{Factoring} and \textsc{Discrete Logarithm} are in
$\mathsf{BQP}$. \ On the other hand, Bennett et al.\ \cite{bbbv}\ gave oracle
evidence that $\mathsf{NP}\not \subset \mathsf{BQP}$,\ and while no one
regards such evidence as decisive, today it seems extremely unlikely that
quantum computers can solve $\mathsf{NP}$-complete problems in polynomial
time. \ A vast body of research, continuing to the present, has sought to map
out the detailed boundary between those\ $\mathsf{NP}$\ problems that are
feasible for quantum computers and those that are not.

However, there is a complementary question that---despite being universally
recognized as one of the \textquotedblleft grand challenges\textquotedblright%
\ of the field---has had essentially zero progress over the last sixteen years:

\begin{quotation}
\noindent\textit{Is }$\mathsf{BQP}$\textit{ in }$\mathsf{NP}$\textit{? \ More
generally, is }$\mathsf{BQP}$\textit{ contained anywhere in the polynomial
hierarchy}$\ \mathsf{PH}=\mathsf{NP}\cup\mathsf{NP}^{\mathsf{NP}}%
\cup\mathsf{NP}^{\mathsf{NP}^{\mathsf{NP}}}\cup\cdots$\textit{?}
\end{quotation}

The \textquotedblleft default\textquotedblright\ conjecture is presumably
$\mathsf{BQP}\not \subset \mathsf{PH}$, since no one knows what a simulation
of $\mathsf{BQP}$\ in $\mathsf{PH}$\ would look like. \ Before this work,
however, there was no formal evidence for or against that conjecture. \ Almost
all the problems for which we have quantum algorithms---including
\textsc{Factoring} and \textsc{Discrete Logarithm}---are easily seen to be in
$\mathsf{NP}\cap\mathsf{coNP}$.\footnote{Here we exclude $\mathsf{BQP}%
$-complete problems such as approximating the Jones polynomial \cite{ajl},
which, by the very fact of being $\mathsf{BQP}$-complete, seem hard to
interpret as \textquotedblleft evidence\textquotedblright\ for $\mathsf{BQP}%
\not \subset \mathsf{PH}$.} \ One notable exception is \textsc{Recursive
Fourier Sampling}, the problem that Bernstein and Vazirani \cite{bv}%
\ originally used to construct an oracle $A$ relative to which $\mathsf{BPP}%
^{A}\neq\mathsf{BQP}^{A}$. \ One can show, without too much difficulty, that
\textsc{Recursive Fourier Sampling}\ yields oracles $A$ relative to which
$\mathsf{BQP}^{A}\not \subset \mathsf{NP}^{A}$ and indeed $\mathsf{BQP}%
^{A}\not \subset \mathsf{MA}^{A}$. \ However, while it is reasonable to
conjecture that \textsc{Recursive Fourier Sampling}\ (as an oracle problem) is
not in $\mathsf{PH}$, it is open even to show that this problem (or any other
$\mathsf{BQP}$\ oracle problem) is not in $\mathsf{AM}$! \ Recall that
$\mathsf{AM}=\mathsf{NP}$\ under plausible derandomization assumptions
\cite{kvm}. \ Thus, until we solve the problem of constructing an oracle $A$
such that $\mathsf{BQP}^{A}\not \subset \mathsf{AM}^{A}$, we cannot even claim
to have oracle evidence (which is itself, of course, a weak form of evidence)
that $\mathsf{BQP}\not \subset \mathsf{NP}$.

Before going further, we should clarify that there are two questions here:
whether $\mathsf{BQP}\subseteq\mathsf{PH}$ and whether $\mathsf{P{}%
romiseBQP}\subseteq\mathsf{P{}romisePH}$. \ In the unrelativized world, it is
entirely possible that quantum computers can solve promise problems outside
the polynomial hierarchy, but that all \textit{languages} in $\mathsf{BQP}%
$\ are\ nevertheless in $\mathsf{PH}$. \ However, for the specific purpose of
constructing an oracle $A$ such that $\mathsf{BQP}^{A}\not \subset
\mathsf{PH}^{A}$, the two questions are equivalent, basically because one can
always \textquotedblleft offload\textquotedblright\ a promise into the
construction of the oracle $A$.\footnote{Here is a simple proof: let
$\Pi=\left(  \Pi_{\operatorname*{YES}},\Pi_{\operatorname*{NO}}\right)  $\ be
a promise problem in $\mathsf{P{}romiseBQP}^{A}\setminus\mathsf{P{}%
romisePH}^{A}$, for some oracle $A$. \ Then clearly, every $\mathsf{P{}%
romisePH}^{A}$\ machine $M$ fails to solve $\Pi$ on infinitely many inputs $x$
in $\Pi_{\operatorname*{YES}}\cup\Pi_{\operatorname*{NO}}$. \ This means that
we can produce an infinite sequence of inputs $x_{1},x_{2},\ldots$ in
$\Pi_{\operatorname*{YES}}\cup\Pi_{\operatorname*{NO}}$, whose lengths
$n_{1},n_{2},\ldots$ are spaced arbitrarily far apart, such that every
$\mathsf{P{}romisePH}^{A}$\ machine $M$\ fails to solve $\Pi$ on at least one
$x_{i}$. \ Now let $B$\ be an oracle that is identical to $A$, except that for
each input length $n$, it reveals (i) whether $n=n_{i}$ for some $i$ and (ii)
if so, what the corresponding $x_{i}$\ is. \ Also, let $L$\ be the unary
language that contains $0^{n}$\ if and only if (i) $n=n_{i}$\ for some $i$ and
(ii) $x_{i}\in\Pi_{\operatorname*{YES}}$. \ Then $L$\ is in $\mathsf{BQP}^{B}%
$\ but not $\mathsf{PH}^{B}$.}

\subsection{Motivation\label{MOTIVATION}}

There are at least four reasons why the $\mathsf{BQP}$ versus $\mathsf{PH}%
$\ question is so interesting. \ At a basic level, it is both theoretically
and practically important to understand what classical resources are needed to
simulate quantum physics. \ For example, when a quantum system evolves to a
given state, is there always a short classical proof that it does so? \ Can
one estimate quantum amplitudes using approximate counting (which would imply
$\mathsf{BQP}\subseteq\mathsf{BPP}^{\mathsf{NP}}$)? \ If something like this
were true, then while the exponential speedup of Shor's factoring algorithm
might stand, quantum computing would nevertheless seem much less different
from classical computing than previously thought.

Second, if $\mathsf{BQP}\not \subset \mathsf{PH}$, then many possibilities for
new quantum algorithms might open up to us. \ One often hears the complaint
that there are too few quantum algorithms, or that progress on quantum
algorithms has slowed since the mid-1990s. \ In our opinion, the real issue
here has nothing to do with quantum computing, and is simply that there are
too few natural $\mathsf{NP}$-intermediate problems for which there plausibly
\textit{could be} quantum algorithms! \ In other words, instead of focussing
on \textsc{Graph Isomorphism}\ and a small number of other $\mathsf{NP}%
$-intermediate problems, it might be fruitful to look for quantum algorithms
solving completely different types of problems---problems that are not
necessarily even in $\mathsf{PH}$. \ In this paper, we will see a new example
of such a quantum algorithm, which solves a problem called \textsc{Fourier
Checking}.

Third, it is natural to ask whether the $\mathsf{P}\overset{?}{=}\mathsf{BQP}%
$\ question is related to that \textit{other} fundamental question of
complexity theory, $\mathsf{P}\overset{?}{=}\mathsf{NP}$. \ More concretely,
is it possible that quantum computers could provide exponential speedups even
if $\mathsf{P}=\mathsf{NP}$? \ If $\mathsf{BQP}\subseteq\mathsf{PH}$, then
certainly the answer to that question is no (since $\mathsf{P}=\mathsf{NP}%
\Longrightarrow\mathsf{P}=\mathsf{PH}$). \ Therefore, if we want evidence that
quantum computing could survive a collapse of $\mathsf{P}$\ and $\mathsf{NP}$,
we must also seek evidence that $\mathsf{BQP}\not \subset \mathsf{PH}$.

Fourth, a major challenge for quantum computing research is to \textit{get
better evidence that quantum computers cannot solve }$\mathsf{NP}%
$\textit{-complete problems in polynomial time}. \ As an example,\ could we
show that if $\mathsf{NP}\subseteq\mathsf{BQP}$, then the polynomial hierarchy
collapses? \ At first glance, this seems like a wild hope; certainly we have
no idea at present how to prove anything of the kind. \ However, notice that
if $\mathsf{BQP}\subseteq\mathsf{AM}$, then the desired implication would
follow immediately! \ For in that case,%
\begin{align*}
\mathsf{NP}\subseteq\mathsf{BQP} &  \Longrightarrow\mathsf{coNP}%
\subseteq\mathsf{BQP}\\
&  \Longrightarrow\mathsf{coNP}\subseteq\mathsf{AM}\\
&  \Longrightarrow\mathsf{PH}=\mathsf{\Sigma}_{\mathsf{2}}^{\mathsf{P}}%
\end{align*}
where the last implication was shown by Boppana, H\aa stad, and Zachos
\cite{bhz}. \ Similar remarks apply to the questions of whether $\mathsf{NP}%
\subseteq\mathsf{BQP}$\ would imply $\mathsf{PH}\subseteq\mathsf{BQP}$, and
whether the folklore result $\mathsf{NP}^{\mathsf{BPP}}\subseteq
\mathsf{BPP}^{\mathsf{NP}}$\ has the quantum analogue $\mathsf{NP}%
^{\mathsf{BQP}}\subseteq\mathsf{BQP}^{\mathsf{NP}}$. \ In each of these cases,
we find that understanding some other issue in quantum complexity theory
requires first coming to grips with whether $\mathsf{BQP}$\ is contained in
some level of the polynomial hierarchy.

\subsection{Our Results\label{RESULTS}}

This paper presents the first formal evidence for the possibility that
$\mathsf{BQP}\not \subset \mathsf{PH}$. \ Perhaps more importantly, it places
the relativized $\mathsf{BQP}$ versus $\mathsf{PH}$\ question at the frontier
of (classical) circuit lower bounds. The heart of the problem, we will find,
is to extend Braverman's spectacular recent proof \cite{braverman}\ of the
Linial-Nisan Conjecture, in ways that would reveal a great deal of information
about small-depth circuits independent of the implications for quantum computing.

We have two main contributions. \ First, we achieve an oracle separation
between $\mathsf{BQP}$\ and $\mathsf{PH}$ for the case of \textit{relation
problems}. \ A relation problem is simply a problem where the desired output
is an $n$-bit string (rather than a single bit), and any string from some
nonempty set $S$ is acceptable. \ Relation problems arise often in theoretical
computer science; one well-known example is finding a Nash equilibrium (shown
to be $\mathsf{PPAD}$-complete by Daskalakis et al.\ \cite{dgp}). \ Within
quantum computing, there is considerable precedent for studying relation
problems as a warmup to the harder case of decision problems. \ For example,
in 2004 Bar-Yossef, Jayram, and Kerenidis \cite{bjk} gave a relation problem
with quantum one-way communication complexity $O\left(  \log n\right)  $ and
randomized one-way communication complexity $\Omega\left(  \sqrt{n}\right)  $.
\ It took several more years for Gavinsky et al.\ \cite{gkkrw} to achieve the
same separation for decision problems, and the proof was much more
complicated. \ The same phenomenon has arisen many times in quantum
communication complexity
\cite{gavinsky:ci,gavinsky:se,gavinsky:pl,gkrw,gavinskypudlak}, though to our
knowledge, this is the first time it has arisen in quantum query complexity.

Formally, our result is as follows:

\begin{theorem}
\label{thm1}There exists an oracle $A$ relative to which $\mathsf{FBQP}%
^{A}\not \subset \mathsf{FBPP}^{\mathsf{PH}^{A}}$, where $\mathsf{FBQP}$\ and
$\mathsf{FBPP}$\ are the relation versions of $\mathsf{BQP}$\ and
$\mathsf{BPP}$\ respectively.\footnote{Confusingly, the $\mathsf{F}$\ stands
for \textquotedblleft function\textquotedblright; we are simply following the
standard naming convention for classes of relation problems ($\mathsf{FP}$,
$\mathsf{FNP}$, etc).}
\end{theorem}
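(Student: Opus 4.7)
The plan is to introduce a natural relation problem, \textsc{Fourier Sampling}, which concentrates the difficulty on the classical side. The oracle $A$ encodes, for each input length $n$, a function $f_n : \{0,1\}^n \to \{-1,+1\}$; on input $1^n$, a valid output is any $y \in \{0,1\}^n$ sampled from a distribution $D$ with $\|D - D_{f_n}\|_1 \le 1/10$, where $D_{f_n}(y) := \hat{f_n}(y)^2$ and $\hat{f_n}(y) = 2^{-n}\sum_x f_n(x)(-1)^{x\cdot y}$. The quantum upper bound is immediate and uses a single query: Hadamard $|0^n\rangle$, apply $f_n$ as a phase oracle, Hadamard again, and measure; the measurement outcome is distributed exactly as $D_{f_n}$, so the problem lies in $\mathsf{FBQP}^A$.

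For the classical lower bound, I would convert any $\mathsf{FBPP}^{\mathsf{PH}^A}$ machine into an equivalent randomized $\mathsf{AC}^0$ circuit $C$ of constant depth and quasipolynomial size with oracle gates for $f_n$, via the Furst--Saxe--Sipser translation of polynomial-time $\mathsf{PH}$ computations. Writing $D_f^C$ for the output distribution and $p_y^C(f) := \Pr[C(f)=y]$, the goal becomes: with high probability over uniformly random $f_n$, $\|D_{f_n}^C - D_{f_n}\|_1 > 1/10$ for every such $C$. The heart of the argument is a Fourier-analytic comparison. Each $p_y^C$ is a $[0,1]$-valued function on $\{-1,+1\}^{2^n}$, and by a Linial--Mansour--Nisan-style bound the $L_2$ Fourier mass $C$ can place at degree $2$ is sharply limited. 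The target $\hat{f}(y)^2 = 4^{-n}\sum_{x,x'} f(x)f(x')(-1)^{(x \oplus x')\cdot y}$ is by contrast a pure degree-$2$ polynomial whose $2^{2n}$ Fourier coefficients all have magnitude exactly $4^{-n}$, and whose patterns are mutually orthogonal as $y$ varies. Summing $\mathbb{E}_f(p_y^C(f) - \hat{f}(y)^2)^2$ over $y$ and applying Parseval should yield a constant lower bound on $\mathbb{E}_f \sum_y (p_y^C(f) - \hat{f}(y)^2)^2$; Cauchy--Schwarz (after truncating to the support of $D_{f_n}$) then converts this into a constant gap in expected total variation. A standard counting/diagonalization over the countably many $\mathsf{PH}^A$ machines selects a single oracle $A$ defeating all of them on infinitely many input lengths.

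The main obstacle is bridging $\mathsf{AC}^0$ Fourier concentration with a genuine \emph{sampling} lower bound. Classical $\mathsf{AC}^0$ lower bounds are overwhelmingly formulated for decision or function problems, whereas here the adversary may emit \emph{any} $y$ and is scored in total variation; worse, the target $\hat{f}^2$ is itself a degree-$2$ polynomial, so Linial--Mansour--Nisan cannot be invoked as a black box to rule out approximation. The essential structural fact to exploit is that $\hat{f}(y)^2$'s degree-$2$ mass is perfectly spread over exponentially many coefficients and, globally over $y$, these patterns orthogonally tile the entire degree-$2$ subspace---a configuration that a quasipolynomial-size constant-depth circuit has no room to reproduce uniformly in $y$. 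Making this intuition quantitative, in a way robust enough to survive both the circuit's randomness and a final union bound over machines, is where the real work will lie.
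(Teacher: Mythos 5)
Your high-level plan is right---encode truth tables in the oracle, use the Furst--Saxe--Sipser translation to turn an $\mathsf{FBPP}^{\mathsf{PH}}$ machine into a quasipolynomial-size constant-depth circuit, prove an $\mathsf{AC}^0$ lower bound, and diagonalize---but two of the central steps do not go through, and the paper's actual argument is substantively different where it matters.

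First, \textsc{Fourier Sampling} as you define it is not a relation problem. $\mathsf{FBPP}$ and $\mathsf{FBQP}$ are defined over relations $R\subseteq\{0,1\}^*\times\{0,1\}^*$: on input $x$ the algorithm must, with probability $1-o(1)$, output some $y$ with $(x,y)\in R$, and whether a given $y$ is acceptable must be a predicate of $(x,y)$ alone. Saying ``a valid output is any $y$ sampled from a distribution close to $D_{f_n}$'' makes validity depend on the algorithm, not on the output string, so the object does not live in the $\mathsf{FBQP}/\mathsf{FBPP}$ framework at all. The paper's \textsc{Fourier Fishing} is specifically engineered to fix this: the input describes $n$ functions $f_1,\ldots,f_n$, the output is an $n$-tuple $(z_1,\ldots,z_n)$, and the tuple is declared valid iff at least $75\%$ of the $z_i$ satisfy $\lvert\widehat{f}_i(z_i)\rvert\ge 1$ and at least $25\%$ satisfy $\lvert\widehat{f}_i(z_i)\rvert\ge 2$. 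That is a predicate of the output, and the $n$-fold structure is what lets the quantum side succeed with probability $1-1/\exp(n)$ by a Chernoff bound (Lemma \ref{ffinbqp}), which matters because these classes do not amplify.

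Second, the Fourier-analytic lower bound you sketch fails at the step you yourself flag. Linial--Mansour--Nisan bounds the \emph{high}-degree Fourier mass of $\mathsf{AC}^0$ functions; it imposes no constraint whatsoever on degree-$2$ mass, which is precisely where $\widehat{f}(y)^2$ lives (note also that $\widehat{f}(y)^2$ has a nonzero degree-$0$ part from the $x=x'$ terms, so it is not ``pure degree $2$''). So there is no LMN-plus-Parseval argument that rules out the circuit's $p_y^C$ matching the target spectrum; the ``orthogonal tiling'' intuition is, as you acknowledge, a gap rather than a lemma, and closing it by spectral means looks essentially as hard as the GLN Conjecture for the decision problem. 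The paper's lower bound (Theorem \ref{fflb}) sidesteps spectral analysis of the circuit's output distribution entirely. It reduces from $\varepsilon$\textsc{-Bias Detection} (Corollary \ref{ac0bias}), which in turn reduces from distinguishing Hamming weight $n/2$ from $n/2+1$ (Theorem \ref{majlb}), an unconditional $\mathsf{AC}^0$ lower bound. The reduction uses the ``secretly biased Fourier coefficient'' device of Lemmas \ref{alicebob} and \ref{closetounif}: one slightly biases $\widehat{f}(s)$ at a uniformly random hidden $s$ while keeping $f$ statistically indistinguishable from uniform (variation distance $O(1/\sqrt{N})$), and then shows that any circuit which reliably emits large Fourier coefficients must output $s$ with probability noticeably exceeding $1/N$. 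That excess, amplified over $N$ independent trials and thresholded by an approximate-majority circuit, is the distinguishing statistic. This is an information-theoretic and combinatorial argument, not a low-degree spectral one, and it is what actually supplies the missing lower bound your proposal needs.
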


Underlying Theorem \ref{thm1} is a new lower bound against $\mathsf{AC}^{0}%
$\ circuits (constant-depth circuits composed of AND, OR, and NOT gates).
\ The close connection between $\mathsf{AC}^{0}$\ and the polynomial hierarchy
that we exploit is not new. \ In the early 1980s, Furst-Saxe-Sipser \cite{fss}
and Yao \cite{yao:ph}\ noticed that, if we have a $\mathsf{PH}$\ machine
$M$\ that computes (say) the \textsc{Parity} of a $2^{n}$-bit oracle string,
then by simply reinterpreting the existential quantifiers of $M$ as OR gates
and the universal quantifiers as AND gates, we obtain an $\mathsf{AC}^{0}%
$\ circuit of size $2^{\operatorname*{poly}\left(  n\right)  }$ solving the
same problem. \ It follows that, if we can prove a $2^{\omega\left(
\operatorname*{polylog}n\right)  }$\ lower bound on the size of $\mathsf{AC}%
^{0}$\ circuits computing \textsc{Parity}, we can construct an oracle $A$
relative to which $\mathsf{\oplus P}^{A}\not \subset \mathsf{PH}^{A}$. \ The
idea is the same for constructing an $A$ relative to which $\mathcal{C}%
^{A}\not \subset \mathsf{PH}^{A}$, where $\mathcal{C}$\ is any complexity class.

Indeed, the relation between $\mathsf{PH}$\ and $\mathsf{AC}^{0}$\ is so
direct that we get the following as a more-or-less immediate counterpart to
Theorem \ref{thm1}:

\begin{theorem}
\label{bqlogac0}In the unrelativized world (with no oracle), there exists a
relation problem solvable in quantum logarithmic time but not in nonuniform
$\mathsf{AC}^{0}$.
\end{theorem}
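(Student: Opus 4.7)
The plan is to convert Theorem~\ref{thm1} to its unrelativized form by folding the oracle's truth table into the input. Let $R$ be the $\mathsf{FBQP}^A$ relation witnessing Theorem~\ref{thm1}, and assume without loss of generality that on inputs $x$ of length $n$ the quantum algorithm queries $A$ only on strings of length $n$, so the effective ``oracle'' is the string $A_n\in\{0,1\}^{2^n}$. I would define the unrelativized relation $R'$ on input $w=(x,A_n)\in\{0,1\}^N$, where $N=n+2^n$, by declaring $y$ to be a valid output of $R'(w)$ exactly when $y$ is a valid output of $R(x)$ under oracle $A_n$.

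For the upper bound I would observe that the $\mathsf{FBQP}^A$ algorithm for $R$ uses $\operatorname{poly}(n)$ quantum gates and queries. In the unrelativized setting with random-access input, each oracle query is realized by an $O(n)=O(\log N)$-time random read of the $A_n$ portion of $w$, so the whole simulation runs in $\operatorname{polylog}(N)$ quantum time, matching the required notion of quantum logarithmic time.

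For the lower bound I would suppose toward contradiction that $R'$ is computed by a nonuniform $\mathsf{AC}^{0}$ circuit family $\{C_N\}$ of size $\operatorname{poly}(N)=2^{O(n)}$ and constant depth $d$. By the Furst--Saxe--Sipser/Yao correspondence read in reverse---interpret each OR gate as an existential quantifier and each AND gate as a universal one, both ranging over gate indices of length $O(n)$, with the leaves being literal queries to $A_n$---this converts $\{C_N\}$ into a polynomial-time $\Sigma_{d}^{\mathsf{P}}$-type machine with oracle $A$ solving $R$, placing $R\in\mathsf{FPH}^{A}\subseteq\mathsf{FBPP}^{\mathsf{PH}^{A}}$ and contradicting Theorem~\ref{thm1}.

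The main obstacle will be the nonuniformity of the hypothetical $\mathsf{AC}^{0}$ family: the wiring description of $\{C_N\}$ has $2^{O(n)}$ bits, far more than the polynomial advice a $\mathsf{PH}^{A}$ machine could absorb. I would resolve this by recalling that the lower bound inside Theorem~\ref{thm1} is really an $\mathsf{AC}^{0}$ circuit lower bound---it rules out \emph{every} $\mathsf{AC}^{0}$ circuit of the appropriate size and depth, uniform or not---so no description of $\{C_N\}$ ever needs to be transported back to the oracle setting; the hypothetical nonuniform circuits for $R'$ directly violate the very bound on which Theorem~\ref{thm1} already rests.
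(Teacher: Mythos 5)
Your core strategy---fold the oracle into the input string and then observe that the lower bound underlying Theorem~\ref{thm1} is already a circuit lower bound, so there is no need to transport a nonuniform circuit family back into the oracle model---is exactly the route the paper takes. The paper proves the theorem by invoking Lemma~\ref{ffinbqp} for the upper bound and the $\mathsf{AC}^0$ lower bound Theorem~\ref{fflb} for the lower bound, without ever passing through the oracle separation Theorem~\ref{thm1} itself; your last paragraph correctly recognizes that this is what must be done, which also means your middle paragraph (OR/AND gates to $\Sigma_d^{\mathsf{P}}$ quantifiers) is a detour you end up discarding.

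There is, however, a genuine gap in the construction of $R'$: you define it on \emph{all} pairs $(x,A_n)$, with no promise. This fails on two counts. First, $R'$ is not total: for some choices of the $f_i$'s (for instance, all $f_i$ bent, so every $|\widehat{f}_i(z)|=1$) there is no output $z_1,\ldots,z_n$ meeting the $\geq 2$ threshold for $1/4$ of the indices, yet the definition of $\mathsf{FBQLOGTIME}$ (inherited from $\mathsf{FBQP}$) requires a valid output to exist for every input. Second, even where valid outputs exist, the quantum algorithm $\texttt{FF-ALG}$ only succeeds with probability $1-o(1)$ on \emph{most} tuples; on an adversarially chosen bad slice $A_n$ it can fail badly, so your $R'$ is not in $\mathsf{FBQLOGTIME}$ as stated. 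The paper resolves both problems by restricting the relation to tuples $\langle f_1,\ldots,f_n\rangle$ that are \emph{good} in the sense of Section~\ref{PROBLEMS}, which makes it a promise relation on which $\texttt{FF-ALG}$ provably succeeds (Lemma~\ref{ffinbqp}). You then also need the companion fact, Lemma~\ref{mostgood}, that a $1-1/\exp(n)$ fraction of random tuples are good: this is what lets you pass from a hypothetical $\mathsf{AC}^0$ family correct on every good tuple to a family correct with probability $\geq 0.99$ over uniformly random tuples, which is the hypothesis that Theorem~\ref{fflb} actually refutes. Without the promise and without that bridging lemma, the contradiction does not close.
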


The relation problem that we use to separate $\mathsf{BQP}$\ from
$\mathsf{PH}$, and $\mathsf{BQLOGTIME}$\ from $\mathsf{AC}^{0}$, is called
\textsc{Fourier Fishing}. \ The problem can be informally stated as follows.
\ We are given oracle access to $n$ Boolean functions $f_{1},\ldots
,f_{n}:\left\{  0,1\right\}  ^{n}\rightarrow\left\{  -1,1\right\}  $, which we
think of as chosen uniformly at random. \ The task is to output $n$ strings,
$z_{1},\ldots,z_{n}\in\left\{  0,1\right\}  ^{n}$, such that the corresponding
squared Fourier coefficients $\widehat{f}_{1}\left(  z_{1}\right)  ^{2}%
,\ldots,\widehat{f}_{n}\left(  z_{n}\right)  ^{2}$\ are \textquotedblleft
often much larger than average.\textquotedblright\ \ Notice that if $f_{i}$ is
a random Boolean function, then each of its Fourier coefficients $\widehat
{f}_{i}\left(  z\right)  $ follows a normal distribution---meaning that with
overwhelming probability, a constant fraction of the Fourier coefficients will
be a constant factor larger than the mean. \ Furthermore, it is
straightforward to create a quantum algorithm that samples each $z$ with
probability proportional to $\widehat{f}_{i}\left(  z\right)  ^{2}$, so that
larger Fourier coefficients are more likely to be sampled than smaller ones.

On the other hand, computing any \textit{specific} $\widehat{f}_{i}\left(
z\right)  $\ is easily seen to be equivalent to summing $2^{n}$\ bits. \ By
well-known lower bounds on the size of $\mathsf{AC}^{0}$\ circuits computing
the \textsc{Majority} function (see H\aa stad \cite{hastad:book}\ for
example), it follows that, for any fixed $z$, computing $\widehat{f}%
_{i}\left(  z\right)  $\ cannot be in $\mathsf{PH}$ as an oracle problem.
\ Unfortunately, this does not directly imply any separation between
$\mathsf{BQP}$\ and $\mathsf{PH}$, since the quantum algorithm does not
compute $\widehat{f}_{i}\left(  z\right)  $\ either: it just samples a $z$
with probability proportional to $\widehat{f}_{i}\left(  z\right)  ^{2}$.
\ However, we will show that, if there exists a $\mathsf{BPP}^{\mathsf{PH}}%
$\ machine $M$\ that even \textit{approximately} simulates the behavior of the
quantum algorithm, then one can solve \textsc{Majority}\ by means of a
\textit{nondeterministic} reduction---which uses approximate counting to
estimate $\Pr\left[  M\text{ outputs }z\right]  $, and adds a constant number
of layers to the $\mathsf{AC}^{0}$\ circuit. \ The central difficulty is that,
if $M$\ knew the specific $z$ for which we were interested in estimating
$\widehat{f}_{i}\left(  z\right)  $, then it could choose adversarially never
to output that $z$. \ To solve this, we will show that we can
\textquotedblleft smuggle\textquotedblright\ a \textsc{Majority}\ instance
into the estimation of a \textit{random} Fourier coefficient $\widehat{f}%
_{i}\left(  z\right)  $, in such a way that it is information-theoretically
impossible for $M$ to determine which $z$ we care about.

Our second contribution is to define and study a new black-box decision
problem, called \textsc{Fourier Checking}. \ Informally, in this problem we
are given oracle access to \textit{two} Boolean functions $f,g:\left\{
0,1\right\}  ^{n}\rightarrow\left\{  -1,1\right\}  $, and are promised that either

\begin{enumerate}
\item[(i)] $f$ and $g$ are both uniformly random, or

\item[(ii)] $f$ is uniformly random, while $g$ is extremely well correlated
with $f$'s Fourier transform over $\mathbb{Z}_{2}^{n}$ (which we call
\textquotedblleft forrelated\textquotedblright).
\end{enumerate}

\noindent The problem is to decide whether (i) or (ii) is the case.

It is not hard to show that \textsc{Fourier Checking} is in $\mathsf{BQP}$:
basically, one can prepare a uniform superposition over all $x\in\left\{
0,1\right\}  ^{n}$, then query $f$, apply a quantum Fourier transform, query
$g$, and check whether one has recovered something close to the uniform
superposition. \ On the other hand, being forrelated seems like an extremely
\textquotedblleft global\textquotedblright\ property of $f$ and $g$: one that
would not be apparent from querying \textit{any} small number of $f\left(
x\right)  $\ and $g\left(  y\right)  $\ values, regardless of the outcomes of
those queries. \ And thus, one might conjecture that \textsc{Fourier Checking}
(as an oracle problem) is not in $\mathsf{PH}$.

In this paper, we adduce strong evidence for that conjecture. \ Specifically,
we show that for every $k\leq2^{n/4}$, the forrelated distribution over
$\left\langle f,g\right\rangle $\ pairs is $O\left(  k^{2}/2^{n/2}\right)
$\textit{-almost} $k$\textit{-wise independent}. \ By this we mean that, if
one had $1/2$\ prior probability that $f$ and $g$ were uniformly random, and
$1/2$\ prior probability that $f$ and $g$ were forrelated, then even
conditioned on any $k$ values of $f$ and $g$, the posterior probability that
$f$ and $g$ were forrelated would still be%
\[
\frac{1}{2}\pm O\left(  \frac{k^{2}}{2^{n/2}}\right)  .
\]
We conjecture that this almost $k$-wise independence property is enough, by
itself, to imply that an oracle problem is not in $\mathsf{PH}$. \ We call
this the \textit{Generalized Linial-Nisan Conjecture}.

Without the $\pm O\left(  k^{2}/2^{n/2}\right)  $\ error term, our conjecture
would be equivalent\footnote{Up to unimportant variations in the parameters}
to a famous conjecture in circuit complexity made by Linial and\ Nisan
\cite{ln}\ in 1990. \ Their conjecture stated that \textit{polylogarithmic
independence fools }$\mathsf{AC}^{0}$: in other words, every probability
distribution over $N$-bit strings that is \textit{uniform} on every small
subset of bits, is indistinguishable from the truly uniform distribution by
$\mathsf{AC}^{0}$ circuits. \ When we began investigating this topic a year
ago, even the original Linial-Nisan Conjecture was still open. \ Since then,
Braverman \cite{braverman} (building on earlier work by Bazzi \cite{bazzi} and
Razborov \cite{razborov:bazzi}) has given a beautiful proof of that
conjecture. \ In other words, to construct an oracle relative to which
$\mathsf{BQP}\not \subset \mathsf{PH}$, it now suffices to generalize
Braverman's Theorem from $k$-wise independent distributions to almost $k$-wise
independent ones. \ We believe that this is by far the most promising approach
to the $\mathsf{BQP}$\ versus $\mathsf{PH}$\ problem.

Alas, generalizing Braverman's proof is much harder than one might have hoped.
\ To prove the original Linial-Nisan Conjecture, Braverman showed that every
$\mathsf{AC}^{0}$\ function\ $f:\left\{  0,1\right\}  ^{n}\rightarrow\left\{
0,1\right\}  $\ can be well-approximated, in the $L_{1}$-norm, by
\textit{low-degree sandwiching polynomials}: real polynomials $p_{\ell}%
,p_{u}:\mathbb{R}^{n}\rightarrow\mathbb{R}$, of degree $O\left(
\operatorname*{polylog}n\right)  $, such that $p_{\ell}\left(  x\right)  \leq
f\left(  x\right)  \leq p_{u}\left(  x\right)  $\ for all $x\in\left\{
0,1\right\}  ^{n}$. \ Since $p_{\ell}$\ and $p_{u}$ trivially have the same
expectation on any $k$-wise independent distribution that they have on the
uniform distribution, one can show that $f$ must have almost the same
expectation as well. \ To generalize Braverman's result from $k$-wise
independence to almost $k$-wise independence, we will show that it suffices to
construct low-degree sandwich polynomials that satisfy a certain additional
condition. \ This new condition (which we call \textquotedblleft
low-fat\textquotedblright) basically says that $p_{\ell}$\ and $p_{u}$\ must
be representable as linear combinations of \textit{terms} (that is, products
of $x_{i}$'s and $\left(  1-x_{i}\right)  $'s), in such a way that the sum of
the absolute values of the coefficients is bounded---thereby preventing
\textquotedblleft massive cancellations\textquotedblright\ between positive
and negative terms. \ Unfortunately, while we know two techniques for
approximating $\mathsf{AC}^{0}$\ functions by low-degree polynomials---that of
Linial-Mansour-Nisan \cite{lmn} and that of Razborov \cite{razborov:ac0}\ and
Smolensky \cite{smolensky}---neither technique provides anything like the
control over coefficients that we need. \ To construct low-fat sandwiching
polynomials, it seems necessary to reprove the LMN and Razborov-Smolensky
theorems in a more \textquotedblleft conservative,\textquotedblright\ less
\textquotedblleft profligate\textquotedblright\ way. \ And such an advance
seems likely to lead to breakthroughs in circuit complexity and computational
learning theory having nothing to do with quantum computing.

Let us mention two further applications of \textsc{Fourier Checking}:

\begin{enumerate}
\item[(1)] If the Generalized Linial-Nisan Conjecture holds, then just like
with \textsc{Fourier Fishing}, we can \textquotedblleft scale down by an
exponential,\textquotedblright\ to obtain a promise problem that is in
$\mathsf{BQLOGTIME}$\ but not in $\mathsf{AC}^{0}$.

\item[(2)] Without any assumptions, we can prove the new results that there
exist oracles relative to which $\mathsf{BQP}\not \subset \mathsf{BPP}%
_{\mathsf{path}}$ and $\mathsf{BQP}\not \subset \mathsf{SZK}$. \ We can also
reprove all previous oracle separations between $\mathsf{BQP}$\ and classical
complexity classes in a unified fashion.
\end{enumerate}

To summarize our conclusions:

\begin{theorem}
\label{thm2}Assuming the Generalized Linial-Nisan Conjecture, there exists an
oracle $A$ relative to which $\mathsf{BQP}^{A}\not \subset \mathsf{PH}^{A}$,
and there also exists a promise problem in $\mathsf{BQLOGTIME}\setminus
\mathsf{AC}^{0}$. \ Unconditionally, there exists an oracle $A$ relative to
which $\mathsf{BQP}^{A}\not \subset \mathsf{BPP}_{\mathsf{path}}^{A}$ and
$\mathsf{BQP}^{A}\not \subset \mathsf{SZK}^{A}$.
\end{theorem}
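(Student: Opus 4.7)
The plan is to center everything on the \textsc{Fourier Checking} oracle problem and to handle each clause of the theorem as a separate module built on the two core facts already flagged in the introduction: (a) \textsc{Fourier Checking} is in $\mathsf{BQP}$ via the ``forrelation'' algorithm, and (b) the forrelated distribution over $\langle f,g\rangle$ pairs is $O(k^{2}/2^{n/2})$-close, on every set of $k$ coordinates, to the uniform distribution, for all $k \leq 2^{n/4}$. For (a), the quantum algorithm prepares $2^{-n/2}\sum_{x}\lvert x\rangle$, applies the phase oracle for $f$, Hadamard-transforms, applies the phase oracle for $g$, Hadamard-transforms again, and measures; the probability of outcome $\lvert 0^{n}\rangle$ equals the square of $2^{-3n/2}\sum_{x,y}f(x)(-1)^{x\cdot y}g(y)$, which is $\Omega(1)$ in case (ii) and $O(2^{-n})$ in case (i), so constantly many repetitions suffice. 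For (b), I would expand the joint density of a forrelated pair as a small perturbation of the uniform density, marginalize over $k$ coordinates, and Taylor-expand; the quadratic error in $k$ arises from second-order cross terms in this expansion.

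For the conditional $\mathsf{BQP}^{A}\not\subset\mathsf{PH}^{A}$ separation I would combine (b) with the Generalized Linial-Nisan Conjecture. By the Furst--Saxe--Sipser/Yao correspondence, any $\mathsf{PH}^{A}$ machine on oracles of length $N=4^{n}$ is captured by an $\mathsf{AC}^{0}$ circuit of size $2^{\operatorname{polylog}N}$ and constant depth; the Generalized LN Conjecture then guarantees that such circuits have essentially the same acceptance probability on any $O(\operatorname{polylog}N/N^{1/4})$-almost $\operatorname{polylog}N$-wise independent distribution as on uniform, and the forrelated distribution easily satisfies this. Standard diagonalization-with-padding across all $\mathsf{PH}^{A}$ machines, thickening the oracle at widely spaced input lengths to embed fresh \textsc{Fourier Checking} instances, produces $A$. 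The promise problem in $\mathsf{BQLOGTIME}\setminus\mathsf{AC}^{0}$ is an immediate scale-down: view the $4^{n}$-bit oracle table as the input to an $\mathsf{AC}^{0}$ circuit on $N$ bits; the quantum algorithm then runs in $O(\log N)$ time while the same $\mathsf{AC}^{0}$ lower bound applies verbatim.

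For the two unconditional separations I would bypass the Generalized LN Conjecture entirely and exploit the narrower structure of the target classes. A $\mathsf{BPP}_{\mathsf{path}}^{A}$ machine's acceptance ratio is a quotient of expectations of low-degree polynomials in the oracle bits; since the forrelated distribution matches all such moments up to polylog degree with error $O(2^{-n/2}\operatorname{polylog})$, a direct moment calculation defeats every polynomial-time $\mathsf{BPP}_{\mathsf{path}}^{A}$ machine and yields the oracle by diagonalization. For $\mathsf{SZK}^{A}$, I would use the known fact that every $\mathsf{SZK}$ problem reduces (via the polarization lemma of Sahai--Vadhan) to distinguishing two efficiently samplable distributions, and then observe that any polynomial-query distinguisher applied to the forrelated ensemble sees only $\operatorname{poly}(n)$ oracle values, on which almost $k$-wise independence directly forces statistical distance $\operatorname{poly}(n)^{2}/2^{n/2}$. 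Again diagonalization-with-padding completes the construction.

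The main obstacle is plainly the Generalized Linial-Nisan Conjecture itself; the rest of the pipeline is modular and routine once (a) and (b) are in hand. The conjecture reduces to constructing, for each $\mathsf{AC}^{0}$ function, \emph{low-fat} sandwiching polynomials of polylogarithmic degree whose term expansions have bounded $\ell_{1}$ coefficient mass. Neither the Linial--Mansour--Nisan nor the Razborov--Smolensky approximation technique currently provides such coefficient control---indeed, both introduce precisely the ``massive cancellations'' the low-fat condition forbids---so the real difficulty is in effect outside the scope of the present theorem: it is the content of the conjecture being assumed, and any progress there will likely come from a genuinely new circuit-approximation method rather than from a refinement of the quantum argument.
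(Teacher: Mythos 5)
Your decomposition matches the paper's exactly: \texttt{FC-ALG} for $\mathsf{BQP}$ membership, almost $k$-wise independence of the forrelated distribution as the fooling property, the GLN Conjecture plus the Furst--Saxe--Sipser/Yao correspondence plus Bennett--Gill diagonalization for the conditional $\mathsf{PH}$ separation, a scale-down for $\mathsf{BQLOGTIME}$, and a direct exploitation of the limited structure of $\mathsf{BPP}_{\mathsf{path}}$ and $\mathsf{SZK}$ for the unconditional separations. Your concluding observation that the genuine difficulty has been pushed into the GLN Conjecture (equivalently, into constructing low-fat sandwiching polynomials) is also the paper's view.

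One subtlety you gloss over, and where your $\mathsf{BPP}_{\mathsf{path}}$ step as written would actually fail, is that the ``almost $k$-wise independence'' in property (b) must be \emph{multiplicative}. The paper's Definition requires $1-\varepsilon\leq\Pr_{\mathcal{D}}[C]/\Pr_{\mathcal{U}}[C]\leq 1+\varepsilon$ for every $k$-term $C$. Your phrasing---``$O(k^{2}/2^{n/2})$-close on every set of $k$ coordinates'' and ``matches all such moments \ldots{} with error $O(2^{-n/2}\operatorname{polylog})$''---reads as additive (small total variation of the marginals). Additive closeness suffices for the $\mathsf{SZK}$ case and for the $\mathsf{PH}$ case via GLN, but not for $\mathsf{BPP}_{\mathsf{path}}$: the acceptance probability there is a ratio $a/s$ where the postselection probability $s$ can be as small as $2^{-\operatorname{poly}(n)}$, so additive error $2^{-n/2}\operatorname{polylog}(n)$ in $a$ and $s$ gives no control over the ratio. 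The paper's Lemma \ref{bpppathszk} instead uses that each computation path is a $\operatorname{poly}(n)$-term, so $a$ and $s$ are nonnegative combinations of term probabilities and inherit the multiplicative bound; thus $a_{\mathcal{D}}/a_{\mathcal{U}}$ and $s_{\mathcal{D}}/s_{\mathcal{U}}$ are each within $1\pm\varepsilon$, which does control $a/s$. Relatedly, your one-line sketch for proving (b) (``expand the joint density \ldots{} Taylor-expand'') elides that the forrelated distribution is over discrete sign vectors, not the Gaussian; the paper first proves the multiplicative bound for the continuous marginals---reducing term probabilities to Gaussian measures of affine subspaces and bounding the relevant covariance matrix by a Neumann series---and then integrates over orthants to reach the discrete $\langle f,g\rangle$. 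That route is precisely what delivers the multiplicative (rather than merely additive) guarantee you need.
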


As a candidate problem, \textsc{Fourier Checking} has at least five advantages
over the \textsc{Recursive Fourier Sampling}\ problem of Bernstein and
Vazirani \cite{bv}. \ First, it is much simpler to define and reason about.
\ Second, \textsc{Fourier Checking} has the almost $k$-wise
independence\ property, which is not shared by \textsc{Recursive Fourier
Sampling}, and which immediately connects the former to general questions
about pseudorandomness against constant-depth circuits. \ Third,
\textsc{Fourier Checking} can yield exponential separations between quantum
and classical models, rather than just quasipolynomial ones. \ Fourth, one can
hope to use \textsc{Fourier Checking}\ to give an oracle relative to which
$\mathsf{BQP}$\ is not in $\mathsf{PH}\left[  n^{c}\right]  $ (or
$\mathsf{PH}$\ with $n^{c}$\ alternations) for any fixed $c$;\ by contrast,
\textsc{Recursive Fourier Sampling} is in $\mathsf{PH}\left[  \log n\right]
$. \ Finally, it is at least conceivable that the quantum algorithm for
\textsc{Fourier Checking}\ is \textit{good} for something. \ We leave the
challenge of finding an explicit computational problem that \textquotedblleft
instantiates\textquotedblright\ \textsc{Fourier Checking}, in the same way
that \textsc{Factoring}\ and \textsc{Discrete Logarithm}\ instantiated Shor's
period-finding problem.

\subsection{In Defense of Oracles\label{ORACLEDEF}}

This paper is concerned with finding oracles relative to which $\mathsf{BQP}%
$\ outperforms classical complexity classes. \ As such, it is open to the
usual objections: \textquotedblleft But don't oracle results mislead us about
the `real' world? \ What about non-relativizing results like $\mathsf{IP}%
=\mathsf{PSPACE}$ \cite{shamir}?\textquotedblright

In our view, it is most helpful to think of oracle separations, not as strange
metamathematical claims, but as \textit{lower bounds in a concrete
computational model that is natural and well-motivated in its own right.}
\ The model in question is \textit{query complexity}, where the resource to be
minimized is the number of accesses to a very long input string. \ When
someone gives an oracle $A$ relative to which $\mathcal{C}^{A}\not \subset
\mathcal{D}^{A}$, what they really mean is simply that they have found a
problem that $\mathcal{C}$ machines can solve using superpolynomially fewer
queries than $\mathcal{D}$\ machines. \ In other words, $\mathcal{C}$\ has has
\textquotedblleft cleared the first possible obstacle\textquotedblright---the
query complexity obstacle---to having capabilities beyond those of
$\mathcal{D}$. \ Of course, it could be (and sometimes is) that $\mathcal{C}%
\subseteq\mathcal{D}$\ for other reasons,\ but if we do not \textit{even} have
a query complexity lower bound, then proving one is in some sense the obvious
place to start.

Oracle separations have played a role in many of the central developments of
both classical and quantum complexity theory. \ As mentioned earlier, proving
query complexity lower bounds for $\mathsf{PH}$\ machines is essentially
equivalent to proving size lower bounds for $\mathsf{AC}^{0}$\ circuits---and
indeed, the pioneering $\mathsf{AC}^{0}$\ lower bounds of the early 1980s were
explicitly motivated by the goal of proving oracle separations for
$\mathsf{PH}$.\footnote{Yao's paper \cite{yao:ph} was entitled
\textquotedblleft Separating the polynomial-time hierarchy by
oracles\textquotedblright; the Furst-Saxe-Sipser paper \cite{fss}\ was
entitled \textquotedblleft Parity, circuits, and the polynomial time
hierarchy.\textquotedblright} \ Within quantum computing, oracle results have
played an even more decisive role: the first evidence for the power of quantum
computers came from the oracle separations of Bernstein-Vazirani \cite{bv} and
Simon \cite{simon}, and Shor's algorithm \cite{shor} contains an oracle
algorithm (for the \textsc{Period-Finding}\ problem) at its core.

Having said all that, if for some reason one still feels averse to the
language of oracles, then (as mentioned before) one is free to scale
everything down by an exponential,\ and to reinterpret a relativized
separation between $\mathsf{BQP}$\ and $\mathsf{PH}$\ as an \textit{un}%
relativized separation between $\mathsf{BQLOGTIME}$\ and $\mathsf{AC}^{0}$.

\section{Preliminaries\label{PRELIM}}

It will be convenient to consider Boolean functions of the form $f:\left\{
0,1\right\}  ^{n}\rightarrow\left\{  -1,1\right\}  $. \ Throughout this paper,
we let $N=2^{n}$;\ we will often view the truth table of a Boolean function as
an \textquotedblleft input\textquotedblright\ of size $N$. \ Given a Boolean
function $f:\left\{  0,1\right\}  ^{n}\rightarrow\left\{  -1,1\right\}  $, the
Fourier transform of $f$ is defined as%
\[
\widehat{f}\left(  z\right)  :=\frac{1}{\sqrt{N}}\sum_{x\in\left\{
0,1\right\}  ^{n}}\left(  -1\right)  ^{x\cdot z}f\left(  x\right)  .
\]
Recall Parseval's identity:%
\[
\sum_{x\in\left\{  0,1\right\}  ^{n}}f\left(  x\right)  ^{2}=\sum
_{z\in\left\{  0,1\right\}  ^{n}}\widehat{f}\left(  z\right)  ^{2}=N.
\]

\subsection{Problems\label{PROBLEMS}}

We first define the \textsc{Fourier Fishing} problem, in both
\textquotedblleft distributional\textquotedblright\ and \textquotedblleft
promise\textquotedblright\ versions. \ In the distributional version, we are
given oracle access to $n$ Boolean functions $f_{1},\ldots,f_{n}:\left\{
0,1\right\}  ^{n}\rightarrow\left\{  -1,1\right\}  $, which are chosen
uniformly and independently at random. \ The task is to output $n$ strings,
$z_{1},\ldots,z_{n}\in\left\{  0,1\right\}  ^{n}$, at least 75\% of which
satisfy $\left\vert \widehat{f}_{i}\left(  z_{i}\right)  \right\vert \geq1$
and at least 25\% of which satisfy $\left\vert \widehat{f}_{i}\left(
z_{i}\right)  \right\vert \geq2$. \ (Note that these thresholds are not
arbitrary, but were carefully chosen to produce a separation between the
quantum and classical models!)

We now want a version of \textsc{Fourier Fishing}\ that removes the need to
assume the $f_{i}$'s are uniformly random, replacing it with a worst-case
promise on the $f_{i}$'s. \ Call an $n$-tuple $\left\langle f_{1},\ldots
,f_{n}\right\rangle $\ of Boolean functions\ \textit{good} if%
\begin{align*}
\sum_{i=1}^{n}\sum_{z_{i}:\left\vert \widehat{f}_{i}\left(  z_{i}\right)
\right\vert \geq1}\widehat{f}_{i}\left(  z_{i}\right)  ^{2}  &  \geq0.8Nn,\\
\sum_{i=1}^{n}\sum_{z_{i}:\left\vert \widehat{f}_{i}\left(  z_{i}\right)
\right\vert \geq2}\widehat{f}_{i}\left(  z_{i}\right)  ^{2}  &  \geq0.26Nn.
\end{align*}
(We will show in Lemma \ref{mostgood}\ that the vast majority of $\left\langle
f_{1},\ldots,f_{n}\right\rangle $ are good.) \ In \textsc{Promise Fourier
Fishing}, we are given oracle access to Boolean functions $f_{1},\ldots
,f_{n}:\left\{  0,1\right\}  ^{n}\rightarrow\left\{  -1,1\right\}  $, which
are promised to be good.\ \ The task, again, is to output strings
$z_{1},\ldots,z_{n}\in\left\{  0,1\right\}  ^{n}$, at least 75\% of which
satisfy $\left\vert \widehat{f}_{i}\left(  z_{i}\right)  \right\vert \geq1$
and at least 25\% of which satisfy $\left\vert \widehat{f}_{i}\left(
z_{i}\right)  \right\vert \geq2$.

Next we define a decision problem called \textsc{Fourier Checking}. \ Here we
are given oracle access to two Boolean functions $f,g:\left\{  0,1\right\}
^{n}\rightarrow\left\{  -1,1\right\}  $. \ We are promised that either

\begin{enumerate}
\item[(i)] $\left\langle f,g\right\rangle $ was drawn from the uniform
distribution $\mathcal{U}$, which sets every $f\left(  x\right)  $\ and
$g\left(  y\right)  $\ by a fair, independent coin toss.

\item[(ii)] $\left\langle f,g\right\rangle $ was drawn from the
\textquotedblleft forrelated\textquotedblright\ distribution\ $\mathcal{F}$,
which is defined as follows. \ First choose a random real vector $v=\left(
v_{x}\right)  _{x\in\left\{  0,1\right\}  ^{n}}\in\mathbb{R}^{N}$, by drawing
each entry independently from a Gaussian distribution with mean $0$ and
variance $1$. \ Then set $f\left(  x\right)  :=\operatorname*{sgn}\left(
v_{x}\right)  $\ and $g\left(  x\right)  :=\operatorname*{sgn}\left(
\widehat{v}_{x}\right)  $\ for all $x$. \ Here%
\[
\operatorname*{sgn}\left(  \alpha\right)  :=\left\{
\begin{array}
[c]{cc}%
1 & \text{if }\alpha\geq0\\
-1 & \text{if }\alpha<0
\end{array}
\right.
\]
and $\widehat{v}$ is the Fourier transform of $v$\ over $\mathbb{Z}_{2}^{n}$:%
\[
\widehat{v}_{y}:=\frac{1}{\sqrt{N}}\sum_{x\in\left\{  0,1\right\}  ^{n}%
}\left(  -1\right)  ^{x\cdot y}v_{x}.
\]
In other words, $f$\ and $g$\ \textit{individually}\ are still uniformly
random, but they are no longer independent: now $g$ is now extremely well
correlated with the Fourier transform of $f$ (hence \textquotedblleft
forrelated\textquotedblright).
\end{enumerate}

\noindent The problem is to accept if $\left\langle f,g\right\rangle $ was
drawn from $\mathcal{F}$, and to reject if $\left\langle f,g\right\rangle $
was drawn from $\mathcal{U}$. \ Note that, since $\mathcal{F}$\ and
$\mathcal{U}$\ overlap slightly, we can only hope to succeed with overwhelming
probability over the choice of $\left\langle f,g\right\rangle $,\ not for
every $\left\langle f,g\right\rangle $ pair.

We can also define a promise-problem version of \textsc{Fourier Checking}.
\ In \textsc{Promise Fourier Checking}, we are promised that the quantity%
\[
p\left(  f,g\right)  :=\frac{1}{N^{3}}\left(  \sum_{x,y\in\left\{
0,1\right\}  ^{n}}f\left(  x\right)  \left(  -1\right)  ^{x\cdot y}g\left(
y\right)  \right)  ^{2}%
\]
is either at least $0.05$\ or at most $0.01$. \ The problem is to accept in
the former case and reject in the latter case.

\subsection{Complexity Classes\label{CC}}

See the Complexity Zoo\footnote{www.complexityzoo.com} for the definitions of
standard complexity classes, such as $\mathsf{BQP}$, $\mathsf{AM}$, and
$\mathsf{PH}$. \ When we write $\mathcal{C}^{\mathsf{PH}}$\ (i.e., a
complexity class $\mathcal{C}$\ with an oracle for the polynomial hierarchy),
we mean $\cup_{k\geq1}\mathcal{C}^{\mathsf{\Sigma}_{k}^{\mathsf{P}}}$.

We will consider not only decision problems, but also \textit{relation
problems} (also called \textit{function problems}). \ In a relation problem,
the output is not a single bit but a $\operatorname*{poly}\left(  n\right)
$-bit string $y$. \ There could be many valid $y$'s for a given instance, and
the algorithm's task is to output any one of them.

The definitions of $\mathsf{FP}$\ and $\mathsf{FNP}$\ (the relation versions
of $\mathsf{P}$\ and $\mathsf{NP}$) are standard. \ We now define
$\mathsf{FBPP}$\ and $\mathsf{FBQP}$, the relation versions of $\mathsf{BPP}%
$\ and $\mathsf{BQP}$.

\begin{definition}
$\mathsf{FBPP}$ is the class of relations $R\subseteq\left\{  0,1\right\}
^{\ast}\times\left\{  0,1\right\}  ^{\ast}$\ for which there exists a
probabilistic polynomial-time algorithm $A$ that, given any input
$x\in\left\{  0,1\right\}  ^{n}$, produces an output $y$ such that%
\[
\Pr\left[  \left(  x,y\right)  \in R\right]  =1-o\left(  1\right)  ,
\]
where the probability is over $A$'s internal randomness. \ (In particular,
this implies that for every $x$, there exists at least one $y$ such that
$\left(  x,y\right)  \in R$.) \ $\mathsf{FBQP}$\ is defined the same way,
except that $A$ is a quantum algorithm rather than a classical one.
\end{definition}

An important point about $\mathsf{FBPP}$\ and $\mathsf{FBQP}$\ is that, as far
as we know, these classes\ do not admit amplification. \ In other words, the
value of an algorithm's success probability might actually matter, not just
the fact that the probability is bounded above $1/2$. \ This is why we adopt
the convention that an algorithm \textquotedblleft succeeds\textquotedblright%
\ if it outputs $\left(  x,y\right)  \in R$\ with probability $1-o\left(
1\right)  $. \ In practice, we will give oracle problems for which the
$\mathsf{FBQP}$\ algorithm succeeds with probability $1-1/\exp\left(
n\right)  $, while any $\mathsf{FBPP}^{\mathsf{PH}}$\ algorithm succeeds with
probability at most (say) $0.99$. \ How far the constant in this separation
can be improved is an open problem.

Another important point is that, while $\mathsf{BPP}^{\mathsf{PH}}%
=\mathsf{P}^{\mathsf{PH}}$\ (which follows from $\mathsf{BPP}\subseteq
\mathsf{\Sigma}_{\mathsf{2}}^{\mathsf{P}}$), the class $\mathsf{FBPP}%
^{\mathsf{PH}}$\ is strictly larger than $\mathsf{FP}^{\mathsf{PH}}$. \ To see
this, consider the relation%
\[
R=\left\{  \left(  0^{n},y\right)  :K\left(  y\right)  \geq n\right\}  ,
\]
where we are given $n$, and asked to output any string of Kolmogorov
complexity at least $n$. \ Clearly this problem is in $\mathsf{FBPP}$: just
output a random $2n$-bit string. \ On the other hand, just as obviously the
problem is not in $\mathsf{FP}^{\mathsf{PH}}$. \ This is why we need to
construct an oracle $A$ such that $\mathsf{FBQP}^{A}\not \subset
\mathsf{FBPP}^{\mathsf{PH}^{A}}$: because constructing an oracle $A$ such that
$\mathsf{FBQP}^{A}\not \subset \mathsf{FP}^{\mathsf{PH}^{A}}$\ is trivial and
not even related to quantum computing.

We now discuss some \textquotedblleft low-level\textquotedblright\ complexity
classes. \ $\mathsf{AC}^{0}$\ is the class of problems solvable by a
nonuniform family of AND/OR/NOT circuits, with depth $O\left(  1\right)  $,
size $\operatorname*{poly}\left(  n\right)  $, and unbounded fanin. \ When we
say \textquotedblleft$\mathsf{AC}^{0}$ circuit,\textquotedblright\ we mean a
constant-depth circuit of AND/OR/NOT gates, not necessarily of polynomial
size. \ Any such circuit can be made into a \textit{formula} (i.e., a circuit
of fanout $1$) with only a polynomial increase in size. \ The circuit has
\textit{depth} $d$ if it consists of $d$ alternating layers of AND and OR
gates (without loss of generality, the NOT gates can all be pushed to the
bottom, and we do not count them towards the depth). \ For example, a DNF
(Disjunctive Normal Form) formula is just an $\mathsf{AC}^{0}$\ circuit of
depth $2$.

We will also be interested in quantum \textit{logarithmic} time, which can be
defined naturally as follows:

\begin{definition}
$\mathsf{BQLOGTIME}$ is the class of languages $L\subseteq\left\{
0,1\right\}  ^{\ast}$\ that are decidable, with bounded probability of error,
by a $\mathsf{LOGTIME}$-uniform family of quantum circuits $\left\{
C_{n}\right\}  _{n}$\ such that each $C_{n}$ has $O\left(  \log n\right)
$\ gates, and can include gates that make random-access queries to the input
string $x=x_{1}\ldots x_{n}$ (i.e., that map $\left\vert i\right\rangle
\left\vert z\right\rangle $ to $\left\vert i\right\rangle \left\vert z\oplus
x_{i}\right\rangle $ for every $i\in\left[  n\right]  $).
\end{definition}

One other complexity class that arises in this paper, which is less well known
than it should be, is $\mathsf{BPP}_{\mathsf{path}}$. \ Loosely speaking,
$\mathsf{BPP}_{\mathsf{path}}$ can be defined as the class of problems that
are solvable in probabilistic polynomial time, given the ability to
\textquotedblleft postselect\textquotedblright\ (that is, discard all runs of
the computation that do not produce a desired result, even if such runs are
the overwhelming majority). \ Formally:

\begin{definition}
$\mathsf{BPP}_{\mathsf{path}}$ is the class of languages $L\subseteq\left\{
0,1\right\}  ^{\ast}$\ for which there exists a $\mathsf{BPP}$\ machine $M$,
which can either \textquotedblleft succeed\textquotedblright\ or
\textquotedblleft fail\textquotedblright\ and conditioned on succeeding either
\textquotedblleft accept\textquotedblright\ or \textquotedblleft
reject,\textquotedblright\ such that for all inputs $x$:

\begin{enumerate}
\item[(i)] $\Pr\left[  M\left(  x\right)  \text{ succeeds}\right]  >0$.

\item[(ii)] $x\in L\Longrightarrow\Pr\left[  M\left(  x\right)  \text{ accepts
}|~M\left(  x\right)  \text{ succeeds}\right]  \geq\frac{2}{3}$.

\item[(iii)] $x\notin L\Longrightarrow\Pr\left[  M\left(  x\right)  \text{
accepts }|~M\left(  x\right)  \text{ succeeds}\right]  \leq\frac{1}{3}$.
\end{enumerate}
\end{definition}

$\mathsf{BPP}_{\mathsf{path}}$ was defined by Han, Hemaspaandra, and Thierauf
\cite{hht}, who also showed that $\mathsf{MA}\subseteq\mathsf{BPP}%
_{\mathsf{path}}$\ and $\mathsf{P}_{||}^{\mathsf{NP}}\subseteq\mathsf{BPP}%
_{\mathsf{path}}\subseteq\mathsf{BPP}_{||}^{\mathsf{NP}}$. \ Using
\textsc{Fourier Checking}, we will construct an oracle $A$ relative to which
$\mathsf{BQP}^{A}\not \subset \mathsf{BPP}_{\mathsf{path}}^{A}$. \ This result
might not sound amazing, but (i) it is new, (ii) it does not follow from the
\textquotedblleft standard\textquotedblright\ quantum algorithms, such as
those of Simon \cite{simon}\ and Shor \cite{shor}, and (iii) it supersedes
almost all previous oracle results placing $\mathsf{BQP}$\ outside classical
complexity classes.\footnote{The one exception is the result of Green and
Pruim \cite{gp}\ that there exists an $A$ relative to which $\mathsf{BQP}%
^{A}\not \subset \mathsf{P}^{\mathsf{NP}^{A}}$,\ but that can also be easily
reproduced using \textsc{Fourier Checking}.} \ As another illustration of the
versatility of \textsc{Fourier Checking}, we use it to give an $A$ such that
$\mathsf{BQP}^{A}\not \subset \mathsf{SZK}^{A}$, where $\mathsf{SZK}$\ is
Statistical Zero Knowledge. \ The opposite direction---an $A$ such that
$\mathsf{SZK}^{A}\not \subset \mathsf{BQP}^{A}$---was shown by Aaronson
\cite{aar:col} in 2002.

\section{Quantum Algorithms\label{QALG}}

In this section, we show that \textsc{Fourier Fishing} and \textsc{Fourier
Checking} both admit simple quantum algorithms.

\subsection{\label{FFALG}Quantum Algorithm for \textsc{Fourier Fishing}}

Here is a quantum algorithm, \texttt{FF-ALG}, that solves \textsc{Fourier
Fishing} with overwhelming probability in $O\left(  n^{2}\right)  $\ time and
$n$\ quantum queries (one to each $f_{i}$). \ For $i:=1$ to $n$, first prepare
the state%
\[
\frac{1}{\sqrt{N}}\sum_{x\in\left\{  0,1\right\}  ^{n}}f_{i}\left(  x\right)
\left\vert x\right\rangle ,
\]
then apply Hadamard gates to all $n$ qubits, then measure in the computational
basis and output the result as $z_{i}$.%
\begin{figure}
[ptb]
\begin{center}
\includegraphics[
trim=2.929409in 5.378601in 2.930476in 0.306113in,
height=2.4794in,
width=4.8084in
]%
{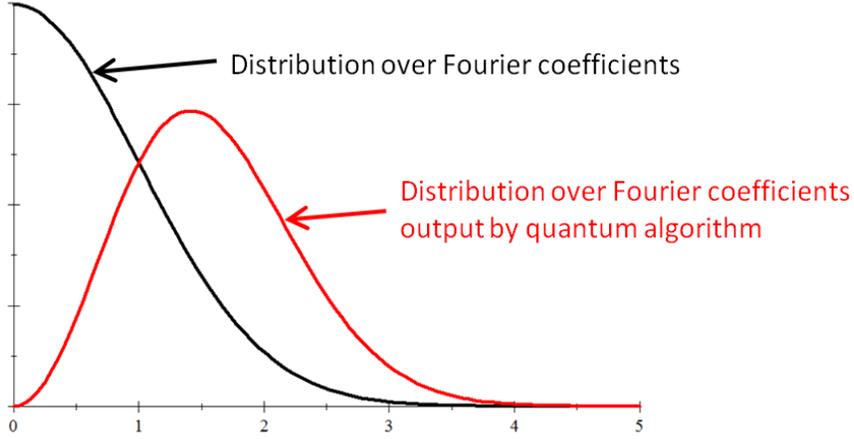}%
\caption{The Fourier coefficients of a random Boolean function follow a
Gaussian distribution, with mean $0$ and variance $1$. \ However, larger
Fourier coefficients are more likely to be observed by the quantum algorithm.}%
\label{fourierfig}%
\end{center}
\end{figure}

Intuitively, \texttt{FF-ALG} samples the Fourier coefficients of each $f_{i}$
under a distribution that is skewed towards larger coefficients;\ the
algorithm's behavior is illustrated pictorially in Figure \ref{fourierfig}.
\ We now give a formal analysis. \ Recall the definition of a
\textquotedblleft good\textquotedblright\ tuple $\left\langle f_{1}%
,\ldots,f_{n}\right\rangle $\ from Section \ref{PROBLEMS}. \ Assuming
$\left\langle f_{1},\ldots,f_{n}\right\rangle $\ is good, it is easy to
analyze \texttt{FF-ALG}'s success probability.

\begin{lemma}
\label{ffinbqp}Assuming $\left\langle f_{1},\ldots,f_{n}\right\rangle $\ is
good, \texttt{FF-ALG} succeeds with probability $1-1/\exp\left(  n\right)  $.
\end{lemma}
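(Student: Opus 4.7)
The plan is to unpack what \texttt{FF-ALG} actually samples, then appeal to the goodness hypothesis together with a standard concentration bound.

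First I would compute the output distribution of a single iteration. Starting from $\frac{1}{\sqrt{N}}\sum_{x}f_{i}(x)|x\rangle$ and applying $H^{\otimes n}$, the amplitude on $|z\rangle$ becomes $\frac{1}{N}\sum_{x}(-1)^{x\cdot z}f_{i}(x) = \widehat{f}_{i}(z)/\sqrt{N}$, so measuring yields any particular $z_{i}=z$ with probability $\widehat{f}_{i}(z)^{2}/N$. Parseval guarantees this is a probability distribution, and by construction the iterations for different $i$ use disjoint oracles and disjoint ancillas, so the samples $z_{1},\ldots,z_{n}$ are mutually independent.

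Next I would translate the goodness hypothesis into expectations. Let $X_{i}$ be the indicator of $|\widehat{f}_{i}(z_{i})|\geq 1$ and $Y_{i}$ the indicator of $|\widehat{f}_{i}(z_{i})|\geq 2$. Then
\[
\Pr[X_{i}=1] = \frac{1}{N}\sum_{z:\,|\widehat{f}_{i}(z)|\geq 1}\widehat{f}_{i}(z)^{2},
\qquad
\Pr[Y_{i}=1] = \frac{1}{N}\sum_{z:\,|\widehat{f}_{i}(z)|\geq 2}\widehat{f}_{i}(z)^{2},
\]
and summing over $i$ the two goodness inequalities give $\mathbb{E}\bigl[\sum_{i}X_{i}\bigr]\geq 0.8\,n$ and $\mathbb{E}\bigl[\sum_{i}Y_{i}\bigr]\geq 0.26\,n$.

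Finally, since the $X_{i}$ (resp.\ $Y_{i}$) are independent $\{0,1\}$-valued random variables, a Hoeffding/Chernoff bound yields
\[
\Pr\!\left[\tfrac{1}{n}\textstyle\sum_{i}X_{i} < 0.75\right] \leq e^{-\Omega(n)},
\qquad
\Pr\!\left[\tfrac{1}{n}\textstyle\sum_{i}Y_{i} < 0.25\right] \leq e^{-\Omega(n)},
\]
since each target threshold is a constant below the corresponding mean. A union bound over the two failure events gives overall success probability $1-1/\exp(n)$, completing the lemma. There is no real obstacle here; the entire content of the lemma is that the constants $0.8$ and $0.26$ in the definition of \emph{good} were chosen with precisely enough slack above the target fractions $0.75$ and $0.25$ so that independent-coordinate concentration kicks in.
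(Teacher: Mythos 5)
Your proposal matches the paper's proof essentially step for step: define indicators $X_i$, $Y_i$ for the two thresholds, observe that the quantum measurement yields $z$ with probability $\widehat{f}_i(z)^2/N$ so that $\operatorname{E}[\sum_i X_i]\geq 0.8n$ and $\operatorname{E}[\sum_i Y_i]\geq 0.26n$ by goodness, then apply a Chernoff/Hoeffding bound to the independent coordinates and finish with a union bound. The only difference is that you spell out the Hadamard calculation and the independence of the $n$ iterations, which the paper leaves implicit.
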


\begin{proof}
Let $\left\langle z_{1},\ldots,z_{n}\right\rangle $\ be the algorithm's
output. \ For each $i$, let $X_{i}$\ be the event that $\left\vert \widehat
{f}_{i}\left(  z_{i}\right)  \right\vert \geq1$\ and let $Y_{i}$\ be the event
that $\left\vert \widehat{f}_{i}\left(  z_{i}\right)  \right\vert \geq2$.
\ Also let $p_{i}:=\Pr\left[  X_{i}\right]  $\ and $q_{i}:=\Pr\left[
Y_{i}\right]  $, where the probability is over \texttt{FF-ALG}'s internal
(quantum) randomness. \ Then clearly%
\begin{align*}
p_{i}  &  =\frac{1}{N}\sum_{z_{i}:\left\vert \widehat{f}_{i}\left(
z_{i}\right)  \right\vert \geq1}\widehat{f}_{i}\left(  z_{i}\right)  ^{2},\\
q_{i}  &  =\frac{1}{N}\sum_{z_{i}:\left\vert \widehat{f}_{i}\left(
z_{i}\right)  \right\vert \geq2}\widehat{f}_{i}\left(  z_{i}\right)  ^{2}.
\end{align*}
So by assumption,%
\begin{align*}
p_{1}+\cdots+p_{n}  &  \geq0.8n,\\
q_{1}+\cdots+q_{n}  &  \geq0.26n.
\end{align*}
By a Chernoff/Hoeffding bound, it follows that%
\begin{align*}
\Pr\left[  X_{1}+\cdots+X_{n}\geq0.75n\right]   &  >1-\frac{1}{\exp\left(
n\right)  },\\
\Pr\left[  Y_{1}+\cdots+Y_{n}\geq0.25n\right]   &  >1-\frac{1}{\exp\left(
n\right)  }.
\end{align*}
Hence \texttt{FF-ALG} succeeds with $1-1/\exp\left(  n\right)  $\ probability
by the union bound.
\end{proof}

We also have the following:

\begin{lemma}
\label{mostgood}$\left\langle f_{1},\ldots,f_{n}\right\rangle $\ is good with
probability $1-1/\exp\left(  n\right)  $, if the $f_{i}$'s are chosen
uniformly at random.
\end{lemma}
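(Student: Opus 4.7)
The plan is to split the goodness condition into its two moment inequalities and verify each one by first computing an expectation and then applying a concentration bound. For each $i\in\{1,\ldots,n\}$ define
\[
S_i^{(1)}:=\sum_{z:\,|\widehat{f}_i(z)|\geq 1}\widehat{f}_i(z)^2,\qquad
S_i^{(2)}:=\sum_{z:\,|\widehat{f}_i(z)|\geq 2}\widehat{f}_i(z)^2.
\]
Parseval forces each $S_i^{(j)}\in[0,N]$, and independence of the $f_i$'s makes $S_1^{(j)},\ldots,S_n^{(j)}$ i.i.d.\ for each $j\in\{1,2\}$. The two conditions in the definition of goodness are then simply $\sum_i S_i^{(1)}\geq 0.8Nn$ and $\sum_i S_i^{(2)}\geq 0.26Nn$, and the goal is to show each holds with probability $1-1/\exp(n)$.

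First I would compute $\mathbb{E}[S_i^{(j)}]$. By linearity and symmetry across $z$, $\mathbb{E}[S_i^{(j)}]=N\cdot\mathbb{E}\bigl[\widehat{f}_i(0)^2\,\mathbf{1}\{|\widehat{f}_i(0)|\geq j\}\bigr]$, since for each fixed $z$ the quantity $\widehat{f}_i(z)=\frac{1}{\sqrt N}\sum_{x}(-1)^{x\cdot z}f_i(x)$ is a $\frac{1}{\sqrt N}$-rescaled sum of $N$ i.i.d.\ Rademachers and hence has the same distribution for every $z$. A quantitative CLT (Berry--Esseen applied to this truncated second moment, or a direct computation with the exact binomial distribution of $\sqrt N\,\widehat{f}_i(z)$) shows that $\mathbb{E}[S_i^{(j)}]/N$ converges as $N\to\infty$ to
\[
\mathbb{E}_{X\sim\mathcal{N}(0,1)}\bigl[X^2\,\mathbf{1}\{|X|\geq j\}\bigr]=2j\,\phi(j)+2\bigl(1-\Phi(j)\bigr),
\]
which evaluates to $\approx 0.8013$ for $j=1$ and $\approx 0.2615$ for $j=2$. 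Both limits strictly exceed the target thresholds $0.8$ and $0.26$, so for all sufficiently large $n$ one obtains $\mathbb{E}[S_i^{(1)}]\geq 0.801\,N$ and $\mathbb{E}[S_i^{(2)}]\geq 0.261\,N$.

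Next I would apply Hoeffding's inequality to the i.i.d., $[0,N]$-valued random variables $S_1^{(j)},\ldots,S_n^{(j)}$, obtaining
\[
\Pr\!\left[\sum_{i=1}^n S_i^{(1)}<0.8\,Nn\right]\leq e^{-\Omega(n)},\qquad
\Pr\!\left[\sum_{i=1}^n S_i^{(2)}<0.26\,Nn\right]\leq e^{-\Omega(n)},
\]
where the hidden constants come from squaring the margins $\mathbb{E}[S_i^{(1)}]/N-0.8$ and $\mathbb{E}[S_i^{(2)}]/N-0.26$ produced in the previous step. A union bound over $j\in\{1,2\}$ completes the proof.

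The main obstacle is the $j=2$ case of the expectation computation: the Gaussian limit $\approx 0.2615$ only barely exceeds the threshold $0.26$, so a purely qualitative CLT is insufficient, and one genuinely needs a quantitative bound (Berry--Esseen, or a sharp binomial tail estimate) certifying $\mathbb{E}[S_i^{(2)}]>0.26N$ from some explicit $n_0$ onwards. The finitely many values $n<n_0$ for which the inequalities could fail deterministically (e.g.\ $n=1$, where $|\widehat{f}_i(z)|<2$ always) are absorbed into the constant implicit in $1/\exp(n)$. The concentration step and the $j=1$ expectation are both routine by comparison.
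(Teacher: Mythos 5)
Your proposal is correct and follows essentially the same approach as the paper: compute the Gaussian-limit expectation of the truncated Fourier mass per $f_i$ (arriving at the same constants $\approx 0.801$ and $\approx 0.261$), then concentrate the sum over $i$ via a Chernoff/Hoeffding bound using independence of the $f_i$'s. One small remark: your concern that a qualitative CLT is insufficient is somewhat overcautious for an asymptotic claim of this form --- since the limiting values $0.8013$ and $0.2615$ strictly exceed $0.8$ and $0.26$, convergence of $\mathbb{E}[S_i^{(j)}]/N$ to the Gaussian truncated second moment (which follows from the CLT plus uniform integrability of $\widehat{f}_i(z)^2$, guaranteed by the bounded fourth moment $\mathbb{E}[\widehat{f}_i(z)^4]=3-2/N$) already yields some $n_0$ past which the margins are positive, without needing an explicit Berry--Esseen constant; the finitely many small $n$ are absorbed as you say.
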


\begin{proof}
Choose $f:\left\{  0,1\right\}  ^{n}\rightarrow\left\{  -1,1\right\}  $
uniformly at random. \ Then for each $z$, the Fourier coefficient $\widehat
{f}\left(  z\right)  $\ follows a normal distribution, with mean $0$ and
variance $1$. \ So in the limit of large $N$,%
\begin{align*}
\operatorname*{E}_{f}\left[  \sum_{z:\left\vert \widehat{f}\left(  z\right)
\right\vert \geq1}\widehat{f}\left(  z\right)  ^{2}\right]   &  =\sum
_{z\in\left\{  0,1\right\}  ^{n}}\Pr\left[  \left\vert \widehat{f}\left(
z\right)  \right\vert \geq1\right]  \operatorname*{E}\left[  \widehat
{f}\left(  z\right)  ^{2}~|~\left\vert \widehat{f}\left(  z\right)
\right\vert \geq1\right] \\
&  \approx\frac{2N}{\sqrt{2\pi}}\int_{1}^{\infty}e^{-x^{2}/2}x^{2}dx\\
&  \approx0.801N.
\end{align*}
Likewise,%
\begin{align*}
\operatorname*{E}_{f}\left[  \sum_{z:\left\vert \widehat{f}\left(  z\right)
\right\vert \geq2}\widehat{f}\left(  z\right)  ^{2}\right]   &  \approx
\frac{2N}{\sqrt{2\pi}}\int_{2}^{\infty}e^{-x^{2}/2}x^{2}dx\\
&  \approx0.261N.
\end{align*}
Since the $f_{i}$'s are chosen independently of one another, it follows by a
Chernoff bound that%
\begin{align*}
\sum_{i=1}^{n}\sum_{z_{i}:\left\vert \widehat{f}_{i}\left(  z_{i}\right)
\right\vert \geq1}\widehat{f}_{i}\left(  z_{i}\right)  ^{2}  &  \geq0.8Nn,\\
\sum_{i=1}^{n}\sum_{z_{i}:\left\vert \widehat{f}_{i}\left(  z_{i}\right)
\right\vert \geq2}\widehat{f}_{i}\left(  z_{i}\right)  ^{2}  &  \geq0.26Nn
\end{align*}
with probability $1-1/\exp\left(  n\right)  $ over the choice of $\left\langle
f_{1},\ldots,f_{n}\right\rangle $.
\end{proof}

Combining Lemmas \ref{ffinbqp} and \ref{mostgood}, we find that
\texttt{FF-ALG} succeeds with probability $1-1/\exp\left(  n\right)  $, where
the probability is over both $\left\langle f_{1},\ldots,f_{n}\right\rangle
$\ and \texttt{FF-ALG}'s internal randomness.

\subsection{\label{FCALG}Quantum Algorithm for \textsc{Fourier Checking}}

We now turn to \textsc{Fourier Checking}, the problem of deciding whether two
Boolean functions $f,g$\ are independent or forrelated. \ Here is a quantum
algorithm, \texttt{FC-ALG}, that solves \textsc{Fourier Checking}\ with
constant error probability using $O\left(  1\right)  $\ queries.\ \ First
prepare a uniform superposition over all $x\in\left\{  0,1\right\}  ^{n}$.
\ Then query $f$ in superposition, to create the state%
\[
\frac{1}{\sqrt{N}}\sum_{x\in\left\{  0,1\right\}  ^{n}}f\left(  x\right)
\left\vert x\right\rangle
\]
Then apply Hadamard gates to all $n$ qubits, to create the state%
\[
\frac{1}{N}\sum_{x,y\in\left\{  0,1\right\}  ^{n}}f\left(  x\right)  \left(
-1\right)  ^{x\cdot y}\left\vert y\right\rangle .
\]
Then query $g$ in superposition, to create the state%
\[
\frac{1}{N}\sum_{x,y\in\left\{  0,1\right\}  ^{n}}f\left(  x\right)  \left(
-1\right)  ^{x\cdot y}g\left(  y\right)  \left\vert y\right\rangle .
\]
Then apply Hadamard gates to all $n$ qubits again, to create the state%
\[
\frac{1}{N^{3/2}}\sum_{x,y,z\in\left\{  0,1\right\}  ^{n}}f\left(  x\right)
\left(  -1\right)  ^{x\cdot y}g\left(  y\right)  \left(  -1\right)  ^{y\cdot
z}\left\vert z\right\rangle .
\]
Finally, measure in the computational basis, and \textquotedblleft
accept\textquotedblright\ if and only if the outcome $\left\vert
0\right\rangle ^{\otimes n}$ is observed. \ If needed, repeat the whole
algorithm $O\left(  1\right)  $\ times to boost the success probability.

It is clear that the probability of observing $\left\vert 0\right\rangle
^{\otimes n}$ (in a single run of \texttt{FC-ALG})\ equals%
\[
p\left(  f,g\right)  :=\frac{1}{N^{3}}\left(  \sum_{x,y\in\left\{
0,1\right\}  ^{n}}f\left(  x\right)  \left(  -1\right)  ^{x\cdot y}g\left(
y\right)  \right)  ^{2}.
\]
Recall that \textsc{Promise Fourier Checking} was the problem of deciding
whether $p\left(  f,g\right)  \geq0.05$\ or $p\left(  f,g\right)  \leq0.01$,
promised that one of these is the case. \ Thus, we immediately get a quantum
algorithm to solve \textsc{Promise Fourier Checking}, with constant error
probability, using $O\left(  1\right)  $\ queries to $f$ and $g$.

For the distributional version of \textsc{Fourier Checking}, we also need the
following theorem.

\begin{theorem}
\label{fcinbqp}If $\left\langle f,g\right\rangle $\ is drawn from the uniform
distribution $\mathcal{U}$, then%
\[
\operatorname*{E}_{\mathcal{U}}\left[  p\left(  f,g\right)  \right]  =\frac
{1}{N}.
\]
If $\left\langle f,g\right\rangle $\ is drawn from the forrelated distribution
$\mathcal{F}$, then%
\[
\operatorname*{E}_{\mathcal{F}}\left[  p\left(  f,g\right)  \right]  >0.07.
\]

\end{theorem}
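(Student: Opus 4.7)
The plan is to handle the two claims separately: the uniform case is a routine second-moment computation, while the forrelated case reduces, via Jensen's inequality, to a single-moment computation that invokes the classical Sheppard/Grothendieck identity $\operatorname*{E}[\operatorname*{sgn}(X)\operatorname*{sgn}(Y)] = \frac{2}{\pi}\arcsin(\rho)$ for centered jointly Gaussian $(X,Y)$ with correlation $\rho$.

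For the uniform part, I expand the square to get
\[
p(f,g) = \frac{1}{N^3}\sum_{x_1,y_1,x_2,y_2} f(x_1)f(x_2)\,g(y_1)g(y_2)\,(-1)^{x_1\cdot y_1 + x_2\cdot y_2}.
\]
Under $\mathcal{U}$, the bits are independent uniform $\pm 1$, so $\operatorname*{E}_{\mathcal{U}}[f(x_1)f(x_2)]=\delta_{x_1,x_2}$ and $\operatorname*{E}_{\mathcal{U}}[g(y_1)g(y_2)]=\delta_{y_1,y_2}$. Only the diagonal $x_1=x_2$, $y_1=y_2$ survives, each term contributing $(-1)^{2x\cdot y}=1$, giving $\operatorname*{E}_{\mathcal{U}}[p(f,g)] = N^2/N^3 = 1/N$.

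For the forrelated part, set $S := \sum_{x,y} f(x)(-1)^{x\cdot y} g(y)$, so $p(f,g) = S^2/N^3$. By Jensen's inequality $\operatorname*{E}_{\mathcal{F}}[S^2] \geq (\operatorname*{E}_{\mathcal{F}}[S])^2$, so it suffices to lower bound $\operatorname*{E}_{\mathcal{F}}[S]$. By linearity, $\operatorname*{E}_{\mathcal{F}}[S] = \sum_{x,y}(-1)^{x\cdot y}\operatorname*{E}[\operatorname*{sgn}(v_x)\operatorname*{sgn}(\widehat{v}_y)]$. The pair $(v_x,\widehat{v}_y)$ is jointly Gaussian with zero mean and unit variances, and a direct computation from $\widehat{v}_y = \frac{1}{\sqrt{N}}\sum_{x'}(-1)^{x'\cdot y}v_{x'}$ gives covariance (hence correlation) exactly $(-1)^{x\cdot y}/\sqrt{N}$. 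Applying Sheppard's identity and using that $\arcsin$ is odd,
\[
(-1)^{x\cdot y}\operatorname*{E}[\operatorname*{sgn}(v_x)\operatorname*{sgn}(\widehat{v}_y)] = \tfrac{2}{\pi}\arcsin\!\left(1/\sqrt{N}\right)
\]
for \emph{every} pair $(x,y)$, so $\operatorname*{E}_{\mathcal{F}}[S] = N^2\cdot\frac{2}{\pi}\arcsin(1/\sqrt{N}) \geq 2N^{3/2}/\pi$, where the inequality uses $\arcsin(t)\geq t$ on $[0,1]$. Consequently $\operatorname*{E}_{\mathcal{F}}[p(f,g)] \geq 4/\pi^2 \approx 0.405 > 0.07$.

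The proof has no serious obstacle; the only thing to notice is the fortunate cancellation in which the signs $(-1)^{x\cdot y}$ coming from the Fourier kernel in $S$ exactly match the signs of the pairwise correlations $(v_x,\widehat{v}_y)$, so that every term in the sum contributes with the same positive sign once Sheppard's formula is applied. The Jensen step is important precisely because it lets us avoid computing the four-point sign expectation $\operatorname*{E}[\operatorname*{sgn}(v_{x_1})\operatorname*{sgn}(v_{x_2})\operatorname*{sgn}(\widehat{v}_{y_1})\operatorname*{sgn}(\widehat{v}_{y_2})]$ that would be needed for $\operatorname*{E}_{\mathcal{F}}[S^2]$ directly; the $0.07$ threshold is loose enough that the crude Jensen bound suffices.
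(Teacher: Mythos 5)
Your proof is correct, and for the second (forrelated) part it takes a genuinely different and in fact sharper route than the paper. For the uniform part, the paper invokes a symmetry of the quantum algorithm (translating $g$ by a character shows all $N$ measurement outcomes are equiprobable in expectation over $\mathcal{U}$), whereas you do the direct fourth-moment expansion with $\operatorname*{E}[f(x_1)f(x_2)]=\delta_{x_1,x_2}$; both are short and fine. For the forrelated part, the paper's argument is geometric and works pointwise with high probability: it writes $p(f,g)$ as the squared cosine of the angle between $\operatorname*{flat}(w)$ and $H\operatorname*{flat}(Hw)$, bounds that angle by the triangle inequality via $w$, and uses Gaussian tail concentration to get $\arccos\operatorname*{flat}(w)^{T}w\approx\arccos\sqrt{2/\pi}$, arriving at $p\geq(\cos 1.3)^{2}\approx 0.072$ with $1-1/\exp(N)$ probability. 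Your argument instead works directly with the mean: linearity of expectation plus Sheppard's formula $\operatorname*{E}[\operatorname*{sgn}X\operatorname*{sgn}Y]=\tfrac{2}{\pi}\arcsin\rho$ for the jointly Gaussian pair $(v_{x},\widehat{v}_{y})$ with correlation $(-1)^{x\cdot y}/\sqrt{N}$ gives $\operatorname*{E}_{\mathcal{F}}[S]=\tfrac{2}{\pi}N^{2}\arcsin(1/\sqrt{N})\geq\tfrac{2}{\pi}N^{3/2}$ exactly, and Jensen then yields $\operatorname*{E}_{\mathcal{F}}[p]\geq 4/\pi^{2}\approx 0.405$. This is both cleaner (no tail bounds, no approximate triangle-inequality step, valid for every $N$ rather than in the large-$N$ limit) and quantitatively stronger---indeed, $4/\pi^{2}$ is the correct asymptotic value of $p(f,g)$, so your Jensen step loses almost nothing. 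The paper's approach buys a statement about typical $\langle f,g\rangle$ (needed later for the oracle construction), whereas yours buys a tight bound on the expectation with less machinery; both establish the stated inequality $\operatorname*{E}_{\mathcal{F}}[p]>0.07$.
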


\begin{proof}
The first part follows immediately by symmetry (i.e., the fact that all
$N=2^{n}$ measurement outcomes of the quantum algorithm are equally likely).

For the second part, let $v\in\mathbb{R}^{N}$ be the vector of independent
Gaussians used to generate $f$ and $g$, let $w=v/\left\Vert v\right\Vert _{2}%
$\ be $v$ scaled to have unit norm, and let $H$ be the $n$-qubit Hadamard
matrix. \ Also let $\operatorname*{flat}\left(  w\right)  $\ be the unit
vector whose $x^{th}$\ entry is $\operatorname*{sgn}\left(  w_{x}\right)
/\sqrt{N}=f\left(  x\right)  /\sqrt{N}$, and let $\operatorname*{flat}\left(
Hw\right)  $\ be the unit vector whose $x^{th}$\ entry is $\operatorname*{sgn}%
\left(  \widehat{v}_{x}\right)  /\sqrt{N}=g\left(  x\right)  /\sqrt{N}$.
\ Then $p\left(  f,g\right)  $\ equals%
\[
\left(  \operatorname*{flat}\left(  w\right)  ^{T}H\operatorname*{flat}\left(
Hw\right)  \right)  ^{2},
\]
or the squared inner product between the vectors $\operatorname*{flat}\left(
w\right)  $ and $H\operatorname*{flat}\left(  Hw\right)  $. \ Note that
$w^{T}\cdot HHw=w^{T}w=1$. \ So the whole problem is to understand the
\textquotedblleft discretization error\textquotedblright\ incurred in
replacing $w^{T}$\ by $\operatorname*{flat}\left(  w\right)  ^{T}$\ and
$HHw$\ by $H\operatorname*{flat}\left(  Hw\right)  $. \ By the triangle
inequality, the angle between $\operatorname*{flat}\left(  w\right)  $\ and
$H\operatorname*{flat}\left(  Hw\right)  $\ is at most the angle between
$\operatorname*{flat}\left(  w\right)  $\ and $w$, plus the angle between
$w$\ and $H\operatorname*{flat}\left(  Hw\right)  $. \ In other words:%
\[
\arccos\left(  \operatorname*{flat}\left(  w\right)  ^{T}H\operatorname*{flat}%
\left(  Hw\right)  \right)  \leq\arccos\left(  \operatorname*{flat}\left(
w\right)  ^{T}w\right)  +\arccos\left(  w^{T}H\operatorname*{flat}\left(
Hw\right)  \right)  .
\]
Now,%
\begin{align*}
\operatorname*{flat}\left(  w\right)  ^{T}w  &  =\sum_{x\in\left\{
0,1\right\}  ^{n}}w_{x}\cdot\frac{1}{\sqrt{N}}\frac{\left\vert w_{x}%
\right\vert }{w_{x}}\\
&  =\frac{1}{\sqrt{N}}\sum_{x\in\left\{  0,1\right\}  ^{n}}\left\vert
w_{x}\right\vert \\
&  =\frac{\sum_{x\in\left\{  0,1\right\}  ^{n}}\left\vert v_{x}\right\vert
}{\sqrt{N}\left\Vert v\right\Vert _{2}}.
\end{align*}
Recall that each $v_{x}$\ is an independent real Gaussian with mean $0$ and
variance $1$,\ meaning that each $\left\vert v_{x}\right\vert $ is an
independent nonnegative random variable with expectation $\sqrt{2/\pi}$. \ So
by standard tail bounds, for all constants $\varepsilon>0$\ we have%
\begin{align*}
\Pr_{v}\left[  \sum_{x\in\left\{  0,1\right\}  ^{n}}\left\vert v_{x}%
\right\vert \leq\left(  \sqrt{\frac{2}{\pi}}-\varepsilon\right)  N\right]   &
\leq\frac{1}{\exp\left(  N\right)  },\\
\Pr\left[  \left\Vert v\right\Vert _{2}^{2}\geq\left(  1+\varepsilon\right)
N\right]   &  \leq\frac{1}{\exp\left(  N\right)  }.
\end{align*}
So by the union bound,%
\[
\Pr_{v}\left[  \operatorname*{flat}\left(  w\right)  ^{T}w\leq\sqrt{\frac
{2}{\pi}}-\varepsilon\right]  \leq\frac{1}{\exp\left(  N\right)  }.
\]
Since $H$ is unitary, the same analysis applies to $w^{T}H\operatorname*{flat}%
\left(  Hw\right)  $. \ Therefore, for all constants $\varepsilon>0$, with
$1-1/\exp\left(  N\right)  $ probability we have%
\begin{align*}
\arccos\left(  \operatorname*{flat}\left(  w\right)  ^{T}w\right)   &
\leq\left(  \arccos\sqrt{\frac{2}{\pi}}\right)  +\varepsilon,\\
\arccos\left(  w^{T}H\operatorname*{flat}\left(  Hw\right)  \right)   &
\leq\left(  \arccos\sqrt{\frac{2}{\pi}}\right)  +\varepsilon.
\end{align*}
So setting $\varepsilon=0.0001$,%
\begin{align*}
\arccos\left(  \operatorname*{flat}\left(  w\right)  ^{T}H\operatorname*{flat}%
\left(  Hw\right)  \right)   &  \leq\arccos\left(  \operatorname*{flat}\left(
w\right)  ^{T}w\right)  +\arccos\left(  w^{T}H\operatorname*{flat}\left(
Hw\right)  \right) \\
&  \leq2\left(  \arccos\sqrt{\frac{2}{\pi}}\right)  +2\varepsilon\\
&  \leq1.3
\end{align*}
Therefore, with $1-1/\exp\left(  N\right)  $ probability over $\left\langle
f,g\right\rangle $\ drawn from $\mathcal{F}$,%
\[
\operatorname*{flat}\left(  w\right)  ^{T}H\operatorname*{flat}\left(
Hw\right)  \geq\cos1.3,
\]
in which case $p\left(  f,g\right)  \geq\left(  \cos1.3\right)  ^{2}%
\approx0.0\allowbreak72$.
\end{proof}

Combining Theorem \ref{fcinbqp}\ with Markov's inequality, we immediately get
the following:

\begin{corollary}
\label{pfgcor}%
\begin{align*}
\Pr_{\left\langle f,g\right\rangle \sim\mathcal{U}}\left[  p\left(
f,g\right)  \geq0.01\right]   &  \leq\frac{100}{N},\\
\Pr_{\left\langle f,g\right\rangle \sim\mathcal{D}}\left[  p\left(
f,g\right)  \geq0.05\right]   &  \geq\frac{1}{50}.
\end{align*}

\end{corollary}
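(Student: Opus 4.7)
The plan is to extract both inequalities from Theorem~\ref{fcinbqp} by applying, respectively, the standard Markov inequality and its ``reverse'' form. The crucial observation that makes the reverse direction work is that $p(f,g)$ is always a probability of a measurement outcome of \texttt{FC-ALG}, and therefore satisfies $0 \le p(f,g) \le 1$ pointwise, regardless of which distribution generates $\langle f,g\rangle$.

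For the first inequality, I would apply Markov directly: since $p(f,g) \ge 0$ and $\operatorname{E}_{\mathcal{U}}[p(f,g)] = 1/N$ by Theorem~\ref{fcinbqp}, we get $\Pr_{\mathcal{U}}[p(f,g) \ge 0.01] \le (1/N)/0.01 = 100/N$. This step is entirely routine.

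For the second inequality, standard Markov points the wrong way, so I would use the boundedness $p(f,g) \le 1$ to split the expectation. Writing $\alpha := \Pr_{\mathcal{F}}[p(f,g) \ge 0.05]$, the law of total expectation gives
\[
0.07 < \operatorname*{E}_{\mathcal{F}}[p(f,g)] \le \alpha \cdot 1 + (1-\alpha) \cdot 0.05,
\]
so $\alpha \ge (0.07 - 0.05)/(1 - 0.05) = 0.02/0.95 > 1/50$, which is the desired bound. (Equivalently, apply Markov to the nonnegative random variable $1 - p(f,g)$.)

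No step here is an obstacle; the only thing to be careful about is remembering to invoke $p(f,g) \le 1$ when doing the reverse-Markov step, since without that boundedness the lower bound on the forrelated tail probability would not follow from the expectation alone. Once the corollary is stated, it immediately certifies that \texttt{FC-ALG} distinguishes $\mathcal{U}$ from $\mathcal{F}$ with constant advantage: repeating the algorithm $O(1)$ times and thresholding the empirical acceptance probability at $0.03$ succeeds with high probability over both $\langle f,g\rangle$ and the algorithm's internal randomness.
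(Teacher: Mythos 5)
Your proof is correct and matches the paper's (one-line) proof, which simply invokes Markov's inequality applied to Theorem~\ref{fcinbqp}. You have usefully spelled out the detail that the ``reverse Markov'' direction requires the pointwise bound $p(f,g)\leq 1$, which holds because $p(f,g)$ is the acceptance probability of \texttt{FC-ALG}; the paper leaves this implicit but it is exactly what is needed.
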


\section{\label{RELATIONAL}The Classical Complexity of \textsc{Fourier
Fishing}}

In Section \ref{FFALG}, we gave a quantum algorithm for \textsc{Fourier
Fishing}\ that made only one query to each $f_{i}$. \ By contrast, it is not
hard to show that any classical algorithm for \textsc{Fourier Fishing}%
\ requires exponentially many queries to the $f_{i}$'s. \ In this section, we
prove a much stronger result: that \textsc{Fourier Fishing}\ is not in
$\mathsf{FBPP}^{\mathsf{PH}}$. \ This result does not rely on any unproved conjectures.

\subsection{Constant-Depth Circuit Lower Bounds\label{CLB}}

Our starting point will be the following $\mathsf{AC}^{0}$\ lower bound, which
can be found in the book of H\aa stad \cite{hastad:book} for example.

\begin{theorem}
[\cite{hastad:book}]\label{majlb}Any depth-$d$ circuit that accepts all
$n$-bit\ strings of Hamming weight $n/2+1$, and rejects all strings of Hamming
weight $n/2$, has size $\exp\left(  \Omega\left(  n^{1/\left(  d-1\right)
}\right)  \right)  $.
\end{theorem}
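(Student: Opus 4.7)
The plan is to apply H\aa stad's switching lemma iteratively via random restrictions, reducing the depth of the circuit by one at each step, until only a shallow decision tree remains --- a decision tree which provably cannot distinguish inputs of Hamming weight $n/2$ from those of weight $n/2+1$. Suppose for contradiction that $C$ is a depth-$d$ circuit of size $s$ satisfying the hypothesis; normalize it to alternating AND/OR layers with NOT gates pushed to the leaves and bottom fan-in $1$. Fix a target parameter $t := c \, n^{1/(d-1)}$ for a small constant $c > 0$, and let $p := 1/(10 t)$.

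I would apply $d-2$ independent \emph{balanced} random restrictions $\rho_1, \ldots, \rho_{d-2}$ with survival probability $p$ each: pair up the input coordinates arbitrarily, and for each pair independently either leave both coordinates alive (probability $p$) or fix one member of the pair to $0$ and the other to $1$, choosing which is which uniformly (probability $1-p$). This balanced coupling ensures that every fixed pair contributes exactly $1$ to the Hamming weight, so that on the restricted function the task of distinguishing Hamming weight $n/2$ from $n/2+1$ reduces deterministically to distinguishing some weight $k$ from $k+1$ on the live coordinates, regardless of which restrictions are sampled.

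By H\aa stad's switching lemma, after each restriction every depth-$2$ bottom subformula of current bottom fan-in at most $t$ becomes equivalent to a decision tree of depth at most $t$ except with probability $(5pt)^{t} = 2^{-t}$; rewriting that tree as a DNF or CNF of bottom fan-in at most $t$ lets one absorb it into the layer above and drop the circuit depth by one. A union bound over the at most $(d-2)s$ subformulas encountered across all rounds gives total failure probability at most $(d-2)\, s \cdot 2^{-t}$, which is below $1$ whenever $s < 2^{t}/(d-2)$. In that regime some sequence of restrictions collapses $C$ to a single decision tree of depth $\leq t$ on the live variables. Since the expected number of live variables is $n p^{d-2} = \Theta(n^{1/(d-1)}/c^{d-2}) \gg t$, along every computation path of the decision tree there is an unread live coordinate whose value can be flipped to change the overall Hamming weight by $1$ without affecting the tree's output --- contradicting the assumption that $C$ separates the two Hamming weights. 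Hence $s \geq 2^{t}/(d-2) = \exp(\Omega(n^{1/(d-1)}))$.

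The main obstacle is the joint parameter balancing: $t$ must be small enough that $n p^{d-2} \gg t$ (so that enough live variables survive for the decision-tree argument to bite), and large enough that the failure bound $s \cdot d \cdot 2^{-t} < 1$ translates into a meaningful exponential lower bound on $s$. A secondary technical point is that the balanced pair-based restriction is not literally the standard $p$-random restriction used in H\aa stad's switching lemma, so one must verify that the lemma (or a minor variant of its proof) still applies with essentially the same constants when the per-coordinate fixings are correlated in pairs. Both items are by now folklore, but assembling them carefully is where the real work lies; with the parameters tuned as above, setting $t = \Theta(n^{1/(d-1)})$ delivers the claimed $\exp(\Omega(n^{1/(d-1)}))$ bound.
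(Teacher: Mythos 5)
The paper does not actually prove Theorem~\ref{majlb}: it imports it wholesale from H\aa stad's monograph \cite{hastad:book} and uses it as a black box in the proof of Corollary~\ref{ac0bias}. So there is no ``paper's proof'' to compare against; I can only evaluate your proposal on its own terms.

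Your route --- iterated random restrictions that are \emph{coupled in pairs} so that the Hamming-weight promise is preserved as an invariant, followed by a shallow-object-on-a-big-slice argument --- is the standard way this theorem is proved, and the parameter bookkeeping ($p=1/(10t)$, $t=\Theta(n^{1/(d-1)})$, $d-2$ rounds, union bound $\le (d-2)s\cdot 2^{-t}$) is correct. Two things, though, are stated loosely enough that they would not survive in a full write-up. First, after $d-2$ rounds of switching, starting from bottom fan-in $1$, the circuit has \emph{depth $2$ with bottom fan-in $\le t$}, i.e.\ it is a DNF or CNF, not yet a single decision tree; and your parameters leave only $m=\Theta(t)$ live variables, so you cannot afford one more restriction round with $p=\Theta(1/t)$ (it would kill essentially all remaining variables). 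The final contradiction must therefore be read off the depth-$2$ circuit directly: any term of width $\le t < m/2$ that is satisfied by some weight-$(m/2+1)$ string on the live coordinates is also satisfied by a weight-$m/2$ string (flip an unconstrained live coordinate from $1$ to $0$), so the DNF/CNF either accepts from both slices or rejects both; this is your flip argument, applied to a term rather than a decision-tree path. Second, and more substantively, the correctness of H\aa stad's switching lemma with the same failure bound under the \emph{balanced-pair} restriction is exactly the technical heart of the theorem, and it is not literally folklore: the restriction is a product over pairs, not over coordinates, and the conditional distribution within a fixed pair is anti-correlated, so the encoding/probabilistic argument behind the switching lemma must be rerun. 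H\aa stad does establish a weight-preserving switching lemma of this kind, but it is a theorem, not a remark; a clean proof of Theorem~\ref{majlb} must either prove it or cite it explicitly, rather than waving at ``essentially the same constants.''
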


We now give a corollary of Theorem \ref{majlb}, which (though simple) seems to
be new, and might be of independent interest. \ Consider the following
problem, which we call $\varepsilon$\textsc{-Bias Detection}. \ We are given a
string $y=y_{1}\ldots y_{m}\in\left\{  0,1\right\}  ^{m}$, and are promised
that each bit $y_{i}$\ is $1$ with independent probability $p$. \ The task is
to decide whether $p=1/2$\ or $p=1/2+\varepsilon$.

\begin{corollary}
\label{ac0bias}Let $\mathcal{U}\left[  \varepsilon\right]  $\ be the
distribution over $\left\{  0,1\right\}  ^{m}$ where each bit is $1$\ with
independent probability $1/2+\varepsilon$. \ Then any depth-$d$ circuit $C$
such that%
\[
\left\vert \Pr_{\mathcal{U}\left[  \varepsilon\right]  }\left[  C\right]
-\Pr_{\mathcal{U}\left[  0\right]  }\left[  C\right]  \right\vert
=\Omega\left(  1\right)
\]
has size $\exp\left(  \Omega\left(  1/\varepsilon^{1/\left(  d+2\right)
}\right)  \right)  $.
\end{corollary}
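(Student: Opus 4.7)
The plan is to reduce from Håstad's worst-case $\mathsf{AC}^0$ lower bound on distinguishing Hamming weight $n/2$ from $n/2+1$ (Theorem~\ref{majlb}), using two gadgets: random sub-sampling to convert a Hamming-weight gap of $1$ into exactly the bias $\varepsilon$, and a classical depth-$3$ polynomial-size approximate-majority construction (Ajtai) to derandomize the resulting weak distinguisher. Set $n := \lfloor 1/\varepsilon \rfloor$, so that $\mathcal{U}[1/n]$ agrees with $\mathcal{U}[\varepsilon]$ up to rounding, and WLOG assume $p_1 - p_0 \geq c$ for a constant $c > 0$, where $p_b := \Pr_{\mathcal{U}[b\varepsilon]}[C]$.

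Given any $x \in \{0,1\}^n$ of Hamming weight in $\{n/2, n/2+1\}$, I will draw $m$ indices $i_1,\ldots,i_m \in [n]$ i.i.d.\ uniformly and consider the circuit $C_{\vec i}(x) := C(x_{i_1},\ldots,x_{i_m})$. This is \emph{the same} depth-$d$, size-$s$ circuit with its input wires relabeled, so it incurs no depth or size blow-up. The bits $x_{i_j}$ are i.i.d.\ Bernoulli with bias exactly $w(x)/n - 1/2$, so the joint distribution is $\mathcal{U}[0]$ when $w(x) = n/2$ and $\mathcal{U}[\varepsilon]$ when $w(x) = n/2+1$; in particular $\mathbb{E}_{\vec i}[C_{\vec i}(x)]$ equals $p_0$ in the first case and $p_1$ in the second, separated by at least $c$.

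I will then amplify by drawing $k = \Theta(n)$ independent tuples $\vec i^{(1)},\ldots,\vec i^{(k)}$. A Chernoff bound together with a union bound over the $\leq 2^n$ relevant $x$'s shows that, with nonzero probability over these samples, the empirical average $\frac{1}{k}\sum_j C_{\vec i^{(j)}}(x)$ lies within $c/10$ of $p_0$ for every weight-$n/2$ input, and within $c/10$ of $p_1$ for every weight-$(n/2+1)$ input. Fix such a good choice nonuniformly; this yields $k$ depth-$d$ subcircuits whose output frequencies are separated by at least $4c/5$ between the two classes. Feeding these $k$ bits into Ajtai's depth-$3$ polynomial-size approximate-threshold circuit, tuned to threshold at $(p_0+p_1)/2$ with constant gap $c/5$, produces a deterministic circuit $C^*$ of depth $d+3$ and size $O(n s) + \mathrm{poly}(n)$ that correctly decides $w(x) = n/2+1$ versus $w(x) = n/2$ on every such $x$. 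Theorem~\ref{majlb} applied at depth $d+3$ then forces $\mathrm{size}(C^*) \geq \exp\!\bigl(\Omega(n^{1/(d+2)})\bigr)$, hence $s \geq \exp\!\bigl(\Omega(n^{1/(d+2)})\bigr) = \exp\!\bigl(\Omega(\varepsilon^{-1/(d+2)})\bigr)$.

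The main nontrivial ingredient is the appeal to the depth-$3$ approximate-majority construction: it is precisely what keeps the combined depth at $d+3$ and produces the exponent $1/((d+3)-1) = 1/(d+2)$ appearing in the statement, so any deeper combining gadget (e.g.\ naive iterated majority) would weaken the bound. A secondary subtlety is that the Chernoff error probability must decay faster than $2^{-n}$ so that the union bound over all relevant $x$ goes through in the nonuniform model, which forces $k = \Theta(n)$ and is the reason the size overhead is $O(ns)$ rather than $O(s)$; but this $\mathrm{poly}(n)$ slack is comfortably absorbed into the $\exp(\Omega(n^{1/(d+2)}))$ bound.
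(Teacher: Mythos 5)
Your argument is essentially identical to the paper's: the same reduction via random sub-sampling of a weight-$n/2$ vs.\ weight-$(n/2+1)$ input to simulate bias $\varepsilon$ vs.\ $0$, the same nonuniform amplification by $\Theta(n)$ independent copies with a Chernoff-plus-union-bound over the $\leq 2^n$ relevant inputs, the same use of a depth-$3$ polynomial-size approximate majority to combine (the paper cites Viola rather than Ajtai, but the construction is the same), and the same final invocation of Theorem~\ref{majlb} at depth $d+3$. No substantive differences.
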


\begin{proof}
Suppose such a distinguishing circuit $C$ exists, with depth $d$ and size $S$,
for some $\varepsilon>0$ (the parameter $m$ is actually irrelevant). \ Let
$n=1/\varepsilon$, and assume for simplicity that $n$ is an integer. \ Using
$C$, we will construct a new circuit $C^{\prime}$\ with depth $d^{\prime}%
=d+3$\ and size $S^{\prime}=O\left(  nS\right)  +\operatorname*{poly}\left(
n\right)  $, which accepts all strings $x\in\left\{  0,1\right\}  ^{n}$\ of
Hamming weight $n/2+1$, and rejects all strings of Hamming weight $n/2$. \ By
Theorem \ref{majlb}, this will imply that the original circuit $C$ must have
had size%
\begin{align*}
S  &  =\frac{1}{n}\exp\left(  \Omega\left(  n^{1/\left(  d^{\prime}-1\right)
}\right)  \right)  -\operatorname*{poly}\left(  n\right) \\
&  =\exp\left(  \Omega\left(  1/\varepsilon^{1/\left(  d+2\right)  }\right)
\right)  .
\end{align*}
So fix an input $x\in\left\{  0,1\right\}  ^{n}$, and suppose we choose $m$
bits $x_{i_{1}},\ldots,x_{i_{m}}$ from $x$, with each index $i_{j}$\ chosen
uniformly at random with replacement. \ Call the resulting $m$-bit string $y$.
\ Observe that if $x$\ had Hamming weight $n/2$, then $y$ will be distributed
according to $\mathcal{U}\left[  0\right]  $, while if $x$ had Hamming weight
$n/2+1$, then $y$ will be distributed according to $\mathcal{U}\left[
\varepsilon\right]  $. \ So by assumption,%
\begin{align*}
\Pr\left[  C\left(  y\right)  ~|~\left\vert x\right\vert =n/2\right]   &
=\alpha,\\
\Pr\left[  C\left(  y\right)  ~|~\left\vert x\right\vert =n/2+1\right]   &
=\alpha+\delta
\end{align*}
for some constants $\alpha$ and $\delta\neq0\ $(we can assume $\delta
>0$\ without loss of generality).

Now suppose we repeat the above experiment $T=kn$\ times, for some constant
$k=k\left(  \alpha,\delta\right)  $. \ That is, we create $T$ strings
$y_{1},\ldots,y_{T}$ by choosing random bits of $x$, so that each $y_{i}$\ is
distributed independently according to either $\mathcal{U}\left[  0\right]
$\ or $\mathcal{U}\left[  \varepsilon\right]  $. \ We then apply $C$ to each
$y_{i}$. \ Let%
\[
Z=C\left(  y_{1}\right)  +\cdots+C\left(  y_{T}\right)
\]
be the number of $C$ invocations that accept. \ Then by a Chernoff bound, if
$\left\vert x\right\vert =n/2$\ then%
\[
\Pr\left[  Z>\alpha T+\frac{\delta}{3}T\right]  <\exp\left(  -n\right)  ,
\]
while if $\left\vert x\right\vert =n/2+1$\ then%
\[
\Pr\left[  Z<\alpha T+\frac{2\delta}{3}T\right]  <\exp\left(  -n\right)  .
\]
By taking $k$ large enough, we can make both of these probabilities less than
$2^{-n}$. \ By the union bound, this implies that there must exist a way to
choose $y_{1},\ldots,y_{T}$\ so that%
\begin{align*}
\left\vert x\right\vert =\frac{n}{2}  &  \Longrightarrow Z\leq\alpha
T+\frac{\delta}{3}T,\\
\left\vert x\right\vert =\frac{n}{2}+1  &  \Longrightarrow Z\geq\alpha
T+\frac{2\delta}{3}T
\end{align*}
for \textit{every} $x$\ with $\left\vert x\right\vert \in\left\{
n/2,n/2+1\right\}  $\ simultaneously. \ In forming the circuit $C^{\prime}$,
we simply hardwire that choice.

The last step is to decide whether $Z\leq\alpha T+\frac{\delta}{3}T$\ or
$Z\geq\alpha T+\frac{2\delta}{3}T$. \ This can be done using an $\mathsf{AC}%
^{0}$\ circuit for the \textsc{Approximate Majority} problem (see Viola
\cite{viola:maj} for example), which has depth $3$ and size
$\operatorname*{poly}\left(  T\right)  $. \ The end result is a circuit
$C^{\prime}$\ to distinguish $\left\vert x\right\vert =n/2$\ from $\left\vert
x\right\vert =n/2+1$, which has depth $d+3$\ and size $TS+\operatorname*{poly}%
\left(  T\right)  =O\left(  nS\right)  +\operatorname*{poly}\left(  n\right)
$.
\end{proof}

\subsection{Secretly Biased Fourier Coefficients\label{SBFC}}

In this section, we prove two lemmas indicating that one can \textit{slightly}
bias one of the Fourier coefficients of a random Boolean function $f:\left\{
0,1\right\}  ^{n}\rightarrow\left\{  -1,1\right\}  $, and yet still have $f$
be information-theoretically indistinguishable from a random Boolean function
(so that, in particular, an adversary has no way of knowing which Fourier
coefficient was biased). \ These lemmas will play a key role in our reduction
from $\varepsilon$\textsc{-Bias Detection}\ to \textsc{Fourier Fishing}.

Fix a string $s\in\left\{  0,1\right\}  ^{n}$. \ Let $\mathcal{A}\left[
s\right]  $\ be the probability distribution over functions $f:\left\{
0,1\right\}  ^{n}\rightarrow\left\{  -1,1\right\}  $ where each $f\left(
x\right)  $\ is $1$\ with independent probability $\frac{1}{2}+\left(
-1\right)  ^{s\cdot x}\frac{1}{2\sqrt{N}}$, and let $\mathcal{B}\left[
s\right]  $\ be the distribution where each $f\left(  x\right)  $\ is
$1$\ with independent probability $\frac{1}{2}-\left(  -1\right)  ^{s\cdot
x}\frac{1}{2\sqrt{N}}$. \ Then let $\mathcal{D}\left[  s\right]  =\frac{1}%
{2}\left(  \mathcal{A}\left[  s\right]  +\mathcal{B}\left[  s\right]  \right)
$\ (that is, an equal mixture of $\mathcal{A}\left[  s\right]  $\ and
$\mathcal{B}\left[  s\right]  $).

\begin{lemma}
\label{alicebob}Suppose Alice chooses $s\in\left\{  0,1\right\}  ^{n}%
$\ uniformly at random, then draws $f$\ according to $\mathcal{D}\left[
s\right]  $. \ She keeps $s$ secret, but sends the truth table of $f$ to Bob.
\ After examining $f$, Bob outputs a string $z$\ such that $\left\vert
\widehat{f}\left(  z\right)  \right\vert \geq\beta$. \ Then%
\[
\Pr\left[  s=z\right]  \geq\frac{e^{\beta}+e^{-\beta}}{2\sqrt{e}N}.
\]
where the probability is over all runs of the protocol.
\end{lemma}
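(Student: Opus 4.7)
The plan is to compute the likelihood ratio $L(f,s) := \Pr_{\mathcal{D}[s]}[f]/\Pr_{\mathcal{U}}[f]$, where $\mathcal{U}$ is the uniform distribution on Boolean functions $f:\{0,1\}^n\to\{-1,1\}$, show that $L(f,s)$ is a simple increasing function of $|\hat f(s)|$, and then combine this with Bayes' rule. The intuition is that drawing $f$ from $\mathcal{D}[s]$ amounts to taking a uniformly random Boolean function and biasing the single Fourier coefficient $\hat f(s)$ to be large in magnitude; hence, conditioned on $f$, the posterior probability on any candidate secret $s'$ is proportional to an increasing function of $|\hat f(s')|$, and any $z$ with $|\hat f(z)| \geq \beta$ is therefore more likely than average to equal the real $s$.

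Concretely, the definitions of $\mathcal{A}[s]$ and $\mathcal{B}[s]$ yield
\[
L(f, s) \;=\; \tfrac{1}{2}\bigl(P_+ + P_-\bigr), \qquad P_\pm \;:=\; \prod_{x \in \{0,1\}^n} \Bigl(1 \pm \tfrac{(-1)^{s \cdot x} f(x)}{\sqrt{N}}\Bigr).
\]
Two identities then collapse the expression. Since $((-1)^{s\cdot x} f(x))^2 = 1$ for every $x$, the first is that $P_+ P_- = (1 - 1/N)^N$, independent of $s$ and $f$. The second, writing $t := \hat f(s)$ and observing that $\#\{x:(-1)^{s\cdot x} f(x) = 1\} = (N + \sqrt{N}\,t)/2$, is that $P_+/P_- = \bigl[(1 + 1/\sqrt{N})/(1 - 1/\sqrt{N})\bigr]^{\sqrt{N}\,t}$. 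Setting $v := \sqrt{N}\,t\cdot\operatorname{arctanh}(1/\sqrt{N})$ we get $P_\pm = (1 - 1/N)^{N/2}\,e^{\pm v}$, and hence the clean identity $L(f, s) = (1 - 1/N)^{N/2}\cosh(v)$. Since $\operatorname{arctanh}(1/\sqrt{N}) \geq 1/\sqrt{N}$, we have $|v| \geq |t|$, and since $\cosh$ is even and monotone on $[0,\infty)$, $\cosh(v) \geq \cosh(\hat f(s))$.

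To finish, from the joint density $\Pr[s,f] = \Pr_{\mathcal{D}[s]}[f]/N$ and swapping sums over $s$ and $f$,
\[
\Pr[s = z] \;=\; \frac{1}{N}\sum_f \Pr_{\mathcal{D}[z(f)]}[f] \;=\; \frac{1}{N}\operatorname*{E}_{f \sim \mathcal{U}}\bigl[L(f, z(f))\bigr].
\]
The hypothesis $|\hat f(z(f))| \geq \beta$ combined with the previous step gives $L(f, z(f)) \geq (1 - 1/N)^{N/2}\cosh(\beta)$ pointwise, so $\Pr[s = z] \geq (1-1/N)^{N/2}\cosh(\beta)/N$, which tends to the claimed $(e^\beta + e^{-\beta})/(2\sqrt{e}\,N)$ as $N \to \infty$.

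The main step I expect to need care in is spotting the factorization $P_+ P_- = (1-1/N)^N$ and the resulting hyperbolic-cosine rewriting $L = (1-1/N)^{N/2}\cosh(v)$; without this identity, each $P_\pm$ requires a delicate Taylor expansion that yields only an asymptotic equality, not a clean monotone lower bound in the useful direction. A minor technicality is Bob's behaviour on the (negligibly small) set of $f$ admitting no $z$ with $|\hat f(z)| \geq \beta$, but that contributes only $o(1)$ to the expectation and can be absorbed into the $(1-1/N)^{N/2}\to 1/\sqrt{e}$ limit.
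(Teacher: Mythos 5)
Your proof is correct and takes essentially the same route as the paper's: compute the density $\Pr_{\mathcal{D}[s]}[f]$ as a function of $\widehat{f}(s)$ via the same factorization of the product over $x$, observe that it is a monotone hyperbolic cosine of $|\widehat{f}(s)|$, lower-bound by the value at $\beta$, and finish with the Bayes' rule sum swap $\Pr[s=z] = \tfrac{1}{N}\sum_f \Pr_{\mathcal{D}[z(f)]}[f]$. Your $\operatorname{arctanh}$ reparameterization is a small but pleasant refinement: it yields the exact finite-$N$ identity $L(f,s)=(1-1/N)^{N/2}\cosh(v)$, whereas the paper writes the same $(1-1/N)^{N/2}$ prefactor but passes to the $N\to\infty$ limit to obtain the $e^{\widehat{f}(z)}+e^{-\widehat{f}(z)}$ form; in either case, the stated constant $(e^{\beta}+e^{-\beta})/(2\sqrt{e}N)$ is attained only asymptotically (the precise finite-$N$ lower bound is $(1-1/N)^{N/2}\cosh(\beta)/N$), and both arguments implicitly fix a deterministic Bob, which the paper handles by Yao's principle and you handle by writing $z(f)$.
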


\begin{proof}
By Yao's principle, we can assume without loss of generality that Bob's
strategy is deterministic. \ For each $z$, let $\mathcal{F}\left[  z\right]
$\ be the set of all $f$'s that cause Bob to output $z$. \ Then the first step
is to lower-bound $\Pr_{\mathcal{D}\left[  z\right]  }\left[  f\right]  $, for
some fixed $z$\ and $f\in\mathcal{F}\left[  z\right]  $. \ Let $N_{f}\left[
z\right]  $\ be the number of inputs $x\in\left\{  0,1\right\}  ^{n}$\ such
that $f\left(  x\right)  =\left(  -1\right)  ^{z\cdot x}$. \ It is not hard to
see that $N_{f}\left[  z\right]  =\frac{N}{2}+\frac{\sqrt{N}\widehat{f}\left(
z\right)  }{2}$. \ So%
\begin{align*}
\Pr_{\mathcal{D}\left[  z\right]  }\left[  f\right]   &  =\frac{1}{2}\left(
\Pr_{\mathcal{A}\left[  z\right]  }\left[  f\right]  +\Pr_{\mathcal{B}\left[
z\right]  }\left[  f\right]  \right) \\
&  =\frac{1}{2}\left(
{\displaystyle\prod\limits_{x\in\left\{  0,1\right\}  ^{n}}}
\left(  \frac{1}{2}+\frac{\left(  -1\right)  ^{z\cdot x}f\left(  x\right)
}{2\sqrt{N}}\right)  +%
{\displaystyle\prod\limits_{x\in\left\{  0,1\right\}  ^{n}}}
\left(  \frac{1}{2}-\frac{\left(  -1\right)  ^{z\cdot x}f\left(  x\right)
}{2\sqrt{N}}\right)  \right) \\
&  =\frac{1}{2^{N+1}}\left(  \left(  1+\frac{1}{\sqrt{N}}\right)
^{N_{f}\left[  z\right]  }\left(  1-\frac{1}{\sqrt{N}}\right)  ^{N-N_{f}%
\left[  z\right]  }+\left(  1-\frac{1}{\sqrt{N}}\right)  ^{N_{f}\left[
z\right]  }\left(  1+\frac{1}{\sqrt{N}}\right)  ^{N-N_{f}\left[  z\right]
}\right) \\
&  =\frac{1}{2^{N+1}}\left(  \frac{\left(  1+1/\sqrt{N}\right)  ^{\left(
\sqrt{N}\widehat{f}\left(  z\right)  +N\right)  /2}}{\left(  1-1/\sqrt
{N}\right)  ^{\left(  \sqrt{N}\widehat{f}\left(  z\right)  -N\right)  /2}%
}+\frac{\left(  1-1/\sqrt{N}\right)  ^{\left(  \sqrt{N}\widehat{f}\left(
z\right)  +N\right)  /2}}{\left(  1+1/\sqrt{N}\right)  ^{\left(  \sqrt
{N}\widehat{f}\left(  z\right)  -N\right)  /2}}\right) \\
&  =\frac{1}{2^{N+1}}\left(  1-\frac{1}{N}\right)  ^{N/2}\left(  \left(
\frac{1+1/\sqrt{N}}{1-1/\sqrt{N}}\right)  ^{\sqrt{N}\widehat{f}\left(
z\right)  /2}+\left(  \frac{1-1/\sqrt{N}}{1+1/\sqrt{N}}\right)  ^{\sqrt
{N}\widehat{f}\left(  z\right)  /2}\right) \\
&  =\frac{1}{2\sqrt{e}2^{N}}\left(  e^{\widehat{f}\left(  z\right)
}+e^{-\widehat{f}\left(  z\right)  }\right) \\
&  \geq\frac{e^{\beta}+e^{-\beta}}{2\sqrt{e}2^{N}}.
\end{align*}
Here the second-to-last line takes the limit as $N\rightarrow\infty$, while
the last line follows from the assumption $\left\vert \widehat{f}\left(
z\right)  \right\vert \geq\beta$, together with the fact that $e^{y}+e^{-y}%
$\ increases monotonically away from $y=0$.

Summing over all $z$ and $f$,%
\begin{align*}
\Pr\left[  s=z\right]   &  =\sum_{z\in\left\{  0,1\right\}  ^{n}}\sum
_{f\in\mathcal{F}\left[  z\right]  }\Pr\left[  f\right]  \cdot\Pr\left[
s=z~\ |~\ f\right] \\
&  =\sum_{z\in\left\{  0,1\right\}  ^{n}}\sum_{f\in\mathcal{F}\left[
z\right]  }\Pr\left[  f\right]  \cdot\frac{\Pr\left[  f~\ |~\ s=z\right]
\Pr\left[  s=z\right]  }{\Pr\left[  f\right]  }\\
&  =\frac{1}{N}\sum_{z\in\left\{  0,1\right\}  ^{n}}\sum_{f\in\mathcal{F}%
\left[  z\right]  }\Pr_{\mathcal{D}\left[  z\right]  }\left[  f\right] \\
&  \geq\frac{e^{\beta}+e^{-\beta}}{2\sqrt{e}N}.
\end{align*}

\end{proof}

Now let $\mathcal{D}=\operatorname*{E}_{s}\left[  \mathcal{D}\left[  s\right]
\right]  $ (that is, an equal mixture of all the $\mathcal{D}\left[  s\right]
$'s). \ We claim that $\mathcal{D}$\ is extremely close in variation distance
to $\mathcal{U}$,\ the uniform distribution over all Boolean functions
$f:\left\{  0,1\right\}  ^{n}\rightarrow\left\{  -1,1\right\}  $.

\begin{lemma}
\label{closetounif}$\left\Vert \mathcal{D}-\mathcal{U}\right\Vert \leq
\frac{e-1}{2\sqrt{2eN}}$.
\end{lemma}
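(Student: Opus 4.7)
The plan is to write the density $\Pr_{\mathcal{D}}[f]$ explicitly as a Fourier-like expansion in the characters $\chi_T(f) := \prod_{x \in T} f(x)$ on $\{-1,1\}^N$, then use Parseval together with Cauchy--Schwarz to pass from variation distance to $L_2$ distance.

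First I would expand the two independent-coin products
\[
\Pr_{\mathcal{A}[s]}[f] = \frac{1}{2^N}\prod_{x\in\{0,1\}^n}\left(1+\frac{(-1)^{s\cdot x}f(x)}{\sqrt{N}}\right), \qquad \Pr_{\mathcal{B}[s]}[f] = \frac{1}{2^N}\prod_{x\in\{0,1\}^n}\left(1-\frac{(-1)^{s\cdot x}f(x)}{\sqrt{N}}\right)
\]
as sums over subsets $T\subseteq\{0,1\}^n$. Averaging them to form $\mathcal{D}[s]$ kills every term with $|T|$ odd. Next, averaging over a uniformly random $s$, the factor $\prod_{x\in T}(-1)^{s\cdot x}=(-1)^{s\cdot(\bigoplus_{x\in T}x)}$ vanishes unless $\bigoplus_{x\in T}x=0$ in $\mathbb{F}_2^n$. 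This yields the clean formula
\[
\Pr_{\mathcal{D}}[f]-\frac{1}{2^N} = \frac{1}{2^N}\sum_{\substack{T\neq\emptyset \\ |T|\text{ even} \\ \bigoplus_{x\in T}x=0}}\frac{\chi_T(f)}{N^{|T|/2}},
\]
where the $T=\emptyset$ term contributes exactly the uniform density $2^{-N}$.

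Now I would bound the total variation distance by the $L_2$ distance via Cauchy--Schwarz: writing $\|\mathcal{D}-\mathcal{U}\|=\tfrac{1}{2}\mathbb{E}_{\mathcal{U}}[|2^N\Pr_{\mathcal{D}}[f]-1|]$, we get
\[
\|\mathcal{D}-\mathcal{U}\| \le \tfrac{1}{2}\sqrt{\mathbb{E}_{\mathcal{U}}\big[(2^N\Pr_{\mathcal{D}}[f]-1)^2\big]}.
\]
Since the characters $\chi_T$ are orthonormal under $\mathcal{U}$, Parseval gives
\[
\mathbb{E}_{\mathcal{U}}\big[(2^N\Pr_{\mathcal{D}}[f]-1)^2\big] = \sum_{\substack{T\neq\emptyset \\ |T|\text{ even} \\ \bigoplus_{x\in T}x=0}}\frac{1}{N^{|T|}}.
\]

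The final step is a combinatorial count: the number of $k$-subsets $T$ of $\{0,1\}^n$ with $\bigoplus_{x\in T}x=0$ is at most $N^{k-1}/k!$, because the ordered tuples with this XOR condition number at most $N^{k-1}$ (fix the first $k-1$ elements freely, the last is determined) and each subset accounts for $k!$ orderings. Plugging this in yields
\[
\sum_{\substack{k\ge 2 \\ k\text{ even}}}\frac{1}{N^k}\cdot\frac{N^{k-1}}{k!} = \frac{1}{N}\big(\cosh(1)-1\big) = \frac{(e-1)^2}{2eN},
\]
and taking the square root and dividing by $2$ gives exactly $\|\mathcal{D}-\mathcal{U}\|\le (e-1)/(2\sqrt{2eN})$.

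The only conceptually tricky step is recognizing that one should Fourier-expand the density on $\{-1,1\}^N$ (rather than working directly with the $\cosh$ formula from Lemma~\ref{alicebob}, which is complicated by the fact that $|\widehat{f}(z)|$ can be as large as $\sqrt{N}$); once that is done, the XOR-vanishing condition on $T$ and the orthonormality of the $\chi_T$ do all the work, and the remaining counting is routine.
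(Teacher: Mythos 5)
Your proof is correct and takes a genuinely different---and in some ways cleaner---route than the paper's. Both arguments share the top-level strategy of bounding total variation by $L_2$-distance via Cauchy--Schwarz, but they differ in how the variance of the density $\Pr_{\mathcal{D}}[f]$ is computed. The paper starts from the asymptotic identity $\Pr_{\mathcal{D}[s]}[f]\approx\frac{1}{2\sqrt{e}\,2^N}\bigl(e^{\widehat{f}(s)}+e^{-\widehat{f}(s)}\bigr)$ derived in the proof of Lemma~\ref{alicebob}, then estimates $\operatorname*{E}_f\bigl[\Pr_{\mathcal{D}}[f]^2\bigr]$ by treating $\widehat{f}(s)$ and $\widehat{f}(t)$ as independent standard Gaussians and integrating $\cosh$ and $\cosh^2$ against the Gaussian measure; this requires two layers of ``in the limit of large $N$'' approximations, and as you note, one must also implicitly control the heavy-tail contribution from $|\widehat{f}(s)|$ ranging up to $\sqrt{N}$. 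Your expansion of $\Pr_{\mathcal{D}}[f]$ in the characters $\chi_T(f)=\prod_{x\in T}f(x)$ on $\{-1,1\}^N$ is exact: averaging $\mathcal{A}[s]$ and $\mathcal{B}[s]$ kills odd $|T|$, averaging over $s$ kills every $T$ with $\bigoplus_{x\in T}x\neq 0$, and Parseval reduces the $L_2$-distance to a clean combinatorial sum which your tuple-counting argument bounds by $\frac{1}{N}\sum_{k\ge 2,\,k\text{ even}}\frac{1}{k!}=\frac{\cosh(1)-1}{N}=\frac{(e-1)^2}{2eN}$. This reproduces the paper's variance figure with no asymptotics at all; indeed your bound is slightly conservative since, for instance, the $k=2$ term actually contributes $0$ rather than $N/2$. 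The paper's Gaussian heuristic buys a shorter, more transparent link to the $\widehat{f}(s)\sim\mathcal{N}(0,1)$ picture that runs through Section~4; your character-expansion approach buys full rigor and is arguably the form in which one would want the lemma stated if the constants mattered.
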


\begin{proof}
By a calculation from Lemma \ref{alicebob}, for all $f$ and $s$ we have%
\[
\Pr_{\mathcal{D}\left[  s\right]  }\left[  f\right]  =\frac{1}{2\sqrt{e}2^{N}%
}\left(  e^{\widehat{f}\left(  s\right)  }+e^{-\widehat{f}\left(  s\right)
}\right)
\]
in the limit of large $N$. \ Hence%
\[
\Pr_{\mathcal{D}}\left[  f\right]  =\operatorname*{E}_{s}\left[
\Pr_{\mathcal{D}\left[  s\right]  }\left[  f\right]  \right]  =\frac{1}%
{2\sqrt{e}N2^{N}}\sum_{s\in\left\{  0,1\right\}  ^{n}}\left(  e^{\widehat
{f}\left(  s\right)  }+e^{-\widehat{f}\left(  s\right)  }\right)  .
\]
Clearly $\operatorname*{E}_{f}\left[  \Pr_{\mathcal{D}}\left[  f\right]
\right]  =1/2^{N}$. \ Our goal is to upper-bound the variance
$\operatorname*{Var}_{f}\left[  \Pr_{\mathcal{D}}\left[  f\right]  \right]  $,
which measures the distance from $\mathcal{D}$\ to the uniform distribution.
\ In the limit of large $N$, we have%
\begin{align*}
\operatorname*{E}_{f}\left[  \Pr_{\mathcal{D}}\left[  f\right]  ^{2}\right]
&  =\frac{1}{4eN^{2}2^{2N}}\left(  \sum_{s}\operatorname*{E}_{f}\left[
\left(  e^{\widehat{f}\left(  s\right)  }+e^{-\widehat{f}\left(  s\right)
}\right)  ^{2}\right]  +\sum_{s\neq t}\operatorname*{E}_{f}\left[  \left(
e^{\widehat{f}\left(  s\right)  }+e^{-\widehat{f}\left(  s\right)  }\right)
\left(  e^{\widehat{f}\left(  t\right)  }+e^{-\widehat{f}\left(  t\right)
}\right)  \right]  \right) \\
&  =\frac{1}{4eN^{2}2^{2N}}\left(
\begin{array}
[c]{c}%
\sum_{s}\frac{1}{\sqrt{2\pi}}\int_{-\infty}^{\infty}e^{-x^{2}/2}\left(
e^{x}+e^{-x}\right)  ^{2}dx\\
+\sum_{s\neq t}\left[  \frac{1}{\sqrt{2\pi}}\int_{-\infty}^{\infty}%
e^{-x^{2}/2}\left(  e^{x}+e^{-x}\right)  dx\right]  ^{2}%
\end{array}
\right) \\
&  =\frac{1}{4eN^{2}2^{2N}}\left[  \left(  2e^{2}+2\right)  N+4eN\left(
N-1\right)  \right] \\
&  =\frac{1}{2^{2N}}\left(  1+\frac{\left(  e-1\right)  ^{2}}{2eN}\right)  .
\end{align*}
Hence%
\[
\operatorname*{Var}_{f}\left[  \Pr_{\mathcal{D}}\left[  f\right]  \right]
=\operatorname*{E}_{f}\left[  \Pr_{\mathcal{D}}\left[  f\right]  ^{2}\right]
-\operatorname*{E}_{f}\left[  \Pr_{\mathcal{D}}\left[  f\right]  \right]
^{2}=\frac{\left(  e-1\right)  ^{2}}{2eN2^{2N}}.
\]
So by Cauchy-Schwarz,%
\[
\operatorname*{E}_{f}\left[  \left\vert \Pr_{\mathcal{D}}\left[  f\right]
-\Pr_{\mathcal{U}}\left[  f\right]  \right\vert \right]  \leq\sqrt
{\operatorname*{Var}_{f}\left[  \Pr_{\mathcal{D}}\left[  f\right]  \right]
}=\frac{e-1}{\sqrt{2eN}}\cdot\frac{1}{2^{N}}%
\]
and%
\[
\left\Vert \mathcal{D}-\mathcal{U}\right\Vert \leq\frac{e-1}{2\sqrt{2eN}}.
\]

\end{proof}

An immediate corollary of Lemma \ref{closetounif}\ is that, if a
\textsc{Fourier Fishing} algorithm succeeds with probability $p$ on
$\left\langle f_{1},\ldots,f_{n}\right\rangle $ drawn from $\mathcal{U}^{n}$,
then it also succeeds with probability at least%
\[
p-\left\Vert \mathcal{D}^{n}-\mathcal{U}^{n}\right\Vert \geq p-\frac{\left(
e-1\right)  n}{2\sqrt{2eN}}%
\]
on $\left\langle f_{1},\ldots,f_{n}\right\rangle $ drawn from $\mathcal{D}%
^{n}$.

\subsection{Putting It All Together\label{ALLTOGETHER}}

Using the results of Sections \ref{CLB}\ and \ref{SBFC}, we are now ready to
prove a lower bound on the constant-depth circuit complexity of
\textsc{Fourier Fishing}.

\begin{theorem}
\label{fflb}Any depth-$d$ circuit that solves the \textsc{Fourier Fishing}
problem, with probability at least $0.99$\ over $f_{1},\ldots,f_{n}$ chosen
uniformly at random,\ has size $\exp\left(  \Omega\left(  N^{1/\left(
2d+8\right)  }\right)  \right)  $.
\end{theorem}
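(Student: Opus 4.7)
The plan is to reduce $\varepsilon$-\textsc{Bias Detection}, with $\varepsilon=1/(2\sqrt{N})$, to \textsc{Fourier Fishing}, and then invoke Corollary~\ref{ac0bias}. More precisely, I will convert any depth-$d$ size-$S$ circuit $C$ for \textsc{Fourier Fishing} into a depth-$(d+2)$ size-$O(NS)$ circuit distinguishing $\mathcal{U}[0]$ from $\mathcal{U}[\varepsilon]$ with constant advantage; plugging $\varepsilon=1/(2\sqrt{N})$ and $d''=d+2$ into Corollary~\ref{ac0bias} then yields $S\geq\exp(\Omega(N^{1/(2d+8)}))$.

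The gadget is the construction underlying Section~\ref{SBFC}. Given a block $y\in\{0,1\}^{N}$ from the bias-detection input and uniformly random $s\in\{0,1\}^{n}$, $b\in\{\pm1\}$ (which will be non-uniformly hardwired via Yao's principle), set $f(x):=b(2y_{x}-1)(-1)^{s\cdot x}$. A direct check shows that $f\sim\mathcal{U}$ if $y\sim\mathcal{U}[0]$, and $f\sim\mathcal{D}[s]$ if $y\sim\mathcal{U}[\varepsilon]$. The reduction picks a uniformly random $j\in[n]$, uses $y$ to build $f_{j}$ this way, fills the remaining $f_{i}$'s with fresh hardwired uniform bits, runs $C$ to obtain $\vec{z}$, and evaluates the indicator $X:=\mathbf{1}[z_{j}=s]$.

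The heart of the analysis mimics the proof of Lemma~\ref{alicebob}. Conditioning on $j$ and on $(f_{i})_{i\neq j}$, substituting the identity $\Pr_{\mathcal{D}[z]}[f_{j}]=\cosh(\widehat{f}_{j}(z))/(\sqrt{e}\,2^{N})$, and summing over $z$ yields the clean formula
\[
\Pr_{\text{Case 2}}[X=1]=\frac{1}{\sqrt{e}\,N}\,\mathbb{E}_{j,\vec{f}\sim\mathcal{U}^{n}}\!\bigl[\cosh\!\bigl(\widehat{f}_{j}(C(\vec{f})_{j})\bigr)\bigr],
\]
while trivially $\Pr_{\text{Case 1}}[X=1]=1/N$ because the hidden $s$ is independent of $C$'s view in Case~1. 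Crucially, the expectation on the right is under the \emph{uniform} distribution, which is exactly the regime where the theorem's hypothesis on $C$ applies. The \textsc{Fourier Fishing} promise gives, over uniform $\vec{f}$ and $j$, $\Pr[|\widehat{f}_{j}(z_{j})|\geq1]\geq 0.7425$ and $\Pr[|\widehat{f}_{j}(z_{j})|\geq2]\geq 0.2475$; combined with $\cosh(1)\approx1.54$ and $\cosh(2)\approx3.76$ this pushes $\mathbb{E}[\cosh(\widehat{f}_{j}(z_{j}))]$ above $\sqrt{e}$ by a constant, so $\Pr_{\text{Case 2}}[X=1]\geq c/N$ for some constant $c>1$.

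To amplify this $\Theta(1/N)$ gap into constant advantage, I run $T:=N$ independent trials (using disjoint length-$N$ blocks of $y$ and fresh hardwired choices of $j_{t},s_{t},b_{t}$ and auxiliary $(f_{i}^{(t)})_{i\neq j_{t}}$) and output the OR of the $T$ indicators $X_{t}$. In Case~1 the OR equals $1$ with probability $1-(1-1/N)^{N}\to 1-1/e$; in Case~2 with probability at least $1-e^{-c}$; the gap is $\Omega(1)$. The depth overhead is exactly two layers: one AND (with $s_{t}$ hardwired, so each $X_{t}$ is an AND of $n$ literals over the outputs of $C$) and one OR combining the $X_{t}$'s. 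Finally, an averaging argument (Yao's principle) fixes the hardwired randomness so that the advantage remains $\Omega(1)$, completing the reduction and the theorem. The main technical obstacles are (i) the numerical verification that the $75\%/25\%$ thresholds are sharp enough to force $\mathbb{E}[\cosh(\cdot)]>\sqrt{e}$, and (ii) the tight depth bookkeeping, since any extra layer would degrade the exponent in $N^{1/(2d+8)}$.
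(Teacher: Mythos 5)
Your proposal is correct and follows the same overall strategy as the paper: reduce $\varepsilon$-\textsc{Bias Detection} (with $\varepsilon=1/(2\sqrt{N})$) to \textsc{Fourier Fishing} via the $\mathcal{D}[s]$ gadget, amplify the $\Theta(1/N)$ gap with an OR of $N$ independent trials, derandomize via Yao's principle, and invoke Corollary~\ref{ac0bias}. The one place where you genuinely streamline the argument is the exact identity
\[
\Pr_{\text{Case 2}}[X=1]=\frac{1}{\sqrt{e}\,N}\,\mathbb{E}_{j,\vec{f}\sim\mathcal{U}^{n}}\bigl[\cosh\bigl(\widehat{f}_{j}(C(\vec{f})_{j})\bigr)\bigr],
\]
which follows from the bijection $f_j \leftrightarrow z_j$ on the $2^N$ possible $f_j$'s and the closed form $\Pr_{\mathcal{D}[z]}[f]=\cosh(\widehat{f}(z))/(\sqrt{e}\,2^N)$. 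This lets you bound the key probability directly against a $\mathcal{U}^n$-expectation, which is exactly where the $0.99$ hypothesis applies, and thereby bypass Lemma~\ref{closetounif} and the accompanying $0.98$ loss that the paper uses (the paper drives all $n$ functions from $R$ and argues the success probability transfers from $\mathcal{U}^n$ to $\mathcal{D}^n$). Your version, which drives only $f_j$ from the bias-detection block and hardwires the rest, also trims the trial length from $Nn$ to $N$. The numerics check out: $\mathbb{E}[\cosh(\widehat{f}_j(z_j))]\ge 0.2475\cosh 2 + 0.495\cosh 1 + 0.2575 \approx 1.95 > \sqrt{e}\approx 1.649$, giving $c\approx 1.18$ and a constant OR-gap, and the depth bookkeeping ($d+2$) and size ($O(NS)$) match the paper's.
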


\begin{proof}
Let $C$ be a circuit of depth $d$ and size $s$. \ Let $G$\ be the set of all
$\left\langle f_{1},\ldots,f_{n}\right\rangle $ on which $C$ \textit{succeeds}%
: that is, for which it outputs $z_{1},\ldots,z_{n}$, at least 75\% of which
satisfy $\left\vert \widehat{f}_{i}\left(  z_{i}\right)  \right\vert \geq1$
and at least 25\% of which satisfy $\left\vert \widehat{f}_{i}\left(
z_{i}\right)  \right\vert \geq2$. \ Suppose%
\[
\Pr_{\mathcal{U}^{n}}\left[  \left\langle f_{1},\ldots,f_{n}\right\rangle \in
G\right]  \geq0.99.
\]
Then by Lemma \ref{closetounif}, we also have%
\[
\Pr_{\mathcal{D}^{n}}\left[  \left\langle f_{1},\ldots,f_{n}\right\rangle \in
G\right]  \geq0.99-\frac{\left(  e-1\right)  n}{2\sqrt{2eN}}\geq0.98
\]
for sufficiently large $n$.

Using the above fact, we will convert $C$ into a new circuit $C^{\prime}%
$\ that solves the $\varepsilon$\textsc{-Bias Detection} problem of Corollary
\ref{ac0bias}, with $\varepsilon:=\frac{1}{2\sqrt{N}}$. \ This $C^{\prime}$
will have depth $d^{\prime}=d+2$\ and size $S^{\prime}=O\left(  NS\right)  $.
\ By Corollary \ref{ac0bias}, this will imply that $C$ itself must have had
size%
\begin{align*}
S  &  =\exp\left(  \Omega\left(  1/\varepsilon^{1/\left(  d^{\prime}+2\right)
}\right)  \right) \\
&  =\exp\left(  \Omega\left(  N^{1/\left(  2d+8\right)  }\right)  \right)  .
\end{align*}
Let $M=N^{2}n$, and let $R=r_{1}\ldots r_{M}\in\left\{  0,1\right\}  ^{M}$ be
a string of bits where each $r_{j}$\ is $1$ with independent probability $p$.
\ We want to decide whether $p=1/2$\ or $p=1/2+\varepsilon$---that is, whether
$R$ was drawn from $\mathcal{U}\left[  0\right]  $\ or $\mathcal{U}\left[
\varepsilon\right]  $. \ We can do this as follows.\ \ First, choose strings
$s_{1},\ldots,s_{n}\in\left\{  0,1\right\}  ^{n}$, bits $b_{1},\ldots,b_{n}%
\in\left\{  0,1\right\}  $, and an integer $k\in\left[  n\right]  $\ uniformly
at random. \ Next, define Boolean functions $f_{1},\ldots,f_{n}:\left\{
0,1\right\}  ^{n}\rightarrow\left\{  -1,1\right\}  $\ using the first $Nn$
bits of $R$, like so:%
\[
f_{i}\left(  x\right)  :=\left(  -1\right)  ^{r_{\left(  i-1\right)
N+x}+s_{i}\cdot x+b_{i}}.
\]
Finally, feed $\left\langle f_{1},\ldots,f_{n}\right\rangle $ as input to $C$,
and consider $z_{k}$, the $k^{th}$ output of $C$ (discarding the other
$n-1$\ outputs). \ We are interested in $\Pr\left[  z_{k}=s_{k}\right]
$,\ where the probability is over $R$, $s_{1},\ldots,s_{n}$, $b_{1}%
,\ldots,b_{n}$, and $k$.

If $p=1/2$, notice that $f_{1},\ldots,f_{n}$ are independent and uniformly
random regardless of $s_{1},\ldots,s_{n}$. \ So $C$ gets no information about
$s_{k}$, and $\Pr\left[  z_{k}=s_{k}\right]  =1/N$.

On the other hand, if $p=1/2+\varepsilon$, then each $f_{i}$\ is drawn
independently from the distribution $\mathcal{D}\left[  s\right]  $\ studied
in Lemma \ref{alicebob}. \ So by the Lemma, for every $i\in\left[  n\right]
$, if $\left\vert \widehat{f}_{i}\left(  z_{i}\right)  \right\vert \geq\beta
$\ then%
\[
\Pr_{f_{i}}\left[  z_{i}=s_{i}\right]  \geq\frac{e^{\beta}+e^{-\beta}}%
{2\sqrt{e}N}.
\]
So assuming $C$\ succeeds (that is, $\left\langle f_{1},\ldots,f_{n}%
\right\rangle \in G$), we have%
\[
\Pr_{f_{1},\ldots,f_{n},k}\left[  z_{k}=s_{k}\right]  \geq\frac{1}{4}\left(
\frac{e^{2}+e^{-2}}{2\sqrt{e}N}\right)  +\frac{1}{2}\left(  \frac{e^{1}%
+e^{-1}}{2\sqrt{e}N}\right)  \geq\frac{1.038}{N}.
\]
So for a \textit{random} $\left\langle f_{1},\ldots,f_{n}\right\rangle $ drawn
according to $\mathcal{D}^{n}$,%
\[
\Pr_{f_{1},\ldots,f_{n},k}\left[  z_{k}=s_{k}\right]  \geq0.98\left(
\frac{1.038}{N}\right)  \geq\frac{1.017}{N}.
\]
Notice that this is bounded above $1/N$\ by a multiplicative constant.

Now let us repeat the above experiment $N$ times. \ That is, for all
$j:=1$\ to $N$, we generate\ Boolean functions $f_{j1},\ldots,f_{jn}:\left\{
0,1\right\}  ^{n}\rightarrow\left\{  -1,1\right\}  $ by the same probabilistic
procedure as before, but each time using a new $Nn$-bit substring of $R_{j}$
of $R$, as well as new $s$, $b$, and $k$ values (denoted $s_{j1},\ldots
,s_{jn}$, $b_{j1},\ldots,b_{jn}$, and $k_{j}$). \ We then apply $C$\ to each
$n$-tuple $\left\langle f_{j1},\ldots,f_{jn}\right\rangle $. \ Let $z_{j}$\ be
the $k_{j}^{th}$\ string that $C$ outputs when run on $\left\langle
f_{j1},\ldots,f_{jn}\right\rangle $. \ Then by the above, for each
$j\in\left[  N\right]  $ we have%
\begin{align*}
p=\frac{1}{2}  &  \Longrightarrow\Pr\left[  z_{j}=s_{jk_{j}}\right]  =\frac
{1}{N},\\
p=\frac{1}{2}+\varepsilon &  \Longrightarrow\Pr\left[  z_{j}=s_{jk_{j}%
}\right]  \geq\frac{1.017}{N}.
\end{align*}
Furthermore, these probabilities are independent across the different $j$'s.
\ So let $E$ be the event that there \textit{exists} a $j\in\left[  N\right]
$\ such that $z_{j}=s_{jk_{j}}$. \ Then if $p=1/2$ we have%
\[
\Pr\left[  E\right]  =1-\left(  1-\frac{1}{N}\right)  ^{N}\approx1-\frac{1}%
{e}\leq0.633,
\]
while if $p=1/2+\varepsilon$\ we have%
\[
\Pr\left[  E\right]  \geq1-\left(  1-\frac{1.017}{N}\right)  ^{N}\geq0.638.
\]

It should now be clear how to create the circuit $C^{\prime}$, which
distinguishes\ $R\in\left\{  0,1\right\}  ^{M}$ drawn from $\mathcal{U}\left[
0\right]  $\ from $R$ drawn from $\mathcal{U}\left[  \varepsilon\right]
$\ with constant bias. \ For each $j\in\left[  N\right]  $, generate an
$n$-tuple of Boolean functions $\left\langle f_{j1},\ldots,f_{jn}\right\rangle
$ from $R$ and apply $C$ to it; then check whether there exists a $j\in\left[
N\right]  $ such that $z_{j}=s_{jk_{j}}$. \ This checking step can be done by
a depth-$2$ circuit of size $O\left(  Nn\right)  $. \ Therefore, $C^{\prime}$
will have depth $d^{\prime}=d+2$\ and size $s^{\prime}=O\left(  Ns\right)  $.
\ A technicality is that our choices of the\ $s_{ji}$'s, $b_{ji}$'s, and
$k_{j}$'s were made randomly. \ However, by Yao's principle, there clearly
\textit{exist} $s_{ji}$'s, $b_{ji}$'s, and $k_{j}$'s such that%
\[
\Pr_{\mathcal{U}\left[  \varepsilon\right]  }\left[  C^{\prime}\left(
R\right)  \right]  -\Pr_{\mathcal{U}\left[  0\right]  }\left[  C^{\prime
}\left(  R\right)  \right]  \geq0.638-0.633=0.005.
\]
So in forming $C^{\prime}$, we simply hardwire those choices.
\end{proof}

Combining Theorem \ref{fflb}\ with standard diagonalization tricks, we can now
prove an oracle separation (in fact, a \textit{random} oracle separation)
between the complexity classes $\mathsf{FBQP}$\ and $\mathsf{FBPP}%
^{\mathsf{PH}}$.

\begin{theorem}
\label{fbqpsep}$\mathsf{FBQP}^{A}\not \subset \mathsf{FBPP}^{\mathsf{PH}^{A}}$
with probability $1$ for a random oracle $A$.
\end{theorem}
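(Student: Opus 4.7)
The plan is to convert the constant-depth lower bound of Theorem~\ref{fflb} into a random-oracle separation via a Furst-Saxe-Sipser translation, coupled with a martingale-style conditional argument to absorb the internal randomness of $\mathsf{FBPP}^{\mathsf{PH}^A}$ machines. First, I encode a fresh \textsc{Fourier Fishing} instance at each input length $n$ by dedicating the $n\cdot 2^n$ oracle bits at a carefully chosen length $L_n := n + \lceil\log n\rceil$ to the truth tables of $n$ functions $f_1,\ldots,f_n:\{0,1\}^n\to\{-1,1\}$. These regions are disjoint across different~$n$, so a random $A$ induces independent uniformly random FF instances. Lemmas~\ref{ffinbqp} and~\ref{mostgood}, together with Borel-Cantelli, then give that with probability~$1$ over $A$ the instance at $0^n$ is good for all but finitely many~$n$, so the quantum algorithm \texttt{FF-ALG} witnesses membership of \textsc{Fourier Fishing} in $\mathsf{FBQP}^A$.

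For the lower bound, enumerate the countably many $\mathsf{FBPP}^{\mathsf{PH}}$ machines $M_1, M_2, \ldots$, each with time bound $n^{c_i}$ and PH level $\mathsf{\Sigma}_{k_i}^{\mathsf{P}}$. For each $M_i$, I select a subsequence $n_1 < n_2 < \cdots$ with $n_{k+1} > n_k^{c_i}$, which guarantees that $M_i$ on input $0^{n_j}$ (whose queries have length at most $n_j^{c_i}$) cannot touch any oracle bit at length $L_{n_k}$ for $k>j$. Let $E_{n_k}$ be the event $\{\Pr_r[M_i(0^{n_k},r)\text{ outputs a valid FF answer}]\geq 0.99\}$, and let $\mathcal{G}_k$ be the $\sigma$-algebra generated by all oracle bits outside the lengths $L_{n_k}, L_{n_{k+1}}, \ldots$. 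By construction, $E_{n_j}$ for $j<k$ is $\mathcal{G}_k$-measurable. Conditional on $\mathcal{G}_k$ and for each fixed random string $r$, the computation $M_i(\cdot,r)$ at $0^{n_k}$ becomes a deterministic $\mathsf{PH}^A$ computation whose sole remaining random input is the block of bits at $L_{n_k}$; the Furst-Saxe-Sipser translation turns it into an $\mathsf{AC}^0$ circuit of depth $O(k_i)$ and size $2^{\mathrm{poly}(n_k)}$ on that block. For large $n_k$ this is far below the $\exp(\Omega(N^{1/(2d+8)})) = \exp(\Omega(2^{n_k/(2d+8)}))$ size required by Theorem~\ref{fflb}, whose proof in fact rules out any success probability exceeding any chosen constant $p^* > 1/1.038 \approx 0.964$. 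Fixing, say, $p^* = 0.97$, averaging over $r$, and applying Markov's inequality yields $\Pr[E_{n_k}\mid\mathcal{G}_k] \leq p^*/0.99 =: c < 1$ almost surely.

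A telescoping argument then gives $\Pr[\bigcap_{j\leq k} E_{n_j}] = \mathbb{E}[\mathbf{1}_{\bigcap_{j<k} E_{n_j}}\cdot\Pr[E_{n_k}\mid\mathcal{G}_k]] \leq c\cdot\Pr[\bigcap_{j<k} E_{n_j}]$, so by induction $\Pr[\bigcap_{j=1}^k E_{n_j}] \leq c^k$, whence $\Pr[\bigcap_{k=1}^\infty E_{n_k}] = 0$. Any $M_i$ that computes \textsc{Fourier Fishing} in the $\mathsf{FBPP}^{\mathsf{PH}^A}$ sense must satisfy $E_{n_k}$ for all sufficiently large $k$, so the event of $M_i$ succeeding sits inside a countable union of probability-$0$ events. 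A final countable union over $i$ then shows that with probability~$1$ over $A$ no $\mathsf{FBPP}^{\mathsf{PH}^A}$ machine solves the problem, which is the desired separation.

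The main obstacle, as I see it, is bridging Theorem~\ref{fflb}'s input-distribution success threshold of $0.99$ with the $\mathsf{FBPP}$ semantic of $1-o(1)$ per-input success over internal randomness; the Markov-plus-$r$-averaging step handles this, but only after one revisits the proof of Theorem~\ref{fflb} and observes that it actually rules out any constant success $p^* > 1/1.038$, leaving enough slack for the Markov bound to be strictly less than~$1$. A secondary care-point is the disjoint-encoding choice of $L_n$ combined with the geometric spacing $n_{k+1} > n_k^{c_i}$, which are precisely what enable the conditional-independence structure supporting the telescoping and drive the joint probability to~$0$ rather than merely a nontrivial constant.
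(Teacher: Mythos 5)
Your proof is correct and follows essentially the same route as the paper's: encode independent \textsc{Fourier Fishing} instances at each length, use Lemmas~\ref{ffinbqp} and~\ref{mostgood} plus Borel--Cantelli for membership in $\mathsf{FBQP}^A$, and translate each $\mathsf{FBPP}^{\mathsf{PH}^A}$ machine into a bounded-depth circuit to which Theorem~\ref{fflb} and a Bennett--Gill-style conditional-independence argument apply. Where the paper disposes of the machine's internal randomness with a one-line appeal to ``Yao's principle,'' you make the step explicit via averaging over $r$ followed by Markov, and you make the conditional-independence structure explicit via geometrically spaced $n_k$ and the $\sigma$-algebras $\mathcal{G}_k$; both of these are the details that the paper's terse proof elides. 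One remark: it is not actually necessary to reopen the proof of Theorem~\ref{fflb} and push its threshold below $0.99$. Since the $\mathsf{FBPP}$ semantic requires success probability $1-o(1)$, any fixed constant threshold strictly between $0.99$ and $1$ (say $0.995$) is eventually attained whenever the machine solves the relation; defining $E_{n_k}$ with threshold $0.995$ and applying Theorem~\ref{fflb} as stated then yields $\Pr[E_{n_k}\mid\mathcal{G}_k]\leq 0.99/0.995<1$, sidestepping the $p^*>1/1.038$ observation entirely. Your route via a strengthened Theorem~\ref{fflb} is also valid (and your threshold computation checks out), but the simpler shift of the $E_{n_k}$ threshold avoids having to re-examine that proof.
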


\begin{proof}
We interpret the oracle $A$ as encoding $n$ random Boolean functions
$f_{n1},\ldots,f_{nn}:\left\{  0,1\right\}  ^{n}\rightarrow\left\{
-1,1\right\}  $ for each positive integer $n$. \ Let $R$\ be the relational
problem where we are given $0^{n}$\ as input, and \textit{succeed} if and only
if we output strings $z_{1},\ldots,z_{n}\in\left\{  0,1\right\}  ^{n}$, at
least $3/4$ of which satisfy $\left\vert \widehat{f}_{ni}\left(  z_{i}\right)
\right\vert \geq1$ and at least $1/4$\ of which satisfy $\left\vert
\widehat{f}_{ni}\left(  z_{i}\right)  \right\vert \geq2$. \ Then by Lemmas
\ref{ffinbqp} and \ref{mostgood}, there exists an $\mathsf{FBQP}^{A}$ machine
$M$ such that for all $n$,%
\[
\Pr\left[  M\left(  0^{n}\right)  ~\text{succeeds}\right]  \geq1-\frac{1}%
{\exp\left(  n\right)  },
\]
where the probability is over both $A$\ and the quantum randomness. \ Hence
$\Pr\left[  M\left(  0^{n}\right)  ~\text{succeeds}\right]  \geq
1-1/\exp\left(  n\right)  $\ on all but finitely many $n$, with probability
$1$ over $A$. \ Since we can simply hardwire the answers on the $n$'s for
which $M$ fails, it follows that $R\in\mathsf{FBQP}^{A}$ with probability $1$
over $A$.

On the other hand, let $M$\ be an\ $\mathsf{FBPP}^{\mathsf{PH}^{A}}$\ machine.
\ Then by the standard conversion between $\mathsf{PH}$\ and $\mathsf{AC}^{0}%
$, for every $n$ there exists a probabilistic $\mathsf{AC}^{0}$\ circuit
$C_{M,n}$, of size $2^{\operatorname*{poly}\left(  n\right)  }%
=2^{\operatorname*{polylog}\left(  N\right)  }$, that takes $A$ as input and
simulates $M\left(  0^{n}\right)  $. \ By Yao's principle, we can assume
without loss of generality that $C_{M,n}$\ is deterministic, since the oracle
$A$ is already random. \ Then by Theorem \ref{fflb},%
\[
\Pr_{A}\left[  C_{M,n}\text{ succeeds}\right]  <0.99
\]
for all sufficiently large $n$. \ By the independence of the\ $f_{ni}$'s, this
is true even if we condition on $C_{M,1},\ldots,C_{M,n-1}$ succeeding. \ So as
in the standard random oracle argument of Bennett and Gill \cite{bg}, for
every fixed $M$ we have%
\[
\Pr_{A}\left[  C_{M,1},C_{M,2},C_{M,3},\ldots\text{ succeed}\right]  =0.
\]
So by the union bound,%
\[
\Pr_{A}\left[  \exists M:C_{M,1},C_{M,2},C_{M,3},\ldots\text{ succeed}\right]
=0
\]
as well. \ It follows that $\mathsf{FBQP}^{A}\not \subset \mathsf{FBPP}%
^{\mathsf{PH}^{A}}$\ with probability $1$ over $A$.
\end{proof}

If we \textquotedblleft scale down by an exponential,\textquotedblright\ then
we can eliminate the need for the oracle $A$, and get a relation problem that
is solvable in quantum \textit{logarithmic} time but not in $\mathsf{AC}^{0}$.

\begin{theorem}
\label{logtime}There exists a relation problem solvable in $\mathsf{BQLOGTIME}%
$\ but not in $\mathsf{AC}^{0}$.
\end{theorem}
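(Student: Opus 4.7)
The plan is to ``scale down by an exponential'' the separation of Theorem \ref{fbqpsep}: reinterpret what was an oracle of size $2^{\Theta(n)}$ as an ordinary $N$-bit input with $N=2^n$, so that an oracle-BQP machine making $n$ queries becomes a quantum circuit of $O(\log N)$ gates with random-access input queries. Concretely, I would use a single-function variant of \textsc{Fourier Fishing}: the input is the truth table of one Boolean function $f:\{0,1\}^n\to\{-1,1\}$ (so the input length is $N=2^n$), and the task is to output a string $z\in\{0,1\}^n$ whose Fourier coefficient $|\widehat f(z)|$ exceeds a fixed threshold, with correctness measured over a uniformly random $f$; the threshold is chosen so that one iteration of \texttt{FF-ALG} succeeds with probability bounded above by a constant strictly larger than what any ``trivial'' outputter achieves.

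For the BQLOGTIME upper bound, run a single iteration of \texttt{FF-ALG} on $f$: apply $n$ Hadamard gates to prepare $\frac{1}{\sqrt N}\sum_x|x\rangle$, make one random-access query to read $f(x)$ in superposition (phase-kicked into $\frac{1}{\sqrt N}\sum_x f(x)|x\rangle$), then apply $n$ more Hadamards and measure. Since $n=\log N$, this uses $O(\log N)$ gates, is trivially LOGTIME-uniform, and therefore lies in $\mathsf{BQLOGTIME}$; the success analysis is exactly Lemmas \ref{ffinbqp} and \ref{mostgood} specialized to one function.

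For the AC$^0$ lower bound, I would adapt the proof of Theorem \ref{fflb}. Suppose for contradiction a polynomial-size depth-$d$ AC$^0$ circuit $C$ solves the problem on a $0.99$ fraction of uniformly random $f$; then by Lemma \ref{closetounif} it still succeeds on a $\ge 0.98$ fraction of $f$ drawn from $\mathcal D=\mathbb{E}_s\mathcal D[s]$. Feed $C$ a function built from one slab of a bias-detection input $R\in\{0,1\}^M$ with a hardwired random secret $s$ and sign $b$, exactly as in Lemma \ref{alicebob}; conditioned on $C$ succeeding, we have $\Pr[z=s]=1/N$ in the uniform case and $\Pr[z=s]\ge(1+\delta)/N$ in the biased case, for some absolute $\delta>0$. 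Repeating over $\Theta(N)$ independent slabs of $R$ and feeding the $N$ resulting equality tests into a constant-depth \textsc{Approximate Majority} circuit (Viola \cite{viola:maj}) yields a depth-$(d+O(1))$ circuit $C'$ of size $\operatorname{poly}(N)\cdot|C|$ that distinguishes $\mathcal U[0]$ from $\mathcal U[1/(2\sqrt N)]$ with constant bias. Corollary \ref{ac0bias} then forces $|C|\ge \exp(\Omega(N^{1/(2d+O(1))}))$, which is super-polynomial in $N$ for any constant $d$, contradicting $|C|=\operatorname{poly}(N)$.

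The main obstacle is that the original Theorem \ref{fflb} derives its amplification from $n$ parallel independent functions, whereas in the single-function BQLOGTIME setting the amplification must be entirely externalized into the bias-detection reduction. The two points that require the most care are (i) keeping the extra depth of the reduction at $O(1)$ despite having to aggregate $\Theta(N)$ independent equality tests, which is precisely where \textsc{Approximate Majority} is essential, and (ii) hardwiring the random secrets $s$, signs $b$, and slab indices via Yao's principle so that $C'$ is a deterministic AC$^0$ circuit meeting the hypothesis of Corollary \ref{ac0bias}. Once these are checked, the scale-down argument gives the theorem.
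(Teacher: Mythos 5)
Your plan diverges from the paper's in one structural way: you propose a single-function variant of \textsc{Fourier Fishing}, whereas the paper's proof of Theorem~\ref{logtime} takes the $n$-function \textsc{Promise Fourier Fishing} problem on inputs of size $M=2^n n$ and applies Lemmas~\ref{ffinbqp}, \ref{mostgood} and Theorem~\ref{fflb} essentially verbatim. That change matters, and the gap it opens is on the upper-bound side, not the lower bound you were worried about.

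Recall (Section~\ref{CC}) that the paper defines $\mathsf{FBQP}$, and by extension $\mathsf{FBQLOGTIME}$, to require success probability $1-o(1)$ --- a deliberate convention, because relation problems do not admit generic amplification. A single run of \texttt{FF-ALG} on one $f$ outputs a $z$ with $|\widehat f(z)|\geq 1$ only with probability $\approx 0.8$, and with $|\widehat f(z)|\geq 2$ only with probability $\approx 0.26$; both are constants strictly below $1$, so your relation with any fixed constant threshold is not in $\mathsf{FBQLOGTIME}$ under the paper's definition. This is also why your appeal to ``Lemmas~\ref{ffinbqp} and \ref{mostgood} specialized to one function'' cannot deliver $1-o(1)$: the $1/\exp(n)$ error bound in those lemmas is a Chernoff bound over the $n$ independent $f_i$'s, and it vanishes when $n=1$. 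Nor can you rescue the upper bound by taking the threshold $t=o(1)$: that does push the quantum success probability up to $1-o(1)$, but it simultaneously kills the lower-bound engine, because Lemma~\ref{alicebob} only gives $\Pr[s=z]\geq (e^{t}+e^{-t})/(2\sqrt{e}N)$, which falls \emph{below} the uniform baseline $1/N$ once $t\lesssim 1.08$, so the $\varepsilon$\textsc{-Bias Detection} reduction has no advantage to amplify. There is no threshold that simultaneously achieves both sides. The $n$-function structure of the paper's problem exists precisely to resolve this tension: the per-coordinate thresholds stay at the constants $1$ and $2$ that Lemma~\ref{alicebob} needs, while the Chernoff argument across $n$ independent coordinates drives the quantum success probability to $1-1/\exp(n)=1-1/M^{\Omega(1)}$. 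Once you use the $n$-function version, both directions of Theorem~\ref{logtime} follow directly from results already in hand, which is exactly how the paper closes the argument.
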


\begin{proof}
In our relation problem $R$, the input (of size $M=2^{n}n$) will encode the
truth tables of $n$ Boolean functions, $f_{1},\ldots,f_{n}:\left\{
0,1\right\}  ^{n}\rightarrow\left\{  -1,1\right\}  $, which are promised to be
\textquotedblleft good\textquotedblright\ as defined in Section \ref{PROBLEMS}%
. \ The task is to solve \textsc{Promise Fourier Fishing} on $\left\langle
f_{1},\ldots,f_{n}\right\rangle $.

By Lemma \ref{ffinbqp}, there exists a quantum algorithm that runs in
$O\left(  n\right)  =O\left(  \log M\right)  $\ time, making random accesses
to the truth tables of $f_{1},\ldots,f_{n}$, that solves $R$ with probability
$1-1/\exp\left(  n\right)  =1-1/M^{\Omega\left(  1\right)  }$.

On the other hand, suppose $R$ is in $\mathsf{AC}^{0}$. \ Then we get a
nonuniform circuit family $\left\{  C_{n}\right\}  _{n}$, of depth $O\left(
1\right)  $\ and size $\operatorname*{poly}\left(  M\right)  =2^{O\left(
n\right)  }$,\ that solves \textsc{Fourier Fishing}\ on all tuples
$\left\langle f_{1},\ldots,f_{n}\right\rangle $\ that are good. \ Recall that
by Lemma \ref{mostgood}, a $1-1/\exp\left(  n\right)  $\ fraction of
$\left\langle f_{1},\ldots,f_{n}\right\rangle $'s are good. \ Therefore
$\left\{  C_{n}\right\}  _{n}$\ actually solves \textsc{Fourier Fishing}\ with
probability\ $1-1/\exp\left(  n\right)  $\ on $\left\langle f_{1},\ldots
,f_{n}\right\rangle $\ chosen uniformly at random. \ But this contradicts
Theorem \ref{fflb}.

Hence $R\in\mathsf{FBQLOGTIME}\setminus\mathsf{FAC}^{0}$ (where
$\mathsf{FBQLOGTIME}$\ and $\mathsf{FAC}^{0}$\ are the relation versions of
$\mathsf{BQLOGTIME}$\ and $\mathsf{AC}^{0}$\ respectively).
\end{proof}

\section{The Classical Complexity of \textsc{Fourier Checking}\label{CCFC}}

Section \ref{RELATIONAL}\ settled the relativized $\mathsf{BQP}$\ versus
$\mathsf{PH}$ question, if we are willing to talk about relation problems.
\ Ultimately, though, we also care about decision problems. \ So in this
section we consider the \textsc{Fourier Checking}\ problem, of deciding
whether two Boolean functions $f,g$\ are independent or forrelated. \ In
Section \ref{FCALG}, we saw that \textsc{Fourier Checking}\ has quantum query
complexity $O\left(  1\right)  $. \ What is its classical query
complexity?\footnote{So long as we consider the distributional version of
\textsc{Fourier Checking}, the deterministic and randomized query complexities
are the same (by Yao's principle).}

It is not hard to give a classical algorithm that solves \textsc{Fourier
Checking} using $O\left(  \sqrt{N}\right)  =O\left(  2^{n/2}\right)
$\ queries. \ The algorithm is as follows:\ for some $K=\Theta\left(  \sqrt
{N}\right)  $, first choose sets $X=\left\{  x_{1},\ldots,x_{K}\right\}  $ and
$Y=\left\{  y_{1},\ldots,y_{K}\right\}  $\ of $n$-bit strings uniformly at
random. \ Then query $f\left(  x_{i}\right)  $\ and $g\left(  y_{i}\right)  $
for all $i\in\left[  K\right]  $. \ Finally, compute%
\[
Z:=\sum_{i,j=1}^{K}f\left(  x_{i}\right)  \left(  -1\right)  ^{x_{i}\cdot
y_{j}}g\left(  y_{j}\right)  ,
\]
accept if $\left\vert Z\right\vert $ is greater than some cutoff $cK$, and
reject otherwise. \ For suitable $K$\ and $c$, one can show that this
algorithm accepts a forrelated $\left\langle f,g\right\rangle $\ pair with
probability at least $2/3$, and accepts a random $\left\langle
f,g\right\rangle $\ pair with probability at least $1/3$. \ We omit the
details of the analysis, as they are tedious and not needed elsewhere in the paper.

In the next section, we will show that \textsc{Fourier Checking}\ has a
property called \textit{almost }$k$\textit{-wise independence}, which
immediately implies a lower bound of $\Omega\left(  \sqrt[4]{N}\right)
=\Omega\left(  2^{n/4}\right)  $\ on its classical query complexity (as well
as exponential lower bounds on its $\mathsf{MA}$, $\mathsf{BPP}_{\mathsf{path}%
}$, and $\mathsf{SZK}$\ query complexities). \ Indeed, we conjecture that
almost $k$-wise independence is enough to imply that \textsc{Fourier
Checking}\ is not in $\mathsf{PH}$. \ We discuss the status of that conjecture
in Section \ref{GLN}.

\subsection{Almost $k$-Wise Independence\label{INDEP}}

Let $Z=z_{1}\ldots z_{M}\in\left\{  -1,1\right\}  ^{M}$ be a string. \ Then a
\textit{literal} is a term of the form $\frac{1\pm z_{i}}{2}$, and a
$k$-\textit{term }is a product of $k$ literals (each involving a different
$z_{i}$), which is $1$ if the literals all take on prescribed values and $0$ otherwise.

Let $\mathcal{U}$\ be the uniform distribution over $\left\{  -1,1\right\}
^{M}$. \ The following definition will play a major role in this work.

\begin{definition}
A distribution $\mathcal{D}$ over $\left\{  -1,1\right\}  ^{M}$ is
$\varepsilon$\textit{-almost }$k$\textit{-wise independent} if for every
$k$-term $C$,%
\[
1-\varepsilon\leq\frac{\Pr_{\mathcal{D}}\left[  C\right]  }{\Pr_{\mathcal{U}%
}\left[  C\right]  }\leq1+\varepsilon.
\]
(Note that $\Pr_{\mathcal{U}}\left[  C\right]  $\ is just $2^{-k}$.)
\end{definition}

Now let $M=2^{n+1}=2N$, and let $\mathcal{F}$\ be the forrelated distribution
over pairs of Boolean functions $f,g:\left\{  0,1\right\}  ^{n}\rightarrow
\left\{  -1,1\right\}  $. \ That is, we sample $\left\langle f,g\right\rangle
\in\mathcal{F}$\ by first choosing a vector $v=\left(  v_{x}\right)
_{x\in\left\{  -1,1\right\}  ^{n}}\in\mathbb{R}^{N}$ of independent
$\mathcal{N}\left(  0,1\right)  $\ Gaussians, then setting $f\left(  x\right)
:=\operatorname*{sgn}\left(  v_{x}\right)  $\ for all $x$ and $g\left(
y\right)  :=\operatorname*{sgn}\left(  \widehat{v}_{y}\right)  $ for all $y$.

\begin{theorem}
\label{fcindep}For all $k\leq\sqrt[4]{N}$, the forrelated distribution
$\mathcal{F}$ is $O\left(  k^{2}/\sqrt{N}\right)  $-almost $k$-wise independent.
\end{theorem}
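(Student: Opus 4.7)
Fix a $k$-term $C$ specifying prescribed values $f(x_1) = \epsilon_1, \ldots, f(x_a) = \epsilon_a$ and $g(y_1) = \delta_1, \ldots, g(y_b) = \delta_b$ with $a + b = k$. My plan is to recast $\Pr_{\mathcal{F}}[C]$ as an orthant probability of a $k$-dimensional Gaussian vector whose covariance matrix is a small perturbation of the identity, and then to compare this orthant probability to the isotropic value $2^{-k}$ via a first-order expansion in the perturbation.

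Under $\mathcal{F}$, the event $C$ is precisely the event that the jointly Gaussian vector $u := (v_{x_1}, \ldots, v_{x_a}, \widehat{v}_{y_1}, \ldots, \widehat{v}_{y_b})$ lies in the orthant $Q := \{u \in \mathbb{R}^k : \epsilon_i u_i > 0 \text{ for } i \leq a,\ \delta_j u_{a+j} > 0 \text{ for } j \leq b\}$. Because the Hadamard transform is orthogonal and $v$ has i.i.d.\ standard Gaussian entries, each coordinate of $u$ is marginally $\mathcal{N}(0,1)$, the $v_{x_i}$'s are mutually independent of each other, and so are the $\widehat{v}_{y_j}$'s. The only nonzero cross-covariances are $\mathrm{Cov}(v_{x_i}, \widehat{v}_{y_j}) = (-1)^{x_i \cdot y_j}/\sqrt{N}$. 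Hence $u$ has covariance $\Sigma = I + E$, where $E$ has zero diagonal, is block anti-diagonal with $ab$ entries of magnitude $1/\sqrt{N}$ in each off-diagonal block, and satisfies $\|E\|_F^2 = 2ab/N \leq k^2/(2N)$ as well as $\|E\| \leq k/\sqrt{N} \leq N^{-1/4}$. Under $\mathcal{U}$, the corresponding event is an orthant event for the isotropic Gaussian, so $\Pr_{\mathcal{U}}[C] = 2^{-k}$. Thus the theorem reduces to showing that $\Pr_{\mathcal{N}(0,\, I+E)}[u \in Q] = 2^{-k}\bigl(1 \pm O(k^2/\sqrt{N})\bigr)$.

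To establish this I would interpolate: set $\Sigma_t := I + tE$ and $p(t) := \Pr_{\mathcal{N}(0,\, \Sigma_t)}[u \in Q]$ for $t \in [0, 1]$, so that $p(0) = 2^{-k}$ and $p(1)$ is the quantity of interest. By Plackett's identity, for each $i \neq j$, the partial derivative $\partial p/\partial \Sigma_{ij}$ at $\Sigma_t$ equals the joint density of $(u_i, u_j)$ at the origin (at most $1/(2\pi)$) times the conditional probability that the remaining $k - 2$ coordinates lie in their halfspaces, given $u_i = u_j = 0$. Combining with $p'(t) = \sum_{i \neq j} E_{ij} \cdot \partial p/\partial \Sigma_{ij}$ and using that only the $2ab \leq k^2/2$ off-diagonal entries of $E$ are nonzero, each of magnitude $1/\sqrt{N}$, one gets $|p'(t)| \leq (k^2/(2\sqrt{N})) \cdot (1/(2\pi)) \cdot \max_{ij}(\text{conditional orthant probability})$. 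If each such conditional probability can be bounded by a constant multiple of $2^{-(k-2)}$, then integrating over $t \in [0, 1]$ delivers $|p(1) - 2^{-k}| = O(k^2/\sqrt{N}) \cdot 2^{-k}$, which is exactly $O(k^2/\sqrt{N})$-almost $k$-wise independence.

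The main obstacle will be controlling the conditional orthant probability $\Pr[u_l \in \text{halfspace}_l \text{ for all } l \neq i, j \mid u_i = u_j = 0]$ uniformly in $t \in [0, 1]$. Conditioning on two coordinates of a $k$-dimensional centered Gaussian produces a new $(k-2)$-dimensional centered Gaussian whose covariance is $\Sigma_t$ minus a rank-$\leq 2$ Schur complement correction; a direct check shows this is again of the form $I + E'$ with $\|E'\| = O(k/\sqrt{N})$, so an induction on $k$ (with trivial base case $k \leq 2$) yields the bound $2^{-(k-2)}(1 + O(k^2/\sqrt{N}))$. The hypothesis $k \leq N^{1/4}$ keeps $\|E\|$ safely below $N^{-1/4}$, ensuring positive definiteness of $\Sigma_t$ throughout the interpolation and validity of all the perturbative expansions.
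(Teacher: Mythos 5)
Your proof is correct and takes a genuinely different technical route from the paper's. Both arguments reduce to a perturbative comparison between the forrelated and isotropic Gaussians, extracting the $k^2/\sqrt{N}$ error from the $\Theta(k^2)$ cross-covariances of magnitude $1/\sqrt{N}$ between the $v_{x_i}$ and the $\widehat{v}_{y_j}$, but the mechanism differs. The paper works in $\mathbb{R}^{N}$: it identifies $\mathcal{F}'(S)$ with the Gaussian measure of an affine subspace $T$, computes $\Delta_T = \min_{F\in T}\|F\|^2$ by solving the linear system $Au = w$ with the Gram matrix $A$ of the constraint functions, expands $A^{-1} = I + B + B^2 + \cdots$, and then integrates the resulting pointwise density-ratio estimate over the orthant. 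You instead work directly with the $k$-dimensional marginal vector $u$, whose covariance $\Sigma = I + E$ is in fact the very same matrix $A$, and interpolate $\Sigma_t = I + tE$, bounding $p'(t)$ via Plackett's identity. Your version never invokes the $\mathbb{R}^{N}$ geometry or the quadratic-optimization step and so is somewhat more transparent; it also generalizes immediately to any jointly Gaussian family with near-identity covariance, making explicit how the bound scales with the off-diagonal perturbation.

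One small repair: the appeal to induction for the conditional $(k-2)$-dimensional orthant probability is unnecessary and, as sketched, is not quite a clean recursion, because the Schur-complement perturbation $E'$ acquires $O(1/N)$ entries in \emph{every} block (not just the $f$-$g$ cross blocks), so the sparse-support structure your $|p'(t)|$ bound leans on is not literally inherited. Fortunately you only need a crude upper bound: for any positive definite $\Sigma' = I + E'$ and any orthant $Q'$ in $\mathbb{R}^{k-2}$,
\[
\Pr_{\mathcal{N}(0,\Sigma')}\left[Q'\right] \;\leq\; \left(\frac{1 + \|E'\|}{1 - \|E'\|}\right)^{(k-2)/2} 2^{-(k-2)},
\]
since the density is pointwise at most $(2\pi)^{-(k-2)/2}(\det\Sigma')^{-1/2}\exp\left(-\tfrac{\|x\|^2}{2(1+\|E'\|)}\right)$ and $\det\Sigma' \geq (1-\|E'\|)^{k-2}$. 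With $\|E'\| = O(k/\sqrt{N})$ and $k \leq N^{1/4}$, this is $e^{O(k^2/\sqrt{N})}\cdot 2^{-(k-2)} = O(2^{-(k-2)})$, which is all the derivative bound requires.
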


\begin{proof}
As a first step, we will prove an analogous statement for the real-valued
functions $F\left(  x\right)  :=v_{x}$\ and $G\left(  y\right)  :=\widehat
{v}_{y}$; then we will generalize to the discrete versions $f\left(  x\right)
$\ and $g\left(  y\right)  $. \ Let $\mathcal{U}^{\prime}$\ be the probability
measure over $\left\langle F,G\right\rangle $\ that corresponds to case (i) of
\textsc{Fourier Checking}: that is, we choose each $F\left(  x\right)  $\ and
$G\left(  y\right)  $\ independently from the Gaussian measure $\mathcal{N}%
\left(  0,1\right)  $. \ Let $\mathcal{F}^{\prime}$\ be the probability
measure over $\left\langle F,G\right\rangle $\ that corresponds to case (ii)
of \textsc{Fourier Checking}: that is, we choose each $F\left(  x\right)
$\ independently from $\mathcal{N}\left(  0,1\right)  $, then set $G\left(
y\right)  :=\widehat{F}\left(  y\right)  $\ where%
\[
\widehat{F}\left(  y\right)  =\frac{1}{\sqrt{N}}\sum_{x\in\left\{
0,1\right\}  ^{n}}\left(  -1\right)  ^{x\cdot y}F\left(  x\right)
\]
is the Fourier transform of $F$. \ Observe that since the Fourier transform is
unitary, $G$ has the same marginal distribution as $F$ under $\mathcal{F}%
^{\prime}$: namely, a product of independent $\mathcal{N}\left(  0,1\right)  $\ Gaussians.

Fix inputs $x_{1},\ldots,x_{K}\in\left\{  0,1\right\}  ^{n}$\ of $F$ and
$y_{1},\ldots,y_{L}\in\left\{  0,1\right\}  ^{n}$\ of $G$, for some $K,L\leq
N^{1/4}$. \ Then given constants $a_{1},\ldots,a_{K},b_{1},\ldots,b_{L}%
\in\mathbb{R}$, let $S$ be the set of all $\left\langle F,G\right\rangle
$\ that satisfy the $K+L$ equations
\begin{align}
F\left(  x_{i}\right)   &  =a_{i}~\text{for all }1\leq i\leq K\text{,}%
\label{fg}\\
G\left(  y_{j}\right)   &  =b_{j}~\text{for all }1\leq j\leq L\text{.}%
\nonumber
\end{align}
Clearly $S$ is a $\left(  2N-K-L\right)  $-dimensional affine subspace of
$\mathbb{R}^{2N}$. \ The \textit{measure} of $S$, under some probability
measure $\mu$ on $\mathbb{R}^{2N}$, is defined in the usual way as%
\[
\mu\left(  S\right)  :=\int_{\left\langle F,G\right\rangle \in S}\mu\left(
F,G\right)  d\left\langle F,G\right\rangle .
\]
Now let%
\[
\Delta_{S}:=a_{1}^{2}+\cdots+a_{K}^{2}+b_{1}^{2}+\cdots+b_{L}^{2}%
\]
be the squared distance between $S$ and the origin (that is, the minimum
squared $2$-norm of any point in $S$). \ Then by the spherical symmetry of the
Gaussian measure, it is not hard to see that $S$ has measure%
\[
\mathcal{U}^{\prime}\left(  S\right)  =\frac{e^{-\Delta_{S}/2}}{\sqrt{2\pi
}^{K+L}}%
\]
under $\mathcal{U}^{\prime}$. \ Our key claim is that%
\[
1-O\left(  \frac{\left(  K+L\right)  \Delta_{S}}{\sqrt{N}}\right)  \leq
\frac{\mathcal{F}^{\prime}\left(  S\right)  }{\mathcal{U}^{\prime}\left(
S\right)  }\leq1+O\left(  \frac{\left(  K+L\right)  \Delta_{S}}{\sqrt{N}%
}\right)  .
\]

To prove this claim: recall that the probability measure over $F$ induced by
$\mathcal{F}^{\prime}$ is just a spherical Gaussian $\mathcal{G}$\ on
$\mathbb{R}^{N}$, and that $G=\widehat{F}$ uniquely determines $F$ and vice
versa. \ So consider the $\left(  N-K-L\right)  $-dimensional affine subspace
$T$ of $\mathbb{R}^{N}$\ defined by the $K+L$\ equations%
\begin{align*}
F\left(  x_{i}\right)   &  =a_{i}~\text{for all }1\leq i\leq K\text{,}\\
\widehat{F}\left(  y_{j}\right)   &  =b_{j}~\text{for all }1\leq j\leq
L\text{.}%
\end{align*}
Then $\mathcal{F}^{\prime}\left(  S\right)  =\mathcal{G}\left(  T\right)  $:
that is, to compute\ how much measure $\mathcal{F}^{\prime}$\ assigns to $S$,
it suffices to compute how much measure $\mathcal{G}$\ assigns to $T$. \ We
have%
\[
\mathcal{G}\left(  T\right)  =\frac{e^{-\Delta_{T}/2}}{\sqrt{2\pi}^{K+L}},
\]
where $\Delta_{T}$\ is the squared Euclidean distance between $T$\ and the
origin. \ Thus, our problem reduces to minimizing%
\[
\Delta_{F}:=\sum_{x\in\left\{  0,1\right\}  ^{n}}F\left(  x\right)  ^{2}%
\]
over all $F\in T$. \ By a standard fact about quadratic optimization, the
minimal $F\in T$ will have the form%
\[
F\left(  x\right)  =\alpha_{1}E_{1}\left(  x\right)  +\cdots+\alpha_{K}%
E_{K}\left(  x\right)  +\beta_{1}\chi_{1}\left(  x\right)  +\cdots+\beta
_{L}\chi_{L}\left(  x\right)
\]
where%
\[
E_{i}\left(  x\right)  :=\left\{
\begin{array}
[c]{cc}%
1 & \text{if }x=x_{i}\\
0 & \text{otherwise}%
\end{array}
\right.
\]
is an indicator function, and%
\[
\chi_{j}\left(  x\right)  :=\frac{\left(  -1\right)  ^{x\cdot y_{j}}}{\sqrt
{N}}%
\]
is the $y_{j}^{th}$\ Fourier character evaluated at $x$. \ Furthermore, the
coefficients $\left\{  \alpha_{i}\right\}  _{i\in\left[  K\right]  },\left\{
\beta_{j}\right\}  _{j\in\left[  L\right]  }$ can be obtained by solving the
linear system%
\[
\underset{A}{\underbrace{\left(
\begin{array}
[c]{cccccc}%
1 & 0 & 0 & \pm1/\sqrt{N} & \cdots & \pm1/\sqrt{N}\\
0 & \ddots & 0 & \vdots & \ddots & \vdots\\
0 & 0 & 1 & \pm1/\sqrt{N} & \cdots & \pm1/\sqrt{N}\\
\pm1/\sqrt{N} & \cdots & \pm1/\sqrt{N} & 1 & 0 & 0\\
\vdots & \ddots & \vdots & 0 & \ddots & 0\\
\pm1/\sqrt{N} & \cdots & \pm1/\sqrt{N} & 0 & 0 & 1
\end{array}
\right)  }}\underset{u}{\underbrace{\left(
\begin{array}
[c]{c}%
\alpha_{1}\\
\vdots\\
\alpha_{K}\\
\beta_{1}\\
\vdots\\
\beta_{L}%
\end{array}
\right)  }}=\underset{w}{\underbrace{\left(
\begin{array}
[c]{c}%
a_{1}\\
\vdots\\
a_{K}\\
b_{1}\\
\vdots\\
b_{L}%
\end{array}
\right)  }}%
\]
Here $A$ is simply a matrix of covariances: the top left block records the
inner product between each $E_{i}$\ and $E_{j}$ (and hence is a $K\times
K$\ identity matrix), the bottom right block records the inner product between
each $\chi_{i}$\ and $\chi_{j}$ (and hence is an $L\times L$\ identity
matrix), and the remaining two blocks of size $K\times L$\ record the inner
product between each $E_{i}$\ and $\chi_{j}$.

Thus, to get the vector of coefficients $u\in\mathbb{R}^{K+L}$, we simply need
to calculate $A^{-1}w$. \ Define $B:=I-A$. \ Then by Taylor series expansion,%
\[
A^{-1}=\left(  I-B\right)  ^{-1}=I+B+B^{2}+B^{3}+\cdots
\]
Notice that every entry of $B$ is at most $1/\sqrt{N}$\ in absolute value.
\ This means that, for all positive integers $t$, every entry of $B^{t}$\ is
at most%
\[
\frac{\left(  K+L\right)  ^{t-1}}{N^{t/2}}%
\]
in absolute value. \ Since $K+L\ll\sqrt{N}$, this in turn means that every
entry of $I-A^{-1}$\ has absolute value $O\left(  1/\sqrt{N}\right)  $. \ So
$A^{-1}$\ is exponentially close to the identity matrix. \ Hence, when we
compute the vector $u=A^{-1}w$, we find that%
\begin{align*}
\alpha_{i} &  =a_{i}+\varepsilon_{i}~\text{for all }1\leq i\leq K\text{,}\\
\beta_{j} &  =b_{j}+\delta_{j}~\text{for all }1\leq j\leq L\text{,}%
\end{align*}
for some small error terms $\varepsilon_{i}$\ and $\delta_{j}$.
\ Specifically, each $\varepsilon_{i}$\ and $\delta_{j}$\ is the inner product
of $w$, a $\left(  K+L\right)  $-dimensional vector of length $\sqrt
{\Delta_{S}}$, with a vector every entry of which has absolute value $O\left(
1/\sqrt{N}\right)  $. \ By Cauchy-Schwarz, this implies that%
\[
\left\vert \varepsilon_{i}\right\vert ,\left\vert \delta_{j}\right\vert
=O\left(  \frac{\sqrt{\left(  K+L\right)  \Delta_{S}}}{\sqrt{N}}\right)
\]
for all $i,j$. \ So%
\begin{align*}
\Delta_{T} &  =\min_{F\in T}\sum_{x\in\left\{  0,1\right\}  ^{n}}F\left(
x\right)  ^{2}\\
&  =\sum_{i=1}^{K}\alpha_{i}^{2}+\sum_{j=1}^{L}\beta_{j}^{2}+2\sum_{i=1}%
^{K}\sum_{j=1}^{L}\frac{\alpha_{i}\beta_{j}}{\sqrt{N}}\\
&  =\sum_{i=1}^{K}\left(  a_{i}+\varepsilon_{i}\right)  ^{2}+\sum_{j=1}%
^{L}\left(  b_{j}+\delta_{j}\right)  ^{2}+2\sum_{i=1}^{K}\sum_{j=1}^{L}%
\frac{\left(  a_{i}+\varepsilon_{i}\right)  \left(  b_{j}+\delta_{j}\right)
}{\sqrt{N}}\\
&  =\Delta_{S}\pm O\left(  \frac{\left(  K+L\right)  \Delta_{S}}{\sqrt{N}%
}+\frac{\left(  K+L\right)  ^{2}\Delta_{S}}{N}+\frac{\left(  K+L\right)
^{3}\Delta_{S}}{N^{3/2}}\right)  \\
&  =\Delta_{S}\left(  1\pm O\left(  \frac{K+L}{\sqrt{N}}\right)  \right)  ,
\end{align*}
where the fourth line made repeated use of Cauchy-Schwarz, and the fifth line
used the fact that $K+L\ll\sqrt{N}$. \ Hence%
\begin{align*}
\frac{\mathcal{F}^{\prime}\left(  S\right)  }{\mathcal{U}^{\prime}\left(
S\right)  } &  =\frac{e^{-\Delta_{T}/2}/\sqrt{2\pi}^{K+L}}{e^{-\Delta_{S}%
/2}/\sqrt{2\pi}^{K+L}}\\
&  =\exp\left(  \frac{\Delta_{S}-\Delta_{T}}{2}\right)  \\
&  =\exp\left(  \pm O\left(  \frac{\left(  K+L\right)  \Delta_{S}}{\sqrt{N}%
}\right)  \right)  \\
&  =1\pm O\left(  \frac{\left(  K+L\right)  \Delta_{S}}{\sqrt{N}}\right)
\end{align*}
which proves the claim.

To prove the theorem, we now need to generalize to the discrete functions
$f$\ and $g$. \ Here we are given a term $C$ that is a conjunction of $K+L$
inequalities: $K$ of the form $F\left(  x_{i}\right)  \leq0$ or $F\left(
x_{i}\right)  \geq0$, and $L$ of the form $G\left(  y_{j}\right)  \leq0$ or
$G\left(  y_{j}\right)  \geq0$. \ If we fix $x_{1},\ldots,x_{K}$ and
$y_{1},\ldots,y_{L}$, we can think of $C$ as just a convex region of
$\mathbb{R}^{K+L}$. \ Then given an affine subspace $S$ as defined by equation
(\ref{fg}), we will (abusing notation) write $S\in C$\ if the vector $\left(
\alpha_{1},\ldots,\alpha_{K},\beta_{1},\ldots,\beta_{L}\right)  $\ is in $C$:
that is, if $S$ is compatible with the $K+L$\ inequalities that define $C$.
\ We need to show that the ratio $\Pr_{\mathcal{F}}\left[  C\right]
/\Pr_{\mathcal{U}}\left[  C\right]  $\ is close to $1$. \ We can do so using
the previous result, as follows:%
\begin{align*}
\frac{\Pr_{\mathcal{F}}\left[  C\right]  }{\Pr_{\mathcal{U}}\left[  C\right]
}  &  =\frac{\int_{S\in C}\mathcal{F}^{\prime}\left(  S\right)  dS}{\int_{S\in
C}\mathcal{U}^{\prime}\left(  S\right)  dS}\\
&  =\frac{\int_{S\in C}\mathcal{U}^{\prime}\left(  S\right)  \left[  1\pm
O\left(  \frac{\left(  K+L\right)  \Delta_{S}}{\sqrt{N}}\right)  \right]
dS}{\int_{S\in C}\mathcal{U}^{\prime}\left(  S\right)  dS}\\
&  =\frac{\int_{S\in C}\left[  e^{-\Delta_{S}/2}/\sqrt{2\pi}^{K+L}\right]
\left[  1\pm O\left(  \frac{\left(  K+L\right)  \Delta_{S}}{\sqrt{N}}\right)
\right]  dS}{\int_{S\in C}\left[  e^{-\Delta_{S}/2}/\sqrt{2\pi}^{K+L}\right]
dS}\\
&  =\frac{\left(  1/2\right)  ^{K+L}\pm O\left(  \int_{S\in C}\left[
e^{-\Delta_{S}/2}/\sqrt{2\pi}^{K+L}\right]  \frac{\left(  K+L\right)
\Delta_{S}}{\sqrt{N}}dS\right)  }{\left(  1/2\right)  ^{K+L}}\\
&  =1\pm\frac{2^{K+L}\left(  K+L\right)  }{\sqrt{N}}O\left(  \int_{S\in
C}\frac{e^{-\Delta_{S}/2}}{\sqrt{2\pi}^{K+L}}\Delta_{S}dS\right) \\
&  =1\pm\frac{K+L}{\sqrt{N}}O\left(  \int_{S}\frac{e^{-\Delta_{S}/2}}%
{\sqrt{2\pi}^{K+L}}\Delta_{S}dS\right) \\
&  =1\pm O\left(  \frac{\left(  K+L\right)  ^{2}}{\sqrt{N}}\right)  .
\end{align*}
Setting $k:=K+L$, this completes the proof.
\end{proof}

\subsection{Oracle Separation Results\label{OSEP}}

The following lemma shows that \textit{any} almost $k$-wise independent
distribution is indistinguishable from the uniform distribution by
$\mathsf{BPP}_{\mathsf{path}}$\ or $\mathsf{SZK}$\ machines.

\begin{lemma}
\label{bpppathszk}Suppose a probability distribution $\mathcal{D}$ over oracle
strings is $1/t\left(  n\right)  $-almost\ $\operatorname*{poly}\left(
n\right)  $-wise independent,\ for some superpolynomial function $t$. \ Then
no $\mathsf{BPP}_{\mathsf{path}}$\ machine or $\mathsf{SZK}$\ protocol can
distinguish $\mathcal{D}$\ from the uniform distribution $\mathcal{U}$\ with
non-negligible bias.
\end{lemma}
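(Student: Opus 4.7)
The proof handles $\mathsf{BPP}_{\mathsf{path}}$ and $\mathsf{SZK}$ separately, but the guiding principle is the same: in both cases, the distinguisher's acceptance probability on oracle $A$ can be expressed in terms of statistics each of which depends on only $\mathrm{poly}(n)$ bits of $A$, which is exactly the regime where almost $\mathrm{poly}(n)$-wise independence applies.

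For the $\mathsf{BPP}_{\mathsf{path}}$ half, fix a distinguisher $M$ and write its acceptance probability on oracle $A$ as $p_A/q_A$, where $p_A = \Pr_r[M(A,r)\text{ succeeds and accepts}]$ and $q_A = \Pr_r[M(A,r)\text{ succeeds}]$. For each fixed random tape $r$, both indicator events are juntas of at most $\mathrm{poly}(n)$ oracle bits. Consequently, for every constant $i+j$, the product moment $\mathbb{E}[p_A^i q_A^j]$ is an average over $(i{+}j)$-tuples of random tapes of a product of juntas --- itself a $\mathrm{poly}(n)$-junta statistic of $A$ --- and so its expectation under $\mathcal{D}$ agrees with that under $\mathcal{U}$ to within $1/t(n)$. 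The nontrivial step is converting matching moments of $(p_A, q_A)$ into a matching expectation of the ratio $p_A/q_A$. I would use the identity $p/q = \int_0^1 \mathbf{1}[p \geq tq]\,dt$ and bound, uniformly in $t \in [0,1]$, the gap $|\Pr_{\mathcal{D}}[p_A \geq tq_A] - \Pr_{\mathcal{U}}[p_A \geq tq_A]|$ by approximating the sign function by a low-degree (say $\mathrm{polylog}(n)$) Chebyshev-style polynomial accurate outside a small window around $0$. Substituting and expanding, each moment $\mathbb{E}[(p_A - tq_A)^\ell]$ is a $\mathrm{poly}(n)$-junta statistic, so $|\mathbb{E}_{\mathcal{D}}[P(p_A - tq_A)] - \mathbb{E}_{\mathcal{U}}[P(p_A - tq_A)]|$ is negligible; the polynomial-approximation error is absorbed by arguing that $p_A - tq_A$ is rarely close to $0$ once $t$ is integrated over $[0,1]$ (concentration under $\mathcal{U}$ is immediate from a second-moment computation, and transfers to $\mathcal{D}$ via matching second moments).

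For the $\mathsf{SZK}$ half, I would invoke the Sahai--Vadhan characterization: any $\mathsf{SZK}^A$ distinguisher reduces to deciding whether $\mathrm{SD}(D_0^A, D_1^A)$ is large or small, where $D_0^A, D_1^A$ are two efficiently $A$-samplable distributions. Their polarization lemma lets us further assume the YES and NO cases have $\mathrm{SD} \geq 1 - 2^{-n}$ and $\mathrm{SD} \leq 2^{-n}$ respectively. Each sample uses $\mathrm{poly}(n)$ oracle bits. Now $\|D_0^A - D_1^A\|_2^2 = \|D_0^A\|_2^2 + \|D_1^A\|_2^2 - 2\langle D_0^A, D_1^A\rangle$ admits an unbiased estimator from collision probabilities on pairs of independent samples, each such collision being an indicator event depending on only $\mathrm{poly}(n)$ oracle bits. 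Hence $\mathbb{E}_{\mathcal{D}}[\|D_0^A - D_1^A\|_2^2]$ agrees with the $\mathcal{U}$-expectation to within $1/t(n)$. In the polarized regime, $\ell_2$-distance and statistical distance are tightly coupled (near-extremal values of $\mathrm{SD}$ force near-extremal values of $\ell_2$-distance and vice versa), so matching expected $\ell_2$-distances pins down the expected acceptance bit and yields the desired negligible-bias bound.

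The main obstacle I expect is the $\mathsf{BPP}_{\mathsf{path}}$ moment-to-threshold conversion: the threshold $\mathbf{1}[p_A \geq tq_A]$ is not itself a $\mathrm{poly}(n)$-junta of $A$, and matching low moments by themselves do not imply matching threshold probabilities unless one can also verify ``avoidance of the boundary'' $p_A \approx tq_A$ on both distributions. Under $\mathcal{U}$ this avoidance is a direct second-moment computation; under $\mathcal{D}$ it must be bootstrapped from the matching second moment, since $\mathcal{D}$ is only almost $k$-wise independent rather than truly independent. For $\mathsf{SZK}$, the analogous worry --- that statistical distance is a ``global'' functional of $A$ --- is sidestepped by the polarization-plus-collision reduction to $\ell_2$-distance, which genuinely does factor through local junta statistics.
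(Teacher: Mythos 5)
Your proposal takes a genuinely different---and substantially harder---route than the paper in both halves of the lemma, and both of your routes have real gaps.

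For $\mathsf{BPP}_{\mathsf{path}}$, the paper never analyzes a per-oracle acceptance ratio and then integrates over the oracle, which is the source of all your difficulty. Instead it folds the oracle randomness into the postselection: it defines $p_{\mathcal{D}}$ as the probability that $M$ accepts conditioned on succeeding, where the conditioning is \emph{joint} over $A\sim\mathcal{D}$ and the random tape $r$. Then $p_{\mathcal{D}}=a_{\mathcal{D}}/s_{\mathcal{D}}$, and each of $a_{\mathcal{D}}$, $s_{\mathcal{D}}$ is the probability of an event that is a disjoint union (over $r$) of $\operatorname*{poly}(n)$-terms in the oracle bits. The \emph{multiplicative} form of almost $k$-wise independence then gives
\[
1-\tfrac{1}{t(n)} \le \frac{a_{\mathcal{D}}}{a_{\mathcal{U}}} \le 1+\tfrac{1}{t(n)}, \qquad 1-\tfrac{1}{t(n)} \le \frac{s_{\mathcal{D}}}{s_{\mathcal{U}}} \le 1+\tfrac{1}{t(n)},
\]
and the ratio $p_{\mathcal{D}}/p_{\mathcal{U}}$ is pinned to $(1\pm 1/t(n))^{2}$ at once. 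There is no moment argument, no threshold polynomial, and no boundary-avoidance step. Your threshold-polynomial approach has a fatal gap exactly where you flag it: you assert that boundary avoidance of $p_A - \tau q_A$ ``is a direct second-moment computation'' under $\mathcal{U}$, but $p_A$ and $q_A$ are averages of exponentially many junta bits with no structure you can exploit; in particular $q_A$ may be exponentially small on essentially every oracle ($\mathsf{BPP}_{\mathsf{path}}$ imposes only $q_A>0$), so neither the approximation window for the sign polynomial nor the $\int_{0}^{1}dt$ trick (which picks up a $1/q_A$ factor) is controllable. Matching a few moments of $(p_A,q_A)$ is nowhere near enough to control $\Pr[p_A\ge \tau q_A]$ for an arbitrary pair of such random variables.

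For $\mathsf{SZK}$, you and the paper both start from Sahai--Vadhan, but the paper does not pass through $\ell_2$ at all. It directly bounds the statistical distance between the \emph{averaged} sampler distributions: for a fixed random tape, the event that a sampler produces a given output string is a disjoint union of $\operatorname*{poly}(n)$-terms, so the multiplicative almost-$k$-wise bound yields $\|A_{\mathcal{D}}-A_{\mathcal{U}}\|\le 1/t(n)$ and likewise $\|A'_{\mathcal{D}}-A'_{\mathcal{U}}\|\le 1/t(n)$; a triangle inequality then shows $\|A_{\mathcal{D}}-A'_{\mathcal{D}}\|$ and $\|A_{\mathcal{U}}-A'_{\mathcal{U}}\|$ differ by at most $2/t(n)$. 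Your detour through polarization and collision-based $\ell_2$ estimation breaks down at the $\ell_2\leftrightarrow$ SD step: after polarization the YES case has $\mathrm{SD}\ge 1-2^{-n}$, but two disjointly supported distributions each spread over $2^{\operatorname*{poly}(n)}$ atoms have $\mathrm{SD}=1$ and \emph{exponentially small} $\ell_2$-distance, so a large SD does not force a large collision statistic. Matching expected $\ell_2$-distances under $\mathcal{D}$ and $\mathcal{U}$ therefore does not pin down the per-oracle accept/reject bit. The structural point you are missing in both halves is that almost $k$-wise independence is a \emph{relative-error} guarantee that survives summing over exponentially many disjoint $k$-terms and taking ratios of two such sums; this lets the paper avoid ever descending to per-oracle thresholds.
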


\begin{proof}
Let $M$ be a $\mathsf{BPP}_{\mathsf{path}}$\ machine, and let $p_{\mathcal{D}%
}$\ be the probability that $M$ accepts an oracle string drawn from
distribution $\mathcal{D}$. \ Then $p_{\mathcal{D}}$\ can be written as
$a_{\mathcal{D}}/s_{\mathcal{D}}$, where $s_{\mathcal{D}}$\ is the fraction of
$M$'s computation paths that are postselected, and $a_{\mathcal{D}}$\ is the
fraction of $M$'s paths that are both postselected and accepting. \ Since each
computation path can examine at most $\operatorname*{poly}\left(  n\right)
$\ bits\ and $\mathcal{D}$ is $1/t\left(  n\right)  $%
-almost\ $\operatorname*{poly}\left(  n\right)  $-wise independent, we have%
\[
1-\frac{1}{t\left(  n\right)  }\leq\frac{a_{\mathcal{D}}}{a_{\mathcal{U}}}%
\leq1+\frac{1}{t\left(  n\right)  }~\ \ \text{and \ }1-\frac{1}{t\left(
n\right)  }\leq\frac{s_{\mathcal{D}}}{s_{\mathcal{U}}}\leq1+\frac{1}{t\left(
n\right)  }.
\]
Hence%
\[
\left(  1-\frac{1}{t\left(  n\right)  }\right)  ^{2}\leq\frac{a_{\mathcal{D}%
}/s_{\mathcal{D}}}{a_{\mathcal{U}}/s_{\mathcal{U}}}\leq\left(  1+\frac
{1}{t\left(  n\right)  }\right)  ^{2}.
\]

Now let $P$ be an $\mathsf{SZK}$\ protocol. \ Then by a result of Sahai and
Vadhan \cite{sv}, there exist polynomial-time samplable distributions $A$\ and
$A^{\prime}$\ such that if $P$ accepts, then $\left\Vert A-A^{\prime
}\right\Vert \leq1/3$, while if $P$ rejects, then $\left\Vert A-A^{\prime
}\right\Vert \geq2/3$. \ But since each computation path can examine at most
$\operatorname*{poly}\left(  n\right)  $\ oracle bits\ and $\mathcal{D}$ is
$1/t\left(  n\right)  $-almost\ $\operatorname*{poly}\left(  n\right)  $-wise
independent, we have $\left\Vert A_{\mathcal{D}}-A_{\mathcal{U}}\right\Vert
\leq1/t\left(  n\right)  $\ and $\left\Vert A_{\mathcal{D}}^{\prime
}-A_{\mathcal{U}}^{\prime}\right\Vert \leq1/t\left(  n\right)  $, where the
subscript denotes the distribution from which the oracle string was drawn.
\ Hence%
\[
\left\vert \left\Vert A_{\mathcal{D}}-A_{\mathcal{D}}^{\prime}\right\Vert
-\left\Vert A_{\mathcal{U}}-A_{\mathcal{U}}^{\prime}\right\Vert \right\vert
\leq\left\Vert A_{\mathcal{D}}-A_{\mathcal{U}}\right\Vert +\left\Vert
A_{\mathcal{D}}^{\prime}-A_{\mathcal{U}}^{\prime}\right\Vert \leq\frac
{2}{t\left(  n\right)  }%
\]
and no $\mathsf{SZK}$\ protocol\ exists.
\end{proof}

We now combine Lemma \ref{bpppathszk} and Theorem \ref{fcindep} with standard
diagonalization tricks, to obtain an oracle relative to which $\mathsf{BQP}%
\not \subset \mathsf{BPP}_{\mathsf{path}}$\ and $\mathsf{BQP}\not \subset
\mathsf{SZK}$.

\begin{theorem}
\label{bqpbpppath}There exists an oracle $A$\ relative to which $\mathsf{BQP}%
^{A}\not \subset \mathsf{BPP}_{\mathsf{path}}^{A}$ and $\mathsf{BQP}%
^{A}\not \subset \mathsf{SZK}^{A}$.
\end{theorem}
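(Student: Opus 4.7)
The plan is to use \textsc{Fourier Checking} as the separating problem, combining the quantum algorithm \texttt{FC-ALG} of Section \ref{FCALG} with Theorem \ref{fcindep} and Lemma \ref{bpppathszk} via a standard Bennett--Gill \cite{bg} style random-oracle diagonalization.

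Concretely, I would encode into the oracle $A$, for each input length $n$, a pair $\langle f_n, g_n \rangle$ of Boolean functions on $\{0,1\}^n$, generated independently across different $n$'s. The generation rule is a fair coin flip: on heads draw $\langle f_n, g_n \rangle$ from $\mathcal{U}$, and on tails draw it from $\mathcal{F}$. Let $L = \{0^n : \langle f_n, g_n\rangle \sim \mathcal{F}\}$. For the $\mathsf{BQP}^A$ upper bound, I would invoke \texttt{FC-ALG} together with Corollary \ref{pfgcor}: conditioning on the overwhelmingly likely event that $p(f_n,g_n) \leq 0.01$ in the $\mathcal{U}$ case and, with constant probability bounded away from $0$, $p(f_n,g_n) \geq 0.05$ in the $\mathcal{F}$ case, a constant number of repetitions of \texttt{FC-ALG} solves \textsc{Promise Fourier Checking} with arbitrarily small error, putting $L$ into $\mathsf{BQP}^A$ with probability $1$ over $A$.

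For the lower bound, fix a $\mathsf{BPP}_{\mathsf{path}}^A$ machine or an $\mathsf{SZK}^A$ protocol $M$. On input $0^n$, $M$ touches only $\mathrm{poly}(n)$ oracle bits along any computation path, and all of these bits come from the $n$th block. By Theorem \ref{fcindep}, the forrelated distribution on that block is $O(\mathrm{poly}(n)^2/\sqrt{N}) = 1/\exp(n)$-almost $\mathrm{poly}(n)$-wise independent from the uniform distribution. Lemma \ref{bpppathszk} then gives that $M$'s acceptance probability on $0^n$ differs by at most a negligible amount between the $\mathcal{U}$ and $\mathcal{F}$ cases, so conditioned on the promise being satisfied at length $n$, $M$ fails to decide membership of $0^n$ in $L$ with probability bounded below by some absolute constant $c > 0$, independently across different $n$'s.

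The standard diagonalization now finishes: by Borel--Cantelli, with probability $1$ over $A$ any fixed $M$ fails at infinitely many input lengths, and a union bound over the countably many machines shows that with probability $1$ over $A$ no $\mathsf{BPP}_{\mathsf{path}}^A$ or $\mathsf{SZK}^A$ machine decides $L$. The main obstacle is technical care with $\mathsf{SZK}$: unlike $\mathsf{BPP}_{\mathsf{path}}$ it is not defined by a single machine model, so one must enumerate $\mathsf{SZK}$ protocols through the Sahai--Vadhan complete problem \textsc{Statistical Difference}, i.e., through pairs of polynomial-time samplable circuits $\langle A, A'\rangle$, exactly as in Lemma \ref{bpppathszk}. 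Once this reformulation is in place, the enumeration and independence argument go through verbatim, and one obtains simultaneously $\mathsf{BQP}^A \not\subset \mathsf{BPP}_{\mathsf{path}}^A$ and $\mathsf{BQP}^A \not\subset \mathsf{SZK}^A$.
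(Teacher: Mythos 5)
Your approach matches the paper's: the same oracle construction (for each $n$, flip a fair coin between drawing $\left\langle f_n,g_n\right\rangle$ from $\mathcal{U}$ or $\mathcal{F}$), the same unary language $L$, and the same combination of Theorem \ref{fcindep}, Lemma \ref{bpppathszk}, and a Bennett--Gill diagonalization for the lower bound; your remark that $\mathsf{SZK}$ is handled through the Sahai--Vadhan characterization is exactly what Lemma \ref{bpppathszk} already builds in, so no extra work is needed there.

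One step needs tightening, though. You cite Corollary \ref{pfgcor} for the $\mathsf{BQP}^A$ upper bound, but that corollary (being just Markov applied to Theorem \ref{fcinbqp}) only gives $\Pr_{\left\langle f,g\right\rangle\sim\mathcal{F}}\left[p\left(f,g\right)\geq 0.05\right]\geq 1/50$. If the forrelated pair satisfies the promise only with constant probability, then at a constant fraction of the input lengths $n$ where $0^n\in L$, the promise would fail and \texttt{FC-ALG} could misclassify; these failures would be infinitely many and cannot be hardwired away, so the conclusion ``$L\in\mathsf{BQP}^A$ with probability $1$ over $A$'' does not follow from the cited bound. What the argument actually needs is the concentration established \emph{inside} the proof of Theorem \ref{fcinbqp}: with probability $1-1/\exp\left(N\right)$ over $\left\langle f,g\right\rangle\sim\mathcal{F}$ one has $p\left(f,g\right)\geq\left(\cos 1.3\right)^2\approx 0.072$. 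Combined with $\Pr_{\mathcal{U}}\left[p\left(f,g\right)\geq 0.01\right]\leq 100/N$, both failure probabilities are summable over $n$, Borel--Cantelli gives only finitely many bad lengths almost surely, and those can be hardwired. With that substitution your proof is the paper's proof.
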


\begin{proof}
The oracle $A$ will encode the truth tables of Boolean functions $f_{1}%
,f_{2},\ldots$\ and $g_{1},g_{2},\ldots$, where $f_{n},g_{n}:\left\{
0,1\right\}  ^{n}\rightarrow\left\{  -1,1\right\}  $\ are on $n$\ variables
each.\ \ For each $n$, with $1/2$\ probability\ we draw $\left\langle
f_{n},g_{n}\right\rangle $\ from the uniform distribution $\mathcal{U}$, and
with $1/2$\ probability\ we draw $\left\langle f_{n},g_{n}\right\rangle
$\ from the forrelated distribution $\mathcal{F}$. \ Let $L$\ be the unary
language consisting of all $0^{n}$\ for which $\left\langle f_{n}%
,g_{n}\right\rangle $\ was drawn from $\mathcal{F}$.

By Theorem \ref{fcinbqp}, there exists a $\mathsf{BQP}^{A}$\ machine $M$\ that
decides $L$ on all but finitely many values of $n$, with probability $1$ over
$A$. \ Since we can simply hardwire the values of $n$ on which $M$ fails, it
follows that $L\in\mathsf{BQP}^{A}$ with probability $1$ over $A$.

On the other hand, we showed in Theorem \ref{fcindep} that $\mathcal{F}$\ is
$O\left(  p\left(  n\right)  ^{2}/2^{n/2}\right)  $-almost $p\left(  n\right)
$-wise\ independent for all polynomials $p$. \ Hence, by Lemma
\ref{bpppathszk}, no $\mathsf{BPP}_{\mathsf{path}}$\ machine can distinguish
$\mathcal{F}$\ from $\mathcal{U}$\ with non-negligible bias. \ Let
$E_{n}\left(  M\right)  $ be the event that the $\mathsf{BPP}_{\mathsf{path}%
}^{A}$\ machine $M$\ correctly decides whether $0^{n}\in L$. \ Then%
\[
\Pr_{A}\left[  E_{n}\left(  M\right)  \right]  \leq\frac{1}{2}+o\left(
1\right)  ,
\]
and moreover this is true even conditioning on $E_{1}\left(  M\right)
,\ldots,E_{n-1}\left(  M\right)  $. \ So as in the standard random oracle
argument of Bennett and Gill \cite{bg}, for every fixed $M$ we have%
\[
\Pr_{A}\left[  E_{1}\left(  M\right)  \wedge E_{2}\left(  M\right)
\wedge\cdots\right]  =0.
\]
So by the union bound,%
\[
\Pr_{A}\left[  \exists M:E_{1}\left(  M\right)  \wedge E_{2}\left(  M\right)
\wedge\cdots\right]  =0
\]
as well. \ It follows that $\mathsf{BQP}^{A}\not \subset \mathsf{BPP}%
_{\mathsf{path}}^{A}$\ with probability $1$ over $A$. \ By exactly the same
argument, we also get $\mathsf{BQP}^{A}\not \subset \mathsf{SZK}^{A}$ with
probability $1$ over $A$.
\end{proof}

Since $\mathsf{BPP}\subseteq\mathsf{MA}\subseteq\mathsf{BPP}_{\mathsf{path}}$,
Theorem \ref{bqpbpppath} supersedes the previous results that there exist
oracles $A$ relative to which $\mathsf{BPP}^{A}\neq\mathsf{BQP}^{A}$ \cite{bv}
and $\mathsf{BQP}^{A}\not \subset \mathsf{MA}^{A}$ \cite{watrous}.

\section{The Generalized Linial-Nisan Conjecture\label{GLN}}

In 1990, Linial and Nisan \cite{ln} famously conjectured that
\textquotedblleft polylogarithmic independence fools $\mathsf{AC}^{0}%
$\textquotedblright---or loosely speaking, that every probability distribution
$\mathcal{D}$\ over $n$-bit strings that is uniform\ on all small subsets of
bits, is \textit{indistinguishable} from the uniform distribution by
polynomial-size, constant-depth circuits.\ \ We now state a variant of the
Linial-Nisan Conjecture,\ not with the best possible parameters but with
weaker, easier-to-understand parameters that suffice for our application.

\begin{conjecture}
[Linial-Nisan Conjecture]\label{ln}Let $\mathcal{D}$\ be an $n^{\Omega\left(
1\right)  }$-wise independent distribution over $\left\{  0,1\right\}  ^{n}$,
and let $f:\left\{  0,1\right\}  ^{n}\rightarrow\left\{  0,1\right\}  $\ be
computed by an $\mathsf{AC}^{0}$ circuit of size $2^{n^{o\left(  1\right)  }}%
$\ and depth $O\left(  1\right)  $. \ Then%
\[
\left\vert \Pr_{x\sim\mathcal{D}}\left[  f\left(  x\right)  \right]
-\Pr_{x\sim\mathcal{U}}\left[  f\left(  x\right)  \right]  \right\vert
=o\left(  1\right)  .
\]

\end{conjecture}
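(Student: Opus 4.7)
The plan is to reduce Conjecture~\ref{ln} to the existence of \emph{low-degree sandwiching polynomials} for $\mathsf{AC}^0$ functions, as outlined in the introduction. Specifically, I would aim to show that for every $f:\{0,1\}^n\to\{0,1\}$ computed by an $\mathsf{AC}^0$ circuit of size $S=2^{n^{o(1)}}$ and depth $d=O(1)$, there exist real polynomials $p_\ell,p_u:\mathbb{R}^n\to\mathbb{R}$ of degree $k=\operatorname{polylog}(n)$ such that (i) $p_\ell(x)\leq f(x)\leq p_u(x)$ for every $x\in\{0,1\}^n$, and (ii) $\operatorname{E}_{x\sim\mathcal{U}}[p_u(x)-p_\ell(x)]=o(1)$.

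Once such sandwich polynomials are in hand, the conjecture follows by the following sandwich squeeze. Because $\mathcal{D}$ is $n^{\Omega(1)}$-wise independent and $\deg p_\ell,\deg p_u\leq k=\operatorname{polylog}(n)$, the expectation of every monomial of degree $\leq k$ is the same under $\mathcal{D}$ as under $\mathcal{U}$, so $\operatorname{E}_\mathcal{D}[p_\ell]=\operatorname{E}_\mathcal{U}[p_\ell]$ and $\operatorname{E}_\mathcal{D}[p_u]=\operatorname{E}_\mathcal{U}[p_u]$. Combining these equalities with the pointwise sandwich,
\begin{align*}
\operatorname{E}_\mathcal{D}[f]&\leq\operatorname{E}_\mathcal{D}[p_u]=\operatorname{E}_\mathcal{U}[p_u]\leq\operatorname{E}_\mathcal{U}[f]+o(1),\\
\operatorname{E}_\mathcal{D}[f]&\geq\operatorname{E}_\mathcal{D}[p_\ell]=\operatorname{E}_\mathcal{U}[p_\ell]\geq\operatorname{E}_\mathcal{U}[f]-o(1),
\end{align*}
which is exactly the desired bound $|\operatorname{E}_\mathcal{D}[f]-\operatorname{E}_\mathcal{U}[f]|=o(1)$.

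The hard part, and what I expect to be the main obstacle, is constructing the pair $(p_\ell,p_u)$ of low degree. The natural starting point is the Linial-Mansour-Nisan theorem, which via random restrictions and H\aa stad's switching lemma shows that any depth-$d$ size-$S$ circuit $f$ is $\varepsilon$-approximated in $L_2$-norm by its Fourier expansion truncated to degree $k=O(\log(S/\varepsilon))^{d-1}$; call this truncation $q$. The difficulty is that $q$ satisfies only a two-sided $L_2$-approximation guarantee, not a pointwise sandwich, and the naive attempt $p_u:=q+\varepsilon'$, $p_\ell:=q-\varepsilon'$ fails because $|f(x)-q(x)|$ can be much larger than its typical value on a small set of inputs. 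My approach would be to follow the Bazzi-Razborov-Braverman strategy: combine the LMN approximation $q$ with a low-degree polynomial $h$ that is $0$ on most inputs but grows large (at least $1$) on the ``bad'' set where $|f-q|$ is big, and define $p_u:=q+h$ and $p_\ell:=q-h$. Constructing such an $h$ of degree $\operatorname{polylog}(n)$ amounts to building a low-degree polynomial that dominates the indicator of the error set; one does this by writing the error set as a union of $\mathsf{AC}^0$ events and recursively sandwiching them using a smaller-depth induction, with quantitative parameters tracked carefully enough that the final degree remains $\operatorname{polylog}(n)$ and $\operatorname{E}_\mathcal{U}[h]=o(1)$.

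The real obstacle is controlling the blowup in this induction: each sandwiching of a subcircuit pays a multiplicative cost in degree and an additive cost in $L_1$-error, and for depth $d=O(1)$ one needs these costs to compose to something still $\operatorname{polylog}(n)$ and $o(1)$ respectively. Getting the parameters to close --- so that the degree stays $\operatorname{polylog}$ and the slack $\operatorname{E}_\mathcal{U}[p_u-p_\ell]$ stays $o(1)$ simultaneously, for arbitrary constant depth --- is precisely the quantitative heart of Braverman's argument, and absent a new idea the calculation has to be organized so that the error from approximating an OR of $\mathsf{AC}^0$ subcircuits can be absorbed into the error from approximating its inputs without incurring a factor that is super-polynomial in $\log n$ at any level.
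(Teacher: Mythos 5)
This statement is a conjecture in the paper, not a theorem with an in-paper proof; the paper states it and attributes its resolution to Braverman (Theorem~\ref{bravermanthm}), with the depth-$2$ case due to Bazzi and Razborov. Your sandwich-squeeze step is correct and is exactly the standard reduction: it is one direction of the equivalence (which the paper attributes to Bazzi) between this conjecture and the existence of low-degree sandwiching polynomials, and it is the special case of Theorem~\ref{sandimpgln} in which $\mathcal{D}$ is exactly $k$-wise independent, so that $\operatorname{E}_{\mathcal{D}}[p]=\operatorname{E}_{\mathcal{U}}[p]$ for every degree-$\leq k$ polynomial $p$ and condition~(iii) of Conjecture~\ref{sandwich} is not needed. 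You also correctly locate the difficulty in constructing the pair $p_\ell\leq f\leq p_u$ with small $L_1$ gap, and your description of the Bazzi--Razborov--Braverman strategy matches the paper's account (modulo the detail the paper mentions, that Braverman actually builds approximating rather than strictly sandwiching polynomials, with sandwiching recoverable via Bazzi's LP-duality argument).

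The genuine gap is that the construction of the sandwiching polynomials is gestured at but not carried out. The quantitative heart --- making the degree and $L_1$-error compose across $d=O(1)$ levels of recursion so that the final degree stays $\operatorname{polylog}(n)$ and the gap stays $o(1)$ --- is precisely what took the conjecture twenty years to resolve; your write-up says this ``amounts to'' careful bookkeeping rather than doing it, and the dominating polynomial $h$ is never constructed. As written, the proposal is a correct and standard reduction to the existence of low-degree sandwiching polynomials for $\mathsf{AC}^{0}$, but that existence result (the content of Braverman's theorem) is not established, so the conjecture is not proved.
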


After seventeen years of almost no progress, in 2007 Bazzi \cite{bazzi}
finally proved Conjecture \ref{ln}\ for the special case of depth-$2$ circuits
(also called DNF formulas). \ Bazzi's proof was about $50$ pages, but it was
dramatically simplified a year later, when Razborov \cite{razborov:bazzi}
discovered a $3$-page proof. \ Then in 2009, Braverman \cite{braverman} gave a
breakthrough proof of the full Linial-Nisan Conjecture.

\begin{theorem}
[Braverman's Theorem \cite{braverman}]\label{bravermanthm}Let $f:\left\{
0,1\right\}  ^{n}\rightarrow\left\{  0,1\right\}  $\ be computed by an
$\mathsf{AC}^{0}$ circuit of size $S$\ and depth $d$, and let $\mathcal{D}%
$\ be a $\left(  \log\frac{S}{\varepsilon}\right)  ^{7d^{2}}$-wise independent
distribution over $\left\{  0,1\right\}  ^{n}$.\ \ Then for all sufficiently
large $S$,%
\[
\left\vert \Pr_{x\sim\mathcal{D}}\left[  f\left(  x\right)  \right]
-\Pr_{x\sim\mathcal{U}}\left[  f\left(  x\right)  \right]  \right\vert
\leq\varepsilon.
\]

\end{theorem}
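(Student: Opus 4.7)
The plan is to reduce Braverman's Theorem to constructing \emph{low-degree sandwiching polynomials} for $\mathsf{AC}^{0}$. Concretely, it suffices to exhibit, for every $f$ computed by an $\mathsf{AC}^{0}$ circuit of size $S$ and depth $d$, real polynomials $p_{\ell},p_{u}$ on $\{0,1\}^{n}$ of degree $k := (\log(S/\varepsilon))^{7d^{2}}$ with $p_{\ell}(x) \leq f(x) \leq p_{u}(x)$ for every $x$ and $\operatorname*{E}_{\mathcal{U}}[p_{u}-p_{\ell}] \leq \varepsilon$. Granting this, because any $k$-wise independent $\mathcal{D}$ matches $\mathcal{U}$ on the expectation of every monomial of degree at most $k$, we have $\operatorname*{E}_{\mathcal{D}}[p_{\ell}]=\operatorname*{E}_{\mathcal{U}}[p_{\ell}]$ and $\operatorname*{E}_{\mathcal{D}}[p_{u}]=\operatorname*{E}_{\mathcal{U}}[p_{u}]$, so
$$
\operatorname*{E}_{\mathcal{U}}[f] - \varepsilon \;\leq\; \operatorname*{E}_{\mathcal{U}}[p_{\ell}] = \operatorname*{E}_{\mathcal{D}}[p_{\ell}] \;\leq\; \operatorname*{E}_{\mathcal{D}}[f] \;\leq\; \operatorname*{E}_{\mathcal{D}}[p_{u}] = \operatorname*{E}_{\mathcal{U}}[p_{u}] \;\leq\; \operatorname*{E}_{\mathcal{U}}[f] + \varepsilon,
$$
which is the advertised bound.

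The polynomials $p_{\ell},p_{u}$ would be built by combining the two classical $\mathsf{AC}^{0}$ approximation techniques that the introduction mentions. First, the Linial--Mansour--Nisan theorem supplies a Fourier-truncation polynomial $P$ of degree $(\log(S/\delta))^{O(d)}$ with $\operatorname*{E}_{\mathcal{U}}[(P-f)^{2}] \leq \delta$; this is only an $L_{2}$ approximation, and $P$ can sit on either side of $f$. Second, Razborov--Smolensky produces low-degree polynomials matching $f$ pointwise except on an exponentially small fraction of inputs. Neither alone is a sandwich, but together they yield one: the strategy is to build an \emph{error-indicator} polynomial $E$ of degree $k$ satisfying $E(x) \geq |P(x)-f(x)|$ for every $x$ and $\operatorname*{E}_{\mathcal{U}}[E] \leq \varepsilon$. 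Then $p_{u} := P+E$ and $p_{\ell} := P-E$ are the desired sandwich. Operationally, $E$ is constructed by induction on the depth of the circuit, processing gates bottom-up: for an OR of subcircuits $C_{1},\ldots,C_{m}$ already sandwiched by $q_{i}^{\ell}\leq C_{i}\leq q_{i}^{u}$, one expands $1-\prod_{i}(1-C_{i})$ via a truncated inclusion--exclusion polynomial of degree $O(\log(m/\varepsilon))$, substituting the upper or lower subcircuit sandwich into each term so as to preserve monotonicity on the Boolean cube; AND gates are handled dually. Each of the $d$ levels multiplies the running degree by roughly a polylogarithmic factor, giving total degree $(\log(S/\varepsilon))^{O(d^{2})}$, with the constant $7$ emerging from bookkeeping of the accumulated $L_{1}$-error across levels.

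The main technical obstacle is that the sandwich property is \emph{not} preserved under naive composition: plugging a sandwich polynomial into another polynomial, even a polynomial that is monotone on $\{0,1\}^{n}$, need not produce a sandwich for the composition, because the outer polynomial may behave non-monotonically at the real values that a sandwich polynomial takes off the cube. Resolving this requires showing that the truncated inclusion--exclusion polynomials used at each recursion level are monotone precisely on the real interval in which the lower-level sandwiches live, and bounding the cross-terms using the LMN $L_{2}$ guarantee together with Markov's inequality to control the measure of inputs where some $q_{i}^{u}-q_{i}^{\ell}$ is large. The delicate step---and the heart of Braverman's argument---is verifying that these errors multiply correctly through all $d$ levels while keeping the degree to $(\log(S/\varepsilon))^{7d^{2}}$ rather than the $d^{\Theta(d)}$ bound one gets from a less careful analysis.
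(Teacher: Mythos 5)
The paper does not prove this statement: Theorem~\ref{bravermanthm} is quoted directly from Braverman's paper \cite{braverman} and used as a black box, so there is no ``paper's own proof'' to compare against. That said, a few comments on your sketch are in order.

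Your first paragraph --- the reduction from $k$-wise independence to low-degree sandwiching polynomials --- is correct and is exactly how the paper itself frames Braverman's result (compare the discussion surrounding Conjecture~\ref{sandwich} and Theorem~\ref{sandimpgln}, which is the ``low-fat'' analogue of this step). The chain of inequalities you write is valid because a $k$-wise independent distribution agrees with the uniform distribution on the expectation of every degree-$\leq k$ monomial.

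The remaining two paragraphs, however, are a summary of ingredients rather than a proof, and they misstate the structure of Braverman's argument in two respects. First, Braverman does \emph{not} construct the polynomial $E$ by a bottom-up recursion over gates with truncated inclusion--exclusion at each OR; that is closer in spirit to Bazzi's depth-$2$ argument. Braverman's construction instead alternates, across a bounded number of rounds, between a Razborov--Smolensky-type approximator (pointwise correct outside a small bad set) and an LMN-type Fourier-truncation approximator (small $L_2$ error), using each to control the other's failure set; the recursion is over \emph{approximation rounds}, not over circuit depth, and the degree blowup comes from iterating the two approximators against each other, not from composing sandwiches up the circuit. Second, as the paper itself notes in a footnote, Braverman's polynomials are \emph{not} sandwiching in the sense you assume in paragraph one; they have ``slightly different (though still sufficient) properties,'' and converting them into genuine one-sided sandwiching polynomials is a separate LP-duality observation due to Bazzi \cite{bazzi}. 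So a correct writeup would need either to carry out Braverman's two-sided ``approximately computes, with a low-weight error certificate'' construction directly, or to invoke Bazzi's duality to pass to the sandwiching form. Finally, the hard technical content --- controlling the degree to $(\log(S/\varepsilon))^{O(d^2)}$ rather than $d^{\Theta(d)}$, and the monotonicity/cross-term bookkeeping you mention --- is precisely what a proof would have to supply, and your sketch explicitly defers it.
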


We conjecture a modest-seeming extension of Braverman's Theorem, which says
(informally) that \textit{almost} $k$-wise independent distributions fool
$\mathsf{AC}^{0}$ as well.

\begin{conjecture}
[Generalized Linial-Nisan or GLN Conjecture]\label{gln}Let $\mathcal{D}$\ be a
$1/n^{\Omega\left(  1\right)  }$-almost $n^{\Omega\left(  1\right)  }$-wise
independent distribution over $\left\{  0,1\right\}  ^{n}$, and let
$f:\left\{  0,1\right\}  ^{n}\rightarrow\left\{  0,1\right\}  $\ be computed
by an $\mathsf{AC}^{0}$ circuit of size $2^{n^{o\left(  1\right)  }}$\ and
depth $O\left(  1\right)  $. \ Then%
\[
\left\vert \Pr_{x\sim\mathcal{D}}\left[  f\left(  x\right)  \right]
-\Pr_{x\sim\mathcal{U}}\left[  f\left(  x\right)  \right]  \right\vert
=o\left(  1\right)  .
\]

\end{conjecture}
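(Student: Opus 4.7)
The plan is to follow Braverman's template for the original Linial--Nisan Conjecture, replacing the role of exact $k$-wise independence with a stronger structural property of the approximating polynomials. Recall that Braverman produces, for each $\mathsf{AC}^0$ function $f$, real \emph{sandwiching polynomials} $p_\ell, p_u$ of degree $k = n^{o(1)}$ with $p_\ell(x) \le f(x) \le p_u(x)$ on $\{0,1\}^n$ and $\mathrm{E}_{\mathcal{U}}[p_u - p_\ell] \le \varepsilon$. Under a genuinely $k$-wise independent $\mathcal{D}$, every monomial of degree at most $k$ has the same expectation under $\mathcal{D}$ as under $\mathcal{U}$, so $\mathrm{E}_{\mathcal{D}}[p_\ell] = \mathrm{E}_{\mathcal{U}}[p_\ell]$ and similarly for $p_u$, and the sandwich inequalities finish the argument. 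Under $\delta$-almost $k$-wise independence, however, the expectations of monomials only match \emph{approximately}, and the error propagates through the coefficients, so the choice of representation of $p_\ell$ and $p_u$ starts to matter.

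The second step is to expand each sandwich polynomial in the \emph{term basis}: $p(x) = \sum_T c_T \cdot T(x)$, where $T$ ranges over products of at most $k$ literals of the form $x_i$ or $1-x_i$. For any such term, $\Pr_{\mathcal{D}}[T = 1] = (1 \pm \delta)\Pr_{\mathcal{U}}[T = 1]$, so
\[
\bigl|\mathrm{E}_{\mathcal{D}}[p] - \mathrm{E}_{\mathcal{U}}[p]\bigr| \;\le\; \delta \sum_T |c_T|\cdot \Pr_{\mathcal{U}}[T = 1] \;\le\; \delta \cdot \|c\|_1,
\]
where $\|c\|_1 := \sum_T |c_T|$. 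Call $p$ \emph{low-fat} if $\|c\|_1 \le n^{O(1)}$. Applying this bound to both $p_\ell$ and $p_u$, and taking $\delta = 1/n^{\Omega(1)}$ small enough that $\delta \cdot \|c\|_1 = o(1)$, the sandwich inequalities then yield
\[
\mathrm{E}_{\mathcal{U}}[f] - \varepsilon - o(1) \;\le\; \mathrm{E}_{\mathcal{D}}[f] \;\le\; \mathrm{E}_{\mathcal{U}}[f] + \varepsilon + o(1),
\]
which is exactly the conclusion of Conjecture~\ref{gln} (since $\varepsilon$ may be taken $o(1)$ as well).

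The entire burden therefore shifts to a purely classical circuit-complexity question: can every $\mathsf{AC}^0$ function be approximated in $L_1$-norm by low-fat sandwich polynomials of degree $n^{o(1)}$? This is where I expect the real difficulty to lie. The two standard techniques for approximating $\mathsf{AC}^0$ by low-degree polynomials---the Linial--Mansour--Nisan Fourier tail bound and the Razborov--Smolensky $\mathbb{F}_p$ method---are wasteful in exactly the wrong way: when their output polynomials are re-expressed in the term basis, massive cancellations between positive and negative coefficients can drive $\|c\|_1$ up to $2^{n^{\Omega(1)}}$ or worse. A genuinely new, more ``conservative'' construction appears to be needed. The most promising route, in my view, is to redo Braverman's iterative switching-lemma argument while tracking a polynomial $\ell_1$ invariant at each depth-reduction step, so that the sandwich is built term-by-term from the restrictions themselves rather than extracted from a Fourier or algebraic representation at the end. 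Making this iteration preserve both the low-degree property and the low-fat property simultaneously is the crux of the problem, and it is the obstacle that prevents me (or, as the paper emphasizes, anyone) from currently completing the proof.
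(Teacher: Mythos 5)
The statement you are addressing is a conjecture, not a theorem, and what you have proposed is not a proof but a reduction: you show that the GLN Conjecture would follow if every $\mathsf{AC}^0$ function had ``low-fat'' sandwiching polynomials, and then (correctly) observe that constructing such polynomials is the open obstruction. This is exactly the paper's route: your argument is in substance Theorem \ref{sandimpgln}, which derives the GLN Conjecture from the Low-Fat Sandwich Conjecture (Conjecture \ref{sandwich}), and your diagnosis of why LMN and Razborov--Smolensky are ``wasteful in exactly the wrong way'' matches the paper's discussion in Section \ref{DISC}.

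There is, however, a parameter slip in your formulation of ``low-fat,'' and it is not cosmetic. You bound the transfer error by $\delta\cdot\|c\|_1$ with $\|c\|_1 := \sum_T |c_T|$, and then call $p$ low-fat if $\|c\|_1 \le n^{O(1)}$, adding that one should ``take $\delta$ small enough'' that $\delta\|c\|_1 = o(1)$. But $\delta$ is not yours to choose: the GLN Conjecture quantifies over \emph{every} distribution that is $1/n^{\Omega(1)}$-almost $n^{\Omega(1)}$-wise independent, so the adversary may hand you $\delta = n^{-c}$ for an arbitrarily small constant $c>0$. If all you control is $\|c\|_1 \le n^{d}$ for some fixed $d$ depending on the construction, then $\delta\|c\|_1 = n^{d-c}$ need not vanish. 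The fix, which is what the paper actually uses, is twofold. First, keep the weighted sum that naturally falls out of your own first inequality: since $\Pr_{\mathcal{D}}[T=1] = (1\pm\delta)\Pr_{\mathcal{U}}[T=1]$ and $\Pr_{\mathcal{U}}[T=1] = 2^{-|T|}$, the error is $\delta\sum_T |c_T| 2^{-|T|}$, and it is $\sum_T |c_T| 2^{-|T|}$ (not $\|c\|_1$) that must be controlled. Second, the bound must be $n^{o(1)}$, not $n^{O(1)}$: only then is $\delta\cdot n^{o(1)} = o(1)$ for \emph{every} $\delta = 1/n^{\Omega(1)}$. With the low-fat condition stated as $\sum_C |\alpha_C| 2^{-|C|} = n^{o(1)}$ (and likewise for the upper sandwich), your calculation goes through and coincides with the paper's Theorem \ref{sandimpgln}.
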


By the usual correspondence between $\mathsf{AC}^{0}$\ and $\mathsf{PH}$, the
GLN Conjecture immediately implies the following counterpart of Lemma
\ref{bpppathszk}.

\begin{quotation}
\noindent\textit{Suppose a probability distribution }$\mathcal{D}$\textit{
over oracle strings is }$1/t\left(  n\right)  $\textit{-almost\ }%
$\operatorname*{poly}\left(  n\right)  $\textit{-wise independent,\ for some
superpolynomial function }$t$\textit{. \ Then no }$\mathsf{PH}$%
\textit{\ machine can distinguish }$\mathcal{D}$\textit{\ from the uniform
distribution }$\mathcal{U}$\textit{\ with non-negligible bias.}
\end{quotation}

And thus we get the following implication:

\begin{theorem}
\label{glnimp}Assuming the GLN Conjecture, there exists an oracle $A$ relative
to which $\mathsf{BQP}^{A}\not \subset \mathsf{PH}^{A}$.
\end{theorem}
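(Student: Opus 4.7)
The plan is to mimic the proof of Theorem \ref{bqpbpppath} almost verbatim, replacing the appeal to Lemma \ref{bpppathszk} with the appeal to the parenthetical consequence of the GLN Conjecture stated just above the theorem statement. Concretely, I would define the oracle $A$ so that, for each $n$, it encodes the truth tables of two Boolean functions $f_n, g_n : \{0,1\}^n \to \{-1,1\}$; with probability $1/2$ we draw $\langle f_n, g_n \rangle$ from the uniform distribution $\mathcal{U}$, and with the remaining $1/2$ probability we draw it from the forrelated distribution $\mathcal{F}$. The decision problem witnessing the separation is the unary language $L = \{0^n : \langle f_n, g_n\rangle \sim \mathcal{F}\}$.

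For the upper bound, Theorem \ref{fcinbqp} (together with Corollary \ref{pfgcor}) already gives a $\mathsf{BQP}^A$ algorithm for \textsc{Fourier Checking} that, with overwhelming probability over $A$, correctly decides $L$ on all sufficiently large $n$; the finitely many exceptional input lengths can be hardwired, so $L \in \mathsf{BQP}^A$ with probability $1$ over $A$. For the lower bound, I would first invoke Theorem \ref{fcindep} to conclude that for every polynomial $p$, the distribution $\mathcal{F}$ on the length-$n$ block of $A$ is $O(p(n)^2/2^{n/2})$-almost $p(n)$-wise independent. In particular, $\mathcal{F}$ is $1/t(n)$-almost $\operatorname{poly}(n)$-wise independent for some superpolynomial $t$, so the GLN Conjecture's consequence about $\mathsf{PH}$ machines (the displayed quotation immediately preceding the theorem) applies: no $\mathsf{PH}^A$ machine can distinguish $\mathcal{F}$ from $\mathcal{U}$ on the $n$-th block with more than negligible bias.

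It remains to upgrade this per-$n$ indistinguishability to a $\mathsf{BQP}^A \not\subset \mathsf{PH}^A$ separation via the standard Bennett--Gill style random oracle diagonalization \cite{bg}. For a fixed $\mathsf{PH}^A$ machine $M$, let $E_n(M)$ be the event that $M$ correctly decides whether $0^n \in L$. Crucially, because $M$ runs in time $\operatorname{poly}(n)$ and the blocks of $A$ for distinct lengths are independent, the GLN-based indistinguishability gives $\Pr_A[E_n(M)] \le 1/2 + o(1)$ even after conditioning on $E_1(M), \ldots, E_{n-1}(M)$. Taking a product over $n$ yields $\Pr_A[\bigwedge_n E_n(M)] = 0$, and a union bound over the countably many $\mathsf{PH}^A$ machines $M$ gives $\Pr_A[\exists M : \bigwedge_n E_n(M)] = 0$. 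Hence $L \notin \mathsf{PH}^A$ with probability $1$ over $A$, completing the proof.

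The main obstacle here is not really the diagonalization step, which is routine, but the fact that the conditional-independence step of the Bennett--Gill argument requires the bound "no $\mathsf{PH}^A$ machine distinguishes $\mathcal{F}$ from $\mathcal{U}$ with non-negligible bias" to hold \emph{uniformly} for every polynomial bound on the machine's query complexity, and for every fixed advice to the lower levels of the hierarchy. This is exactly what the GLN Conjecture's corollary delivers (via the $\mathsf{PH}$/$\mathsf{AC}^0$ translation of Furst--Saxe--Sipser and Yao, since a $\mathsf{PH}^A$ computation on input $0^n$ becomes a quasi-polynomial-size constant-depth circuit on the $2^{O(n)}$-bit oracle block), so once the Conjecture is assumed the argument goes through. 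All the genuine difficulty is hidden inside the Conjecture itself.
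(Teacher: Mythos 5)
Your proposal is correct and takes essentially the same approach as the paper, which simply notes that the proof of Theorem \ref{bqpbpppath} goes through verbatim with the GLN Conjecture (via the $\mathsf{PH}$-to-$\mathsf{AC}^0$ translation) replacing Lemma \ref{bpppathszk}. You have merely unpacked that substitution and the Bennett--Gill diagonalization a bit more explicitly than the paper does.
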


\begin{proof}
The proof is the same as that of Theorem \ref{bqpbpppath}; the only difference
is that the GLN Conjecture\ now plays the role of Lemma \ref{bpppathszk}.
\end{proof}

Likewise:

\begin{theorem}
\label{glnam}Assuming the GLN Conjecture for the special case of depth-$2$
circuits (i.e., DNF formulas), there exists an oracle $A$ relative to which
$\mathsf{BQP}^{A}\not \subset \mathsf{AM}^{A}$.
\end{theorem}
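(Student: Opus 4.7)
The plan is to mirror the proof of Theorem \ref{bqpbpppath}, replacing $\mathsf{BPP}_{\mathsf{path}}$/$\mathsf{SZK}$ by $\mathsf{AM}$ and Lemma \ref{bpppathszk} by the depth-$2$ case of the GLN Conjecture. I would reuse the identical oracle construction: at each length $n$, with probability $1/2$ draw $(f_n,g_n)\sim\mathcal{U}$ and with probability $1/2$ draw $(f_n,g_n)\sim\mathcal{F}$, and let $L$ be the unary language containing $0^n$ precisely when the sample came from $\mathcal{F}$. By Theorem \ref{fcinbqp} (and hardwiring a finite prefix), $L\in\mathsf{BQP}^A$ with probability $1$ over $A$. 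What remains is to show that with probability $1$ over $A$, no $\mathsf{AM}^A$ machine decides $L^A$; combined with the standard Bennett--Gill diagonalization (restricted to a sparse sequence of lengths $n_1\ll n_2\ll\cdots$ so that any fixed polynomial-time verifier's queries at level $n_k$ cannot reach level $n_{k+1}$) and a union bound over countably many $\mathsf{AM}$ machines, this yields the theorem.

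The crucial structural observation is that an $\mathsf{AM}^A$ verifier collapses to a DNF on the oracle bits once Arthur's random string is fixed. Writing $N=2^{n+1}$ for the oracle length at level $n$, and letting $V$ be the Arthur verifier on $0^n$, for each fixed coin string $r$ the predicate ``$\exists y:V^A(0^n,r,y)=1$'' is, as a function of $A$, a DNF $C_r(A)$ of size $2^{\operatorname{poly}(n)}=2^{\operatorname{polylog}(N)}=2^{N^{o(1)}}$: one AND clause per Merlin witness $y$, unfolded across the polynomial-size query tree of $V$. This is exactly the regime covered by the depth-$2$ GLN Conjecture. Setting $p_n(A):=\mathbb{E}_r[C_r(A)]$, if the $\mathsf{AM}$ machine $M$ decides $L^A$ on all but a measure-zero set of oracles, then, fixing the oracle bits outside level $n$ and taking expectations over the block $A_n$,
\[
\mathbb{E}_{A_n\sim\mathcal{F}}[p_n(A)]\;\geq\;\tfrac{2}{3},\qquad \mathbb{E}_{A_n\sim\mathcal{U}}[p_n(A)]\;\leq\;\tfrac{1}{3}.
\]
On the other hand, Theorem \ref{fcindep} shows that $\mathcal{F}$ is $O(k^{2}/\sqrt{N})$-almost $k$-wise independent for every $k\leq N^{1/4}$, so the depth-$2$ GLN Conjecture applied to each $C_r$ (with bits outside level $n$ frozen, leaving a DNF on the $N$-bit block $A_n$) gives $|\mathbb{E}_{\mathcal{F}}[C_r]-\mathbb{E}_{\mathcal{U}}[C_r]|=o(1)$. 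Averaging over $r$, $|\mathbb{E}_{\mathcal{F}}[p_n]-\mathbb{E}_{\mathcal{U}}[p_n]|=o(1)$, contradicting the gap of at least $1/3$ above. A two-sided Markov argument then converts this $o(1)$ bound on the mean into the conditional bound $\Pr_{A_n}[E_n(M)\mid A_{-n}]\leq \tfrac{3}{4}+o(1)<1$ needed by the diagonalization, where $E_n(M)$ is the event that $M$ correctly decides whether $0^n\in L^A$.

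The main point that requires care is that one cannot freely amplify $M$ to exponentially small error: the natural Viola-style approximate-majority amplification would raise the constant-depth verifier to depth $\geq 3$, taking it out of the depth-$2$ regime hypothesized by the theorem. Fortunately, no amplification is needed; the raw constant gap $(2/3,1/3)$ inherent in the definition of $\mathsf{AM}$ is already incompatible with the $o(1)$ indistinguishability of $\mathcal{F}$ from $\mathcal{U}$ delivered by the depth-$2$ GLN Conjecture. I expect this restriction---working with only the constant $\mathsf{AM}$ gap, rather than an amplified exponentially small one---to be the main delicate point, and it is precisely what forces the Markov-style passage from first moments to probabilities in the final diagonalization step.
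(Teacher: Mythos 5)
Your proposal is correct and follows essentially the same route as the paper: after passing to a public-coin verifier, you observe that for each fixed Arthur coin string $r$ the acceptance predicate is a $2^{\operatorname*{poly}(n)}$-size DNF $C_r$ on the oracle block, and then the almost $k$-wise independence of $\mathcal{F}$ (Theorem \ref{fcindep}) together with the depth-$2$ GLN Conjecture kills the $\Omega(1)$ distinguishing gap, with Bennett--Gill diagonalization finishing the argument. The only cosmetic differences are that the paper invokes Yao's principle to fix a single good challenge $r^{\ast}$ and works with the one DNF $\varphi=\bigvee_{w}C_w$ (having Merlin also supply the predicted query answers along with $w$, so each witness contributes exactly one conjunctive term rather than an unfolded decision tree), whereas you average the $C_r$'s over $r$ and apply GLN term-by-term; and you spell out the Markov-style passage to $\Pr[E_n(M)]\le 3/4+o(1)$ that the paper leaves implicit behind its ``just like Theorem \ref{bqpbpppath}'' phrasing.
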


\begin{proof}
Just like in Theorem \ref{bqpbpppath}, define an oracle $A$ and an associated
language $L$\ using the \textsc{Fourier Checking}\ problem. \ Then
$L\in\mathsf{BQP}^{A}$, with probability $1$ over the choices made in
constructing $A$. \ On the other hand, suppose $L\in\mathsf{AM}^{A}$ with
probability $1$\ over $A$. \ Then we claim that \textsc{Fourier Checking} can
also be solved by a family of DNF formulas $\left\{  \varphi_{n}\right\}
_{n\geq1}$ of size $2^{\operatorname*{poly}\left(  n\right)  }$:%
\[
\left\vert \Pr_{\left\langle f,g\right\rangle \sim\mathcal{F}}\left[
\varphi_{n}\left(  f,g\right)  \right]  -\Pr_{\left\langle f,g\right\rangle
\sim\mathcal{U}}\left[  \varphi_{n}\left(  f,g\right)  \right]  \right\vert
=\Omega\left(  1\right)  .
\]
But since $\mathcal{F}$\ is $O\left(  k^{2}/2^{n/2}\right)  $-almost $k$-wise
independent (by Theorem \ref{fcindep}), such a family $\varphi_{n}$\ would
violate the depth-$2$ case of the GLN Conjecture.

We now prove the claim. For simplicity, fix an input length $n$, and let $A$
refer to a single instance $\left\langle f,g\right\rangle $\ of
\textsc{Fourier Checking}.\footnote{It is straightforward to generalize to the
case where Arthur can query other instances, besides the one he is trying to
solve.} \ Let $P$ be an $\mathsf{AM}$\ protocol that successfully
distinguishes the forrelated distribution $\mathcal{F}$\ over $\left\langle
f,g\right\rangle $\ pairs from the uniform distribution $\mathcal{U}$. \ We
can assume without loss of generality that $P$ is \textit{public-coin}
\cite{gs}. \ In other words, Arthur first sends a random challenge
$r\in\left\{  0,1\right\}  ^{\operatorname*{poly}\left(  n\right)  }$\ to
Merlin,\ then Merlin responds with a witness $w\in\left\{  0,1\right\}
^{\operatorname*{poly}\left(  n\right)  }$, then Arthur runs a deterministic
polynomial-time verification procedure $V^{A}\left(  r,w\right)  $ to decide
whether to accept. \ By the assumption that $P$ succeeds,%
\[
\left\vert \Pr_{A\sim\mathcal{D},r}\left[  \exists w:V^{A}\left(  r,w\right)
\right]  -\Pr_{A\sim\mathcal{D},r}\left[  \exists w:V^{A}\left(  r,w\right)
\right]  \right\vert =\Omega\left(  1\right)  .
\]
So by Yao's principle, there exists a \textit{fixed} challenge $r^{\ast}%
$\ such that%
\[
\left\vert \Pr_{A\sim\mathcal{D}}\left[  \exists w:V^{A}\left(  r^{\ast
},w\right)  \right]  -\Pr_{A\sim\mathcal{D}}\left[  \exists w:V^{A}\left(
r^{\ast},w\right)  \right]  \right\vert =\Omega\left(  1\right)  .
\]
Now let $Q_{A,w}$\ be the set of all queries that $V^{A}\left(  r^{\ast
},w\right)  $ makes to $A$, and let $C_{A,w}\left(  A^{\prime}\right)  $\ be a
term (i.e., a conjunction of $1$'s and $0$'s)\ that returns TRUE if and only
if $A^{\prime}$\ agrees with $A$ on all queries in $Q_{A,w}$. \ Then we can
assume without loss of generality that $C_{w}:=C_{A,w}$\ depends only on $w$,
not on $A$---since Merlin can simply \textit{tell} Arthur what queries $V$ is
going to make and what their outcomes will be, and Arthur can reject if Merlin
is lying. \ Let $W$ be the set of all witnesses $w$\ such that Arthur accepts
if $C_{w}\left(  A\right)  $ returns TRUE. \ Consider the DNF formula%
\[
\varphi\left(  A\right)  :=%
{\displaystyle\bigvee\limits_{w\in W}}
C_{w}\left(  A\right)  ,
\]
which expresses that there exists a $w$ causing $V^{A}\left(  r^{\ast
},w\right)  $\ to accept. \ Then $\varphi$\ contains at most
$2^{\operatorname*{poly}\left(  n\right)  }$\ terms with $\operatorname*{poly}%
\left(  n\right)  $\ literals each, and%
\[
\left\vert \Pr_{A\sim\mathcal{D}}\left[  \varphi\left(  A\right)  \right]
-\Pr_{A\sim\mathcal{D}}\left[  \varphi\left(  A\right)  \right]  \right\vert
=\Omega\left(  1\right)  .
\]

\end{proof}

As a side note, it is conceivable that one could prove%
\[
\Pr_{x\sim\mathcal{D}}\left[  \varphi\left(  x\right)  \right]  -\Pr
_{x\sim\mathcal{U}}\left[  \varphi\left(  x\right)  \right]  =o\left(
1\right)
\]
for every almost $k$-wise independent distribution $\mathcal{D}$\ and small
\textit{CNF} formula $\varphi$, without getting the same result for
\textit{DNF} formulas (or vice versa). \ However, since $\mathsf{BQP}$\ is
closed under complement, even such an asymmetric result would imply an oracle
$A$ relative to which $\mathsf{BQP}^{A}\not \subset \mathsf{AM}^{A}$.

If the GLN Conjecture\ holds, then we can also \textquotedblleft scale down by
an exponential,\textquotedblright\ to obtain an \textit{unrelativized}
decision problem that is solvable in quantum logarithmic time but not in
$\mathsf{AC}^{0}$.

\begin{theorem}
\label{glnimpbqlog}Assuming the GLN Conjecture, there exists a promise problem
in $\mathsf{BQLOGTIME}$\ that is not in $\mathsf{AC}^{0}$.
\end{theorem}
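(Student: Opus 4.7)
The plan is to scale the oracle construction of Theorem \ref{glnimp} down by an exponential, mirroring the way Theorem \ref{logtime} was obtained from Theorem \ref{fbqpsep}. The candidate problem will be \textsc{Promise Fourier Checking} itself, with the input being the concatenated truth tables of $f,g:\{0,1\}^{n}\to\{-1,1\}$, so of length $M=2N=2^{n+1}$, and with the same promise that either $p(f,g)\geq 0.05$ or $p(f,g)\leq 0.01$.

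For the BQLOGTIME upper bound I would just invoke \texttt{FC-ALG} from Section \ref{FCALG}. Each superposition query to $f$ or $g$ becomes a single random-access oracle query to the appropriate location of the input, the two quantum Fourier transforms are $O(n)=O(\log M)$ Hadamard gates each, and a constant number of repetitions suffices to boost confidence by the promise together with Corollary \ref{pfgcor}. The resulting LOGTIME-uniform family of $O(\log M)$-gate quantum circuits decides the promise problem with bounded error.

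For the AC$^{0}$ lower bound I would assume for contradiction a nonuniform family $\{C_{M}\}$ of polynomial-size constant-depth circuits solving the problem, and argue that it distinguishes the uniform distribution $\mathcal{U}$ from the forrelated distribution $\mathcal{F}$ with bias $\Omega(1)$. Indeed, by Corollary \ref{pfgcor} we have $\Pr_{\mathcal{F}}[p(f,g)\geq 0.05]\geq 1/50$, so the promise forces $C_{M}$ to accept $\mathcal{F}$-inputs with probability at least $1/50$; whereas $\Pr_{\mathcal{U}}[p(f,g)>0.01]\leq 100/N=o(1)$, so the promise forces $C_{M}$ to accept $\mathcal{U}$-inputs with probability at most $o(1)$. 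But Theorem \ref{fcindep} shows that $\mathcal{F}$ is $O(k^{2}/\sqrt{N})$-almost $k$-wise independent for every $k\leq N^{1/4}$; choosing $k=M^{c}$ with any sufficiently small constant $c>0$ yields a $1/M^{\Omega(1)}$-almost $M^{\Omega(1)}$-wise independent distribution, which is exactly the hypothesis of the GLN Conjecture on inputs of length $M$.

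The GLN Conjecture therefore rules out the existence of any AC$^{0}$ circuit of size $2^{M^{o(1)}}$ that distinguishes $\mathcal{F}$ from $\mathcal{U}$ with non-negligible bias, giving the desired contradiction. The only step that requires any care is the parameter bookkeeping converting Theorem \ref{fcindep}'s statement (in terms of $N$) into the GLN hypothesis (in terms of $M=2N$), and since $M=2N$ the translation is essentially automatic; there is no genuine new obstacle beyond what is already handled in Theorem \ref{glnimp}, and the construction is simply a logarithmic-time re-packaging of that argument.
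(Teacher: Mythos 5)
Your proposal matches the paper's proof: same candidate problem (\textsc{Promise Fourier Checking} on inputs of size $M=2^{n+1}$), same \texttt{FC-ALG} upper bound, and the same lower-bound argument converting a hypothetical $\mathsf{AC}^{0}$ circuit into a distinguisher between $\mathcal{U}$ and $\mathcal{F}$ via Corollary \ref{pfgcor}, then contradicting Theorem \ref{fcindep} plus the GLN Conjecture. You spell out the $\Omega(1)$-bias bookkeeping a bit more explicitly than the paper does, but the route is identical.
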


\begin{proof}
In our promise problem $\Pi=\left(  \Pi_{\operatorname*{YES}},\Pi
_{\operatorname*{NO}}\right)  $, the inputs (of size $M=2^{n+1}$) will encode
pairs of Boolean functions $f,g:\left\{  0,1\right\}  ^{n}\rightarrow\left\{
-1,1\right\}  $, such that%
\[
p\left(  f,g\right)  :=\frac{1}{N^{3}}\left(  \sum_{x,y\in\left\{
0,1\right\}  ^{n}}f\left(  x\right)  \left(  -1\right)  ^{x\cdot y}g\left(
y\right)  \right)  ^{2}%
\]
is either at least $0.05$\ or at most $0.01$. \ The problem is to accept in
the former case and reject in the latter case.

Using the algorithm \texttt{FC-ALG}\ from Section \ref{FCALG}, it is immediate
that $\Pi\in\mathsf{BQLOGTIME}$. \ On the other hand, suppose $\Pi
\in\mathsf{AC}^{0}$. \ Then we get a nonuniform circuit family $\left\{
C_{n}\right\}  _{n}$, of depth $O\left(  1\right)  $\ and size
$\operatorname*{poly}\left(  M\right)  =2^{O\left(  n\right)  }$,\ that solves
\textsc{Fourier Checking}\ on all pairs $\left\langle f,g\right\rangle $\ such
that (i) $p\left(  f,g\right)  \leq0.01$\ or (ii) $p\left(  f,g\right)
\geq0.05$. \ By Corollary \ref{pfgcor}, the class (i) includes the
overwhelming majority of $\left\langle f,g\right\rangle $'s drawn from the
uniform distribution $\mathcal{U}$, while the class (ii) includes a constant
fraction of $\left\langle f,g\right\rangle $'s drawn from the forrelated
distribution $\mathcal{F}$. \ Therefore, we actually obtain an $\mathsf{AC}%
^{0}$ circuit family that distinguishes $\mathcal{U}$\ from $\mathcal{F}%
$\ with constant bias. \ But this contradicts Theorem \ref{fcindep} together
with the GLN Conjecture.
\end{proof}

\subsection{Low-Fat Polynomials\label{LOWFAT}}

Given that the GLN Conjecture would have such remarkable implications for
quantum complexity theory, the question arises of how we can go about proving
it. \ As we are indebted to Louay Bazzi for pointing out to us, the GLN
Conjecture\ is \textit{equivalent} to the following conjecture, about
approximating $\mathsf{AC}^{0}$\ functions by low-degree polynomials.

\begin{conjecture}
[Low-Fat Sandwich Conjecture]\label{sandwich}For every function $f:\left\{
0,1\right\}  ^{n}\rightarrow\left\{  0,1\right\}  $ computable by an
$\mathsf{AC}^{0}$\ circuit, there exist polynomials\ $p_{\ell},p_{u}%
:\mathbb{R}^{n}\rightarrow\mathbb{R}$ of degree $k=n^{o\left(  1\right)  }$
that satisfy the following three conditions.

\begin{enumerate}
\item[(i)] \textbf{Sandwiching:} $p_{\ell}\left(  x\right)  \leq f\left(
x\right)  \leq p_{u}\left(  x\right)  $ for all $x\in\left\{  0,1\right\}
^{n}$.

\item[(ii)] $L_{1}$\textbf{-Approximation:}$\ \operatorname*{E}_{x\sim
\mathcal{U}}\left[  p_{u}\left(  x\right)  -p_{\ell}\left(  x\right)  \right]
=o\left(  1\right)  $.

\item[(iii)] \textbf{Low-Fat:} $p_{\ell}\left(  x\right)  $ and $p_{u}\left(
x\right)  $\ can be written as linear combinations of terms, $p_{\ell}\left(
x\right)  =\sum_{C}\alpha_{C}C\left(  x\right)  $ and $p_{u}\left(  x\right)
=\sum_{C}\beta_{C}C\left(  x\right)  $\ respectively, such that $\sum
_{C}\left\vert \alpha_{C}\right\vert 2^{-\left\vert C\right\vert }=n^{o\left(
1\right)  }$ and $\sum_{C}\left\vert \beta_{C}\right\vert 2^{-\left\vert
C\right\vert }=n^{o\left(  1\right)  }$. \ (Here a term is a product of
literals of the form $x_{i}$\ and $1-x_{i}$.)
\end{enumerate}
\end{conjecture}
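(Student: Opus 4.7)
The plan is to construct the required low-fat sandwiching polynomials by induction on circuit depth, following the overall architecture of Braverman's proof of the original Linial--Nisan Conjecture but maintaining the extra invariant that the polynomials have bounded \emph{fat-norm} $\|p\|_{\mathrm{fat}} := \inf\{\sum_C |\alpha_C|\, 2^{-|C|} : p = \sum_C \alpha_C C\}$, where the infimum ranges over all representations of $p$ as a linear combination of terms. Since $2^{-|C|} = \Pr_{\mathcal{U}}[C=1]$, a small fat-norm is exactly what rules out the ``massive cancellations'' criticized in Section \ref{RESULTS}, and it gives $|\mathrm{E}_{\mathcal{D}}[p] - \mathrm{E}_{\mathcal{U}}[p]| \leq \varepsilon \cdot \|p\|_{\mathrm{fat}}$ on any $\varepsilon$-almost $k$-wise independent $\mathcal{D}$ with $k \geq \deg p$, which is what one ultimately wants from Conjecture \ref{sandwich}.

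First I would handle the base case of a single $\mathrm{OR}$ gate of fan-in $w$. Truncated inclusion--exclusion produces degree-$k$ sandwich polynomials, but in the monomial basis the fat-norm blows up like $\binom{w}{k}$. Instead I would build $p_\ell$ and $p_u$ as averages of randomly sampled subformulas: pick $S \subseteq [w]$ of expected size $k$, take $\mathrm{OR}_S$ together with a small error-correcting tail, and average. Each averaged polynomial becomes a convex combination of terms, so its fat-norm is $O(1)$, and a standard analysis shows it sandwiches $\mathrm{OR}$ with $L_1$-error $o(1)$ when $k = \mathrm{polylog}(n)$. For the inductive step on a depth-$d$ circuit, I would use the H{\aa}stad random-restriction template: apply a random restriction with survival probability $p \approx 1/\mathrm{polylog}(n)$, invoke the switching lemma to collapse the bottom two layers into a DNF (or CNF) of small width, replace that DNF by the base-case low-fat polynomial, and compose with the inductively supplied low-fat polynomial for the remaining $d-1$ layers. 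On disjoint variable supports the fat-norm is submultiplicative under composition, so in the idealized decoupled regime the induction closes with a fat-norm bound of $n^{o(1)}$ while keeping the $L_1$-error $o(1)$ by Braverman-style accounting.

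The main obstacle will be composition when subformulas share variables, which the switching lemma does not fully eliminate. When terms $C_1, C_2$ overlap, their product is either zero or a single longer term, but tracking signed coefficients through many such products is delicate, and the naive triangle inequality inflates the fat-norm by $2^{|C_1|+|C_2|}$, destroying the target. Overcoming this seems to require a new structural lemma showing that the sum-of-terms expansions arising in the induction have an approximately \emph{laminar} support pattern --- an $L_1$ analogue, in the term basis, of the Fourier-support decorrelation exploited by Linial--Mansour--Nisan, and strong enough that the cross-term products contribute only lower-order mass to the fat-norm. I would attempt this by combining Razborov--Smolensky modular approximators (whose integer coefficients are naturally small) with a Fourier-tail truncation on the LMN side, aiming to show that the two constructions' coefficient-blowup defects cancel, so that the explicit sandwich polynomials one writes down admit term expansions which satisfy the fat-norm bound directly rather than by a lossy change of basis.
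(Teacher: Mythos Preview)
The statement you are attempting to prove is Conjecture~\ref{sandwich}, not a theorem: the paper does not contain a proof of it, and indeed presents it as an open problem equivalent (by LP duality) to the Generalized Linial--Nisan Conjecture. So there is no ``paper's own proof'' to compare against. What the paper does contain is a discussion (Section~\ref{DISC}) of exactly the approach you outline, together with an explanation of why the author believes it does not go through without genuinely new ideas.

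Your proposal is a research plan, not a proof, and you are candid about this: the ``new structural lemma'' on approximately laminar support, the hoped-for cancellation between Razborov--Smolensky and LMN blowups, and the handling of shared variables under composition are all stated as things you \emph{would attempt}, not things you have established. These are precisely the obstacles the paper flags. In particular, the paper observes that neither LMN nor Razborov--Smolensky gives any control over coefficients in the term basis, and that Braverman's proof, which glues the two together, inherits this defect; the author explicitly doubts that a ``trivial modification of Braverman's proof'' suffices. Your base case already illustrates the difficulty: averaging random sub-ORs gives a convex combination of \emph{functions}, but each $\mathrm{OR}_S$ is not itself a single term---its term expansion has $2^{|S|}$ summands with alternating signs---so the claim that the average has fat-norm $O(1)$ is not justified by the convexity argument you give. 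Until the composition step and the base case are actually carried out with honest fat-norm bookkeeping, the proposal remains a sketch of where one might look, which is also where the paper leaves things.
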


If we take out condition (iii), then Conjecture \ref{sandwich}\ becomes
equivalent to the \textit{original} Linial-Nisan Conjecture (see Bazzi
\cite{bazzi} for a proof). \ And indeed, all progress so far on
\textquotedblleft Linial-Nisan problems\textquotedblright\ has crucially
relied on this connection with polynomials. \ Bazzi \cite{bazzi}\ and Razborov
\cite{razborov:bazzi}\ proved the depth-$2$ case of the LN Conjecture by
constructing low-degree, approximating, sandwiching polynomials for every DNF,
while Braverman \cite{braverman} proved the full LN Conjecture by constructing
such polynomials for every $\mathsf{AC}^{0}$\ circuit.\footnote{Strictly
speaking, Braverman constructed approximating polynomials with slightly
different (though still sufficient) properties.\ \ We know from Bazzi
\cite{bazzi}\ that it must be possible to get sandwiching polynomials as
well.} \ Given this history, proving Conjecture \ref{sandwich}\ would seem
like the \textquotedblleft obvious\textquotedblright\ approach to proving the
GLN Conjecture.

Below we prove one direction of the equivalence: that to prove the GLN
Conjecture, it suffices to construct low-fat sandwiching polynomials for every
$\mathsf{AC}^{0}$\ circuit. \ The other direction---that the GLN Conjecture
implies Conjecture \ref{sandwich}, and hence, there is no loss of generality
in working with polynomials instead of probability distributions---follows
from a linear programming duality calculation that we omit.

\begin{theorem}
\label{sandimpgln}The Low-Fat Sandwich Conjecture\ implies the GLN Conjecture.
\end{theorem}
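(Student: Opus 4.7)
The plan is a two-step hybrid: replace $f$ by low-fat sandwich polynomials, picking up $o(1)$ error in the $L_1$-sense on $\mathcal{U}$, and then bound the discrepancy between $\mathbb{E}_{\mathcal{D}}$ and $\mathbb{E}_{\mathcal{U}}$ of each sandwich polynomial term-by-term using the low-fat norm against the per-term error from almost $k$-wise independence. Concretely, let $\mathcal{D}$ be $\varepsilon$-almost $k$-wise independent with $\varepsilon = 1/n^{\Omega(1)}$ and $k = n^{\Omega(1)}$, and let $f : \{0,1\}^n \to \{0,1\}$ be computed by an $\mathsf{AC}^0$ circuit of size $2^{n^{o(1)}}$ and constant depth. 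Apply the Low-Fat Sandwich Conjecture to obtain polynomials $p_\ell, p_u$ of degree $k' = n^{o(1)}$ with representations $p_\ell = \sum_C \alpha_C C$ and $p_u = \sum_C \beta_C C$ satisfying conditions (i)--(iii); since $k' \leq k$, every term $C$ appearing in these sums has $|C| \leq k$.

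The central calculation is that almost $k$-wise independence controls each term individually: for any term $C$ with $|C| = s \leq k$, one has $|\mathbb{E}_{\mathcal{D}}[C] - \mathbb{E}_{\mathcal{U}}[C]| \leq \varepsilon \cdot 2^{-s}$. This holds by definition when $s = k$, and extends to $s < k$ by picking any $k-s$ variables outside the support of $C$ and writing $C = \sum_{C'} C'$, summed over the $2^{k-s}$ extensions of $C$ to $k$-terms; each such $C'$ contributes error at most $\varepsilon \cdot 2^{-k}$, and the sum is $\varepsilon \cdot 2^{-s}$ by the triangle inequality. Hence by linearity, for any polynomial $p = \sum_C \gamma_C C$ of degree at most $k$,
\[
\bigl| \mathbb{E}_{\mathcal{D}}[p] - \mathbb{E}_{\mathcal{U}}[p] \bigr| \;\leq\; \varepsilon \sum_C |\gamma_C|\, 2^{-|C|}.
\]
Applying this to $p_u$ and $p_\ell$ and invoking the low-fat bound from condition (iii),
\[
\bigl| \mathbb{E}_{\mathcal{D}}[p_u] - \mathbb{E}_{\mathcal{U}}[p_u] \bigr|,\; \bigl| \mathbb{E}_{\mathcal{D}}[p_\ell] - \mathbb{E}_{\mathcal{U}}[p_\ell] \bigr| \;\leq\; \frac{n^{o(1)}}{n^{\Omega(1)}} \;=\; o(1).
\]

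To finish, the pointwise sandwiching $p_\ell \leq f \leq p_u$ on $\{0,1\}^n$ passes through both expectations, so chaining the three bounds gives
\[
\mathbb{E}_{\mathcal{D}}[f] - \mathbb{E}_{\mathcal{U}}[f] \;\leq\; \bigl(\mathbb{E}_{\mathcal{D}}[p_u] - \mathbb{E}_{\mathcal{U}}[p_u]\bigr) + \mathbb{E}_{\mathcal{U}}\bigl[p_u - p_\ell\bigr] \;=\; o(1) + o(1),
\]
with the symmetric chain starting from $p_\ell$ bounding $\mathbb{E}_{\mathcal{U}}[f] - \mathbb{E}_{\mathcal{D}}[f]$ analogously. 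This delivers the GLN conclusion.

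The main obstacle in this direction is really minor: the low-fat condition is engineered so that the coefficient norm $\sum_C |\gamma_C| 2^{-|C|}$ is exactly the quantity paired against the per-term error $\varepsilon \cdot 2^{-|C|}$, and the two factors $n^{o(1)}$ and $n^{-\Omega(1)}$ multiply to $o(1)$ by design. The genuine difficulty---and the reason the paper is willing to merely conjecture the equivalence at the Low-Fat Sandwich level---lies entirely in the converse problem of \emph{constructing} such low-fat sandwich polynomials for every $\mathsf{AC}^0$ function, since neither the LMN nor the Razborov--Smolensky approximation technique appears to give any control over this coefficient-weighted norm.
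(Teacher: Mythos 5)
Your proof is correct and follows essentially the same route as the paper's: use the sandwiching inequalities on $\mathcal{D}$ and $\mathcal{U}$ respectively, expand the sandwich polynomials in the term basis, and pair the low-fat norm $\sum_C |\gamma_C| 2^{-|C|}$ against the per-term error $\varepsilon 2^{-|C|}$ from almost $k$-wise independence. The one place you are slightly more explicit than the paper is in noting that the $\varepsilon$-bound on $k$-terms extends to shorter terms by decomposing an $s$-term into the $2^{k-s}$ $k$-term extensions and applying the triangle inequality; the paper uses this fact implicitly.
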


\begin{proof}
Given an $\mathsf{AC}^{0}$\ function $f$, let $p_{\ell},p_{u}$\ be the low-fat
sandwiching polynomials of degree $k$ that are guaranteed by Conjecture
\ref{sandwich}. \ Also, let $\mathcal{D}$\ be an $\varepsilon$-almost $k$-wise
independent distribution over $\left\{  0,1\right\}  ^{n}$, for some
$\varepsilon=1/n^{\Omega\left(  1\right)  }$. \ Then%
\begin{align*}
\Pr_{x\sim\mathcal{D}}\left[  f\left(  x\right)  \right]  -\Pr_{x\sim
\mathcal{U}}\left[  f\left(  x\right)  \right]   &  \leq\operatorname*{E}%
_{\mathcal{D}}\left[  p_{u}\right]  -\operatorname*{E}_{\mathcal{U}}\left[
p_{\ell}\right] \\
&  =\sum_{C}\beta_{C}\operatorname*{E}_{\mathcal{D}}\left[  C\right]
-\operatorname*{E}_{\mathcal{U}}\left[  p_{\ell}\right] \\
&  \leq\sum_{C}\frac{\beta_{C}+\left\vert \beta_{C}\right\vert \varepsilon
}{2^{\left\vert C\right\vert }}-\operatorname*{E}_{\mathcal{U}}\left[
p_{\ell}\right] \\
&  =\operatorname*{E}_{\mathcal{U}}\left[  p_{u}-p_{\ell}\right]
+\varepsilon\sum_{C}\frac{\left\vert \beta_{C}\right\vert }{2^{\left\vert
C\right\vert }}\\
&  =o\left(  1\right)  +\frac{n^{o\left(  1\right)  }}{n^{\Omega\left(
1\right)  }}\\
&  =o\left(  1\right)  .
\end{align*}
Likewise,%
\begin{align*}
\Pr_{x\sim\mathcal{U}}\left[  f\left(  x\right)  \right]  -\Pr_{x\sim
\mathcal{D}}\left[  f\left(  x\right)  \right]   &  \leq\operatorname*{E}%
_{\mathcal{U}}\left[  p_{u}\right]  -\operatorname*{E}_{\mathcal{D}}\left[
p_{\ell}\right] \\
&  =\operatorname*{E}_{\mathcal{U}}\left[  p_{u}\right]  -\sum_{C}\alpha
_{C}\operatorname*{E}_{\mathcal{D}}\left[  C\right] \\
&  \leq\operatorname*{E}_{\mathcal{U}}\left[  p_{u}\right]  -\sum_{C}%
\frac{\alpha_{C}-\left\vert \alpha_{C}\right\vert \varepsilon}{2^{\left\vert
C\right\vert }}\\
&  =\operatorname*{E}_{\mathcal{U}}\left[  p_{u}-p_{\ell}\right]
+\varepsilon\sum_{C}\frac{\left\vert \alpha_{C}\right\vert }{2^{\left\vert
C\right\vert }}\\
&  =o\left(  1\right)  .
\end{align*}

\end{proof}

\section{Discussion\label{DISC}}

We now take a step back, and use our results to address some conceptual
questions about the relativized $\mathsf{BQP}$\ versus $\mathsf{PH}$ question,
the GLN Conjecture, and what makes them so difficult.

The first question is an obvious one. \ Complexity theorists have known for
decades how to prove constant-depth circuit lower bounds, and how to use those
lower bounds to give oracles $A$ relative to which (for example)
$\mathsf{PP}^{A}\not \subset \mathsf{PH}^{A}$ and $\mathsf{\oplus P}%
^{A}\not \subset \mathsf{PH}^{A}$. \ So why should it be so much harder to
give an $A$ relative to which $\mathsf{BQP}^{A}\not \subset \mathsf{PH}^{A}$?
\ What makes this $\mathsf{AC}^{0}$\ lower bound different from other
$\mathsf{AC}^{0}$\ lower bounds?

The answer seems to be that, while we have powerful techniques for proving
that a function $f$ is not in $\mathsf{AC}^{0}$, \textit{all of those
techniques, in one way or another, involve arguing that }$f$\textit{ is not
approximated by a low-degree polynomial.} \ The Razborov-Smolensky technique
\cite{razborov:ac0,smolensky}\ argues this explicitly, while even the random
restriction technique \cite{fss,yao:ph,hastad:book} argues it
\textquotedblleft implicitly,\textquotedblright\ as shown by Linial, Mansour,
and Nisan \cite{lmn}. \ And this is a problem, if $f$ is also computed by an
efficient quantum algorithm. \ For Beals et al.\ \cite{bbcmw} proved the
following in 1998:

\begin{lemma}
[\cite{bbcmw}]\label{bbcmwlem}Suppose a quantum algorithm $Q$\ makes $T$
queries to a Boolean input $X\in\left\{  0,1\right\}  ^{N}$. \ Then $Q$'s
acceptance probability is a real multilinear polynomial $p\left(  X\right)  $,
of degree at most $2T$.
\end{lemma}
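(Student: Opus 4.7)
The plan is to track how the amplitudes of the quantum state evolve as polynomials in the input bits $X_1, \ldots, X_N$ through the course of the computation. Write the $T$-query quantum algorithm in the canonical form
\[
|\psi_T(X)\rangle \;=\; U_T\, O_X\, U_{T-1}\, O_X \cdots U_1\, O_X\, U_0\,|0\rangle,
\]
where $U_0, U_1, \ldots, U_T$ are fixed unitaries independent of $X$ and $O_X$ is the standard query unitary acting as $O_X|i,b,z\rangle = |i,\,b \oplus X_i,\,z\rangle$. Expand each basis amplitude of $|\psi_t(X)\rangle$ as a function of $X$ and prove by induction on $t$ the claim: after $t$ queries, every amplitude $\alpha_s(X)$ is a polynomial in $X_1, \ldots, X_N$ of degree at most $t$.

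For the base case, $|\psi_0(X)\rangle = U_0|0\rangle$ has amplitudes that are constants (degree $0$). For the inductive step, applying a fixed unitary $U_t$ takes linear combinations of amplitudes with $X$-independent coefficients and hence does not raise the degree. Applying $O_X$ to a component with amplitude $\alpha(X)$ on basis state $|i,b,z\rangle$ sends it to a component on $|i, b\oplus X_i, z\rangle$; equivalently, the new amplitude on $|i,b,z\rangle$ is $(1 - X_i)\alpha_{s}(X) + X_i \,\alpha_{s'}(X)$ for the relevant pair of sibling basis states $s, s'$. Since $(1-X_i)$ and $X_i$ are degree-$1$ in $X$, the degree increases by at most $1$ with each query, yielding the inductive claim.

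Thus after all $T$ queries, every amplitude $\alpha_s(X)$ of $|\psi_T(X)\rangle$ is a polynomial in $X$ of degree at most $T$. The acceptance probability is
\[
p(X) \;=\; \sum_{s \in \text{Acc}} |\alpha_s(X)|^2,
\]
a sum of squared magnitudes of degree-$T$ polynomials (with possibly complex coefficients, whose real and imaginary parts are each degree-$T$ real polynomials). Each summand is therefore a real polynomial of degree at most $2T$, and so is $p(X)$. The final step is to invoke the identity $X_i^2 = X_i$, valid because $X_i \in \{0,1\}$: this lets us replace every occurrence of $X_i^k$ (for $k \geq 1$) with $X_i$ in $p(X)$, producing a multilinear polynomial of degree still at most $2T$ that agrees with $p(X)$ on every Boolean input.

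There is no real obstacle here beyond careful bookkeeping; the inductive step is the heart of the argument, and the only subtlety is ensuring that taking $|\cdot|^2$ of an amplitude yields a \emph{real} polynomial of exactly twice the degree (which it does, since if $\alpha_s(X) = a_s(X) + i\,b_s(X)$ with $a_s, b_s$ real polynomials of degree $\leq T$, then $|\alpha_s(X)|^2 = a_s(X)^2 + b_s(X)^2$ is a real polynomial of degree $\leq 2T$).
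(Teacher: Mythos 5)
Your proof is correct and is precisely the standard argument of Beals, Buhrman, Cleve, Mosca, and de Wolf; the paper cites Lemma~\ref{bbcmwlem} from \cite{bbcmw} without reproducing the proof, so there is no paper-internal proof to compare against. Your inductive tracking of amplitude degrees through alternating fixed unitaries and query gates, the observation that the bit-flip oracle raises degree by at most one per query via $(1-X_i)\alpha + X_i\alpha'$, the decomposition of $|\alpha_s|^2$ into squares of real and imaginary parts to get degree $\leq 2T$, and the final multilinearization via $X_i^2=X_i$ are all exactly as in the original reference.
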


In other words, if a function $f$ is in $\mathsf{BQP}$, then for that very
reason, $f$ \textit{has} a low-degree approximating polynomial! \ As an
example, we already saw that the following polynomial $p$, of degree $4$,
successfully distinguishes the forrelated distribution $\mathcal{F}$\ from the
uniform distribution $\mathcal{U}$:
\begin{equation}
p\left(  f,g\right)  :=\frac{1}{N^{3}}\left(  \sum_{x,y\in\left\{
0,1\right\}  ^{n}}f\left(  x\right)  \left(  -1\right)  ^{x\cdot y}g\left(
y\right)  \right)  ^{2}. \label{pfg}%
\end{equation}
Therefore, we cannot hope to prove a lower bound for \textsc{Fourier
Checking}, by any argument that would also imply that such a $p$ cannot exist.

This brings us to a second question. \ If

\begin{enumerate}
\item[(i)] every known technique for proving $f\notin\mathsf{AC}^{0}$ involves
showing that $f$ is not approximated by a low-degree polynomial, but

\item[(ii)] every function $f$ with low quantum query complexity\ \textit{is}
approximated by a low-degree polynomial,
\end{enumerate}

\noindent does that mean there is no hope of solving the relativized
$\mathsf{BQP}$ versus $\mathsf{PH}$\ problem using polynomial-based techniques?

We believe the answer is no. \ The essential point here is that an
$\mathsf{AC}^{0}$ function can be approximated by different \textit{kinds} of
low-degree polynomials. \ For example, Linial, Mansour, and Nisan \cite{lmn}
showed that, if $f:\left\{  0,1\right\}  ^{n}\rightarrow\left\{  0,1\right\}
$\ is in $\mathsf{AC}^{0}$, then there exists a real polynomial $p:\mathbb{R}%
^{n}\rightarrow\mathbb{R}$, of degree $\operatorname*{polylog}n$, such that%
\[
\operatorname*{E}_{x\in\left\{  0,1\right\}  ^{n}}\left[  \left(  p\left(
x\right)  -f\left(  x\right)  \right)  ^{2}\right]  =o\left(  1\right)  .
\]
By comparison, Razborov \cite{razborov:ac0}\ and Smolensky \cite{smolensky}
showed that if $f\in\mathsf{AC}^{0}$, then there exists a polynomial
$p:\mathbb{F}^{n}\rightarrow\mathbb{F}$\ over \textit{any} field $\mathbb{F}$
(finite or infinite), of degree $\operatorname*{polylog}N$, such that%
\[
\Pr_{x\in\left\{  0,1\right\}  ^{n}}\left[  p\left(  x\right)  \neq f\left(
x\right)  \right]  =o\left(  1\right)  .
\]
Furthermore, to show that $f\notin\mathsf{AC}^{0}$, it suffices to show that
$f$ is not approximated by a low-degree polynomial in \textit{any one} of
these senses. \ For example, even though the \textsc{Parity} function has
degree $1$ over the finite field $\mathbb{F}_{2}$, Razborov and Smolensky
showed that over other fields (such as $\mathbb{F}_{3}$), any degree-$o\left(
\sqrt{n}\right)  $ polynomial disagrees with \textsc{Parity}\ on a large
fraction of inputs---and that is enough to imply that \textsc{Parity}%
$\notin\mathsf{AC}^{0}$. \ In other words, we simply need to find a
\textit{type} of polynomial approximation that works for $\mathsf{AC}^{0}%
$\ circuits, but does not work for the \textsc{Fourier Checking} problem. \ If
true, Conjecture \ref{sandwich} (the Low-Fat Sandwich Conjecture) provides
exactly such a type of approximation.

But this raises another question: what is the significance of the
\textquotedblleft low-fat\textquotedblright\ requirement in Conjecture
\ref{sandwich}? \ Why, of all things, do we want our approximating polynomial
$p$ to be expressible as a linear combination of terms, $p\left(  x\right)
=\sum_{C}\alpha_{C}C\left(  x\right)  $, such that\ $\sum_{C}\left\vert
\alpha_{C}\right\vert 2^{-\left\vert C\right\vert }=n^{o\left(  1\right)  }$?

The answer takes us to the heart of what an oracle separation between
$\mathsf{BQP}$ and $\mathsf{PH}$\ would have to accomplish. \ Notice that,
although the polynomial $p$\ from equation (\ref{pfg}) solved the
\textsc{Fourier Checking} problem, it did so only by \textit{cancelling
massive numbers of positive and negative terms,} then representing the answer
by the tiny residue left over. \ Not coincidentally, this sort of cancellation
is a central feature of quantum algorithms. \ By contrast, Theorem
\ref{sandimpgln}\ essentially says that,\ if a polynomial $p$ does
\textit{not} involve such massive cancellations, but is instead more
\textquotedblleft conservative\textquotedblright\ and \textquotedblleft
reasonable\textquotedblright\ (like the polynomials that arise from classical
decision trees), then $p$ cannot distinguish almost $k$-wise independent
distributions from the uniform distribution, and therefore cannot solve
\textsc{Fourier Checking}. \ If Conjecture \ref{sandwich}\ holds, then every
small-depth\ circuit can be approximated, not just by any low-degree
polynomial, but by a \textquotedblleft conservative,\textquotedblright%
\ \textquotedblleft reasonable\textquotedblright\ low-degree polynomial---one
with a bound on the coefficients that prevents massive cancellations.\ \ This
would prove that \textsc{Fourier Checking}\ has no small constant-depth
circuits, and hence that\ there exists an oracle separating $\mathsf{BQP}$
from $\mathsf{PH}$.

This brings us to the fourth and final question:\ how might one prove
Conjecture \ref{sandwich}? \ In particular, is it possible that some trivial
modification of Braverman's proof \cite{braverman}\ would give low-fat
sandwiching polynomials, thereby establishing the GLN Conjecture?

While we cannot rule this out, we believe that the answer is no. \ For
examining Braverman's proof, we find that it combines two kinds of polynomial
approximations of $\mathsf{AC}^{0}$\ circuits: that of Linial-Mansour-Nisan
\cite{lmn}, and that of Razborov \cite{razborov:ac0}\ and Smolensky
\cite{smolensky}. \ Unfortunately, \textit{neither LMN nor Razborov-Smolensky
gives anything like the control over the approximating polynomial's
coefficients that Conjecture \ref{sandwich} demands.} \ LMN simply takes the
Fourier transform of an $\mathsf{AC}^{0}$ function and deletes the high-order
coefficients; while\ Razborov-Smolensky approximates each OR gate by a product
of randomly-chosen linear functions.\ \ Both techniques produce approximating
polynomials with a huge number of monomials, and no reasonable bound on their
coefficients. \ While it is conceivable that those polynomials satisfy the
low-fat condition anyway---because of some non-obvious representation as a
linear combination of terms---certainly neither LMN nor Razborov-Smolensky
gives any idea what that representation would look like. \ Thus, we suspect
that, to get the desired control over the coefficients, one will need more
\textquotedblleft constructive\textquotedblright\ proofs of both the LMN and
Razborov-Smolensky theorems. \ Such proofs would likely be of great interest
to circuit complexity and computational learning theory for independent reasons.

\section{Open Problems\label{OPEN}}

First, of course, prove the GLN Conjecture, or prove the existence of an
oracle $A$ relative to which $\mathsf{BQP}^{A}\not \subset \mathsf{PH}^{A}$ by
some other means. \ A natural first step would be to prove the GLN Conjecture
for the special case of DNFs: as shown in Theorem \ref{glnam}, this would
imply an oracle $A$ relative to which $\mathsf{BQP}^{A}\not \subset
\mathsf{AM}^{A}$. \ We have offered a \$200 prize for the $\mathsf{PH}$\ case
and a \$100 prize for the $\mathsf{AM}$ case.\footnote{See
http://scottaaronson.com/blog/?p=381}

Second, it would be of interest to prove the GLN Conjecture for classes of
functions weaker than (or incomparable with) DNFs: for example, monotone DNFs,
read-once formulas, and read-$k$-times formulas.

Third, can we give an example of a Boolean function $f:\left\{  0,1\right\}
^{n}\rightarrow\left\{  -1,1\right\}  $\ that is well-approximated by a
low-degree polynomial, but \textit{not} by a low-degree low-fat polynomial?
\ Here is a more concrete version of the challenge: let%
\[
\left\Vert f-p\right\Vert :=\operatorname*{E}_{x\in\left\{  0,1\right\}  ^{n}%
}\left[  \left(  f\left(  x\right)  -p\left(  x\right)  \right)  ^{2}\right]
.
\]
Then find a Boolean function $f$ for which

\begin{enumerate}
\item[(i)] there exists a degree-$n^{o\left(  1\right)  }$ polynomial
$p:\mathbb{R}^{n}\rightarrow\mathbb{R}$ such that $\left\Vert f-p\right\Vert
=o\left(  1\right)  $, but

\item[(ii)] there does \textit{not} exist a degree-$n^{o\left(  1\right)  }$
polynomial $q:\mathbb{R}^{n}\rightarrow\mathbb{R}$ such that $\left\Vert
f-q\right\Vert =o\left(  1\right)  $\ and $q$\ can be written as a linear
combination of terms, $q\left(  x\right)  =\sum_{C}\alpha_{C}C\left(
x\right)  $, with $\sum_{C}\left\vert \alpha_{C}\right\vert 2^{-\left\vert
C\right\vert }=n^{o\left(  1\right)  }$.
\end{enumerate}

Fourth, can we give an oracle relative to which $\mathsf{BQP}\not \subset
\mathsf{IP}$? \ What about an oracle relative to which $\mathsf{BQP}%
\neq\mathsf{IP}_{\mathsf{BQP}}$, where $\mathsf{IP}_{\mathsf{BQP}}$\ is the
class of problems that admit an interactive protocol with a $\mathsf{BPP}%
$\ verifier and a $\mathsf{BQP}$\ prover?\footnote{If we let the verifier
transmit unentangled qubits to the prover, then the resulting class
$\mathsf{IP}_{\mathsf{BQP}}^{\left\vert \theta\right\rangle }$\ actually
equals $\mathsf{BQP}$, as recently shown by Broadbent, Fitzsimons, and Kashefi
\cite{bfk}\ (see also Aharonov, Ben-Or, and Eban \cite{abe}). \ It is not
known whether this $\mathsf{IP}_{\mathsf{BQP}}^{\left\vert \theta\right\rangle
}=\mathsf{BQP}$\ result relativizes; we conjecture that it does not.}

Fifth, what other implications does the GLN Conjecture have? \ If we assume
it, can we address other longstanding open questions in quantum complexity
theory, such as those discussed in Section \ref{MOTIVATION}? \ For example,
can we give an oracle relative to which $\mathsf{NP}\subseteq\mathsf{BQP}%
$\ but $\mathsf{PH}\not \subset \mathsf{BQP}$, or an oracle relative to which
$\mathsf{NP}\subseteq\mathsf{BQP}$\ and $\mathsf{PH}$ is infinite?

Sixth, how much can we say about the $\mathsf{BQP}$\ versus $\mathsf{PH}$
question in the unrelativized world? \ As one concrete challenge, can we find
a nontrivial way to \textquotedblleft realize\textquotedblright\ the
\textsc{Fourier Checking} oracle (in other words, an explicit computational
problem that is solvable using \textsc{Fourier Checking})?

Seventh, how far can the gap between the success probabilities of
$\mathsf{FBQP}$\ and $\mathsf{FBPP}^{\mathsf{PH}}$\ algorithms be improved?
\ Theorem \ref{fflb}\ gave a relation for which a quantum algorithm succeeds
with probability $1-c^{-n}$, whereas any $\mathsf{FBPP}^{\mathsf{PH}}%
$\ algorithm succeeds with probability at most $0.99$. \ By changing the
success criterion for \textsc{Fourier Fishing}---basically, by requiring the
classical algorithm\ to output $z_{1},\ldots,z_{n}$ such that $\widehat{f}%
_{1}\left(  z_{1}\right)  ^{2},\ldots,\widehat{f}_{n}\left(  z_{n}\right)
^{2}$\ are distributed \textquotedblleft almost exactly as they would be in
the quantum algorithm\textquotedblright---one can improve the\ $0.99$\ to
$1/2+\varepsilon$\ for any $\varepsilon>0$. \ However, improving the constant
further might require a direct product theorem for $\mathsf{AC}^{0}$\ circuits
solving \textsc{Fourier Fishing}.

\section{Acknowledgments}

I thank Louay Bazzi for reformulating the GLN Conjecture as the Low-Fat
Sandwich Conjecture; and Andy Drucker, Lance Fortnow, and Sasha Razborov for
helpful discussions.

\bibliographystyle{plain}
\bibliography{thesis}

\end{document}